  \def\cdot{\textperiodcentered}%
\newcommand{\R}{\mathbb{R}}
\newcommand{\N}{\mathbb{N}}
\DeclareMathOperator*{\argmax}{arg max}
\newcommand{\face}[1]{\left\{#1\right\}}
\newcommand{\card}[1]{\left|#1\right|}
\newcommand{\tnorm}[1]{\left\|#1\right\|_2}
\newcommand{\binSet}{\{0,1\}}
\newcommand{\yao}[1]{{\color{black}{#1}}}
\DeclarePairedDelimiter{\abs}{|}{|}
\DeclarePairedDelimiter{\norm}{\|}{\|}
\DeclarePairedDelimiter{\ceil}{\lceil}{\rceil}
\newcommand*{\transpose}[1]{#1\sp{\intercal}}
\newcommand*{\Dnormal}[2]{\mathcal{N}\left(#1,#2\right)}
\NewDocumentCommand{\probability}{d()om}{%
  \operatorname{\mathbb{P}}%
  \IfValueT{#1}{\sb{#1}}%
  \left[#3%
    \IfValueT{#2}{\,\middle|\,#2}\right]}
\NewDocumentCommand{\expectation}{d()om}%
  {\operatorname{\mathbb{E}}%
      \IfValueT{#1}{\sb{#1}}%
      \left[#3%
    \IfValueT{#2}{\,\middle|\,#2}\right]}
\NewDocumentCommand{\entropy}{om}{\mathbb{H}\left[#2
    \IfValueT{#1}{\,\middle|\,#1}\right]}
\NewDocumentCommand{\bentropy}{lm}
  {\widetilde{\mathbb{H}}#1\left[#2\right]}
\NewDocumentCommand{\mutualInfo}{omm}{\mathbb{I}\left[#2;#3
    \IfValueT{#1}{\,\middle|\,#1}\right]}
\theoremstyle{plain}
\newtheorem{theorem}{Theorem}[section]
\newtheorem{thm}[theorem]{Theorem}
\newtheorem{lem}[theorem]{Lemma}
\newtheorem{cor}[theorem]{Corollary}
\theoremstyle{definition}
\newtheorem{defn}[theorem]{Definition}
\theoremstyle{remark}
\newtheorem{rem}[theorem]{Remark}
\newtheoremstyle{claim}{}{}{}{}{\itshape}{.}{.5em}{}
\theoremstyle{claim}
\title{Info-Greedy Sequential Adaptive \\Compressed Sensing}
\author{Gábor Braun,\thanks{Gábor Braun
    (Email: gabor.braun@isye.gatech.edu),
    Sebastian Pokutta
    (Email: sebastian.pokutta@isye.gatech.edu)
    and
    Yao Xie (Email: yao.xie@isye.gatech.edu)
    are with the H. Milton Stewart School of
    Industrial and Systems Engineering, Georgia Institute of
    Technology, Atlanta, GA.}\quad
  \and Sebastian Pokutta, \and \quad Yao Xie
  \thanks{This work is partially supported by NSF grant CMMI-1300144
and CCF-1442635. Authors contributed equally to the paper.}
} \date{\today}
\newcommand{\yx}[1]{{\color{black}{#1}}}
\begin{document}
\maketitle

\begin{abstract}
  We present an information-theoretic framework for sequential
  adaptive compressed sensing, \emph{Info-Greedy Sensing}, where
  measurements are chosen to maximize the extracted information
  conditioned on the previous measurements. We show that the widely used bisection approach is Info-Greedy for a family of $k$-sparse signals by connecting compressed sensing and blackbox complexity of sequential query algorithms, and present Info-Greedy algorithms for Gaussian and Gaussian Mixture Model (GMM) signals, as well as ways to design sparse Info-Greedy measurements. Numerical
  examples demonstrate the good performance of the proposed algorithms
  using simulated and real data: 
  \yx{Info-Greedy Sensing shows significant improvement over random
    projection for signals with sparse and low-rank covariance matrices, and adaptivity brings robustness when there is a mismatch between the assumed and the true distributions.}
\end{abstract}

\section{Introduction}
\label{sec:introduction}

\yx{ Nowadays ubiquitous big data applications (image processing
  \cite{Brady2009}, power network monitoring
  \cite{WirelessHouseElectricity2014}, and large scale sensor networks
  \cite{sparseSensorLocal2011}) call for more efficient information
  \emph{sensing} techniques. Often these techniques are
  \emph{sequential} in that the measurements are taken one after
  another.  Hence information gained in the past can be used to guide
  an \emph{adaptive} design of subsequent measurements, which
  naturally leads to the notion of sequential adaptive sensing.  At
  the same time, a path to efficient sensing of big data is
  \emph{compressive sensing}
  \cite{CandesTao2006,DonohoCS2006,EldarKutinoik2012}, which exploits
  low-dimensional structures to recover signals from a number of
  measurements much smaller than the ambient dimension of the
  signals.}

Early compressed sensing works mainly focus on non-adaptive and
one-shot measurement schemes.
\yx{Recently there has also been much interest in
  \emph{sequential adaptive compressed sensing}, which measures noisy
  linear combinations of the entries (this is different from the
  direct adaptive sensing, which measures signal entries directly
\cite{HauptDistill2011,WeiHero2013JSec,WeiHero2013,MalloyNowakSeqTest2012}).}
Although in the seminal work of \cite{CastroCandesDavenport2012}, it
was shown under fairly general assumptions that ``adaptivity does not
help much'', i.e., sequential adaptive compressed sensing does not
improve the \emph{order} of the min-max bounds obtained by algorithms,
these limitations are restricted to certain performance metrics. It
has also been recognized (see, e.g.,
\cite{IndykPrice2011, MalloyNowak2013, Info_theo_CS_bound2014}) that
adaptive compressed sensing offers several benefits with respect to
other performance metrics, such as the reduction in the
signal-to-noise ratio (SNR) to recover the signal. Moreover, larger
performance gain can be achieved by adaptive compressed sensing if we
aim at recovering a ``family'' of signals with known statistical
\emph{prior information} (incorporating statistical priors in
compressed sensing has been considered in \cite{CS_GMM_2011} for the
non-sequential
\yx{setting and in \cite{BayesianCS2008} for the sequential setting
  using Bayesian methods}).

\yx{To harvest the benefits of adaptive compressed sensing, } various
algorithms have been developed: compressive binary search
\cite{HauptAdaptiveCS2009, DavenportArias-Castro2012}, which considers
a problem of determining the location of a single non-zero entry; a
variant of the iterative bisection algorithm \cite{TajerPoor2012} to
adaptively identify the partial support of the signal;
random choice of compressed
sensing vectors \cite{MalioutovSanghaviWillsky2010}, and
a collection of independent structured random sensing matrices in each
measurement step  \cite{HauptSeqCS2012}
with some columns
``masked'' to zero;
\yx{an experimental design approach \cite{JainSoniHaupt2013} that designs measurements}
adaptive to the mean square error of the estimated signal;
 exploiting additional graphical
structure of the signal
\cite{KrishnamurthySingh13,ErvinCastro2013}; the CASS algorithm \cite{MalloyNowak2013},
which is based on bisection search to locate multiple non-zero
entries, and is
claimed to be near-optimal in the number of measurements needed
sequentially to achieve small recovery errors;
\yx{an
  adaptive sensing strategy specifically tailored to tree-sparse
  signals \cite{AkshayHaupt2014}
that significantly outperforms non-adaptive sensing
  strategies.
}
In optics literature, compressive imaging systems with  sequential measurement architectures have been developed \cite{NeifeldTaskSpecific2008, KeAshok2010, NeifeldCSImaging2011}, which may modify the measurement basis based on specific object information derived from the previous measurements and achieve better performance. \yao{In medical imaging literature, \cite{Opt_k_sparse} uses Bayesian experimental design to optimize $k$-space sampling for nonlinear sparse MRI reconstruction. }

The idea of using an information measure for sequential compressed
  sensing has been spelled out in various places for specific settings
  or signal models, for example, the seminal Bayesian compressive
  sensing work \cite{BayesianCS2008}, which designs a new projection
  that minimizes the differential entropy of the posterior estimate on
  a Gaussian signal; \cite[Chapter 6.2]{EldarKutinoik2012} and
  \cite{WaeberFrazier2013}, which introduces the so-called ``expected
  information'' and outlines a general strategy for sequential
  adaptive sensing; \cite{TaskDrivenDuarte2013}, which develops a
  two-step adaptive statistical compressed sensing scheme for Gaussian
  mixture model (GMM) signals based on maximizing an
  information-theoretic objective function;
  \cite{CarsonChenRodrigues2012}, which sequentially senses low-rank
  GMM signals based on a posterior distribution and provides an
  empirical performance analysis; \yao{\cite{ADesignPoisson2013} studies the design of linear projection measurements for a vector Poisson signal model; \cite{ICMLNonlinear2014} designs general nonlinear functions for mapping high-dimensional data into lower-dimensional space using mutual information as a metric.} A general belief, though, is that
  it is difficult to devise quantitative error bounds for such
  sequential information maximizing algorithms (see, e.g.,
  \cite[Section 6.2.3]{EldarKutinoik2012}).

\yx{In this work, we present a unified information theoretical
  framework for sequential adaptive compressive sensing, called
  \emph{Info-Greedy Sensing}, which greedily picks the
  measurement with the largest amount of information gain
  based on the previous measurements.
  More
  precisely, we design the next measurement to maximize the
  conditional mutual information between the measurement and the
  signal with respect to the previous
  measurements.
This framework enables us to better understand existing algorithms,
establish theoretical performance guarantees, as well as develop new
algorithms. The optimization problem associated with Info-Greedy
Sensing is often non-convex. In some cases the solutions can be found
analytically, and in others we resort to iterative heuristics. In
particular, (1) we show that the widely used bisection approach is Info-Greedy
  for a family of $k$-sparse signals by connecting compressed sensing
  and blackbox complexity of sequential query algorithms
  \cite{bgp2013}; (2) we present Info-Greedy algorithms for Gaussian and Gaussian
  Mixture Model (GMM) signals under more general noise models
  (e.g. ``noise-folding'' \cite{CastroEldar2011}) than those
  considered in \cite{CarsonChenRodrigues2012}, and analyze their
  performance in terms of the number of measurements needed; (3) we also develop new sensing algorithms, e.g., for sparse sensing
  vectors. 
Numerical examples are provided to demonstrate the accuracy of
theoretical bounds and good performance of Info-Greedy Sensing
algorithms using simulated and real data.  
%Moreover, our results show that for Gaussian signals adaptivity brings no advantage when the covariance matrix is known exactly.  However, for other types of signals such as GMM, the sequential adaptive compressive sensing approach can significantly outperform the batch method \cite{CarsonChenRodrigues2012}.
}

The rest of the paper is organized as follows. Section
\ref{sec:compressed-sensing} sets up the formalism for Info-Greedy
Sensing. Section \ref{sec:k_sparse} and Section \ref{sec:Gaussian}
present the Info-Greedy Sensing algorithms for $k$-sparse signals and
Gaussian signals (low-rank single Gaussian and GMM), respectively.
Section \ref{sec:sparse} discusses the Info-Greedy Sensing with sparse
measurement vectors. Section \ref{sec:egs} contains numerical examples
using simulated and real data. Finally, Section \ref{sec:con}
concludes the paper.  All proofs are delegated to the Appendix.

The notation in this paper is standard. In particular,
$\mathcal{N}(\mu, \Sigma)$ denotes the Gaussian distribution with mean
$\mu$ and covariance matrix $\Sigma$; $[x]_i$ denotes the
$i$th coordinate of the vector $x$; we use the shorthand
\([n] = \{1, \ldots, n\}\);
let \(\card{S}\)
denote the cardinality of a set $S$; $\|x\|_0$ is the number of
non-zeros in vector $x$; let $\|\Sigma\|$ be the spectral norm (largest eigenvalue) of a positive definite matrix $\Sigma$; let \(\det(X)\)
be the determinant of a matrix $X$;
% let \(\lambda_{\max}(X)\) denote the largest eigenvalue of a positive semi-definite matrix $X$;
let $\entropy{x}$ denote the entropy of a random variable $x$; let
$\mutualInfo{x}{y}$ denote the mutual information between two random
variables $x$ and $y$.  Let the column vector $e_i$ has \(1\)
on the $i$th entry and zero elsewhere, and let \(\chi_{n}^{2}\)
be the quantile function of the chi-squared distribution with \(n\)
degrees of freedom.

\section{Formulation}
\label{sec:compressed-sensing}

A typical compressed sensing setup is as follows. Let $x \in \mathbb{R}^n$ be the unknown $n$-dimensional signal. There are $m$ measurements, and $y \in \mathbb{R}^m$ is the measurement vector depending linearly on the signal \(x\) and subject to an additive noise: 
\begin{equation}
y = A x + w, \label{one-shot-model}
  \quad
  A \triangleq
  \begin{bmatrix}
    \transpose{a_{1}} \\
    \vdots
    \\
    \transpose{a_{m}}
  \end{bmatrix}
  \in \mathbb{R}^{m\times n},
  \quad 
  w \triangleq 
  \begin{bmatrix}
  w_1 \\ \vdots \\ w_m
  \end{bmatrix} \in \mathbb{R}^{m\times 1},
\end{equation}
where $A \in \mathbb{R}^{m\times n}$ is the {sensing matrix},
and $w \in \mathbb{R}^m$ is the {noise vector}.
Here, each coordinate \(y_{i}\) of $y$ is a result of
measuring $\transpose{a_i} x$ with an additive
noise \(w_{i}\). In the setting of sequential compressed sensing, the unknown signal $x$ is measured
\emph{sequentially}
\[ y_i = \transpose{a_i} x + w_i, \quad i = 1, \ldots, m.\]

In high-dimensional problems, various low-dimensional signal models
for $x$ are in common use: (1) sparse signal models, the canonical one being
  \(x\) having $k \ll n$ non-zero entries\footnote{In a
    related model the signal $x$ come from a dictionary
    with few nonzero coefficients, whose support
    is unknown. We will not further consider this model here.}; (2) low-rank Gaussian model (signal in a
  subspace plus Gaussian noise); and (3) Gaussian mixture model (GMM) (a model for signal lying in a
  union of multiple subspaces plus Gaussian noise), which has been
  widely used in image and video analysis among
  others\footnote{A mixture of GMM models has also been used to study
    sparse signals \cite{MChenThesis}. There are also other
    low-dimensional signal models including the general manifold
    models which will not be considered here.}.

Compressed sensing exploits the low dimensional structure of the
signal to recover the signal with high accuracy using much fewer
measurements than the dimension of the signal, i.e., $m \ll n$.  Two
central and interrelated problems in compressed sensing include signal
recovery and designing the sensing matrix $A$.  Early compressed
sensing works usually assume $A$ to be random, which does have
benefits for universality regardless of the signal
distribution. However, when there is prior knowledge about the signal
distribution, one can optimize \(A\)
to minimize the number \(m\)
of measurements subject to a total sensing power constraint
\begin{equation}
\sum_{i=1}^m \tnorm{a_i}^2 \leq P
\end{equation}
for some constant $P > 0$. In the following, we either vary power for each measurement $\|a_i\|_2^2 = \beta_i$, or fix them to be unit power $\|a_i\|_2 = 1$ (for example, due to physical constraint) and use repeated measurements $\beta_i$ times in the direction of $a_i$, which is equivalent to measuring using an integer valued power. Here $\beta_i$ can be viewed as the amount of resource we allocated for that measurement (or direction).

We will consider a methodology where $A$ is chosen to extract the most
information about the signal, i.e., to maximize
\emph{mutual information}. In the non-sequential setting this means
that $A$ maximizes the mutual information between the signal $x$ and
the measurement outcome, i.e.,
$ A^* = \argmax_A \mutualInfo{x}{Ax + \yx{w}} $.
In sequential compressed sensing, the subsequent measurement vectors
can be designed using the already acquired measurements, and hence the
sensing matrix $A$ can be designed row by row. Optimal sequential
design of \(A\)
can be defined recursively and viewed as dynamic programming
\cite{EldarKutyniok2012}.  However, this formulation is usually
intractable in all but the most simple situations (one such example is
the sequential probabilistic bisection algorithm in
\cite{WaeberFrazier2013}, which locates a single non-zero
entry). Instead, the usual approach operates in a greedy fashion.  The
core idea is that based on the information that the previous
measurements have extracted, the new measurement should probe in the
direction that maximizes the conditional information as much as
possible.
We formalize this idea as \emph{Info-Greedy Sensing}, which is
described in Algorithm \ref{alg:max_mutual_info_1}.
The algorithm is initialized with a prior distribution of signal $x$, and returns the Bayesian posterior mean as an estimator for signal \(x\).  Conditional mutual information is a natural metric, as it counts only useful new information between the signal
and the potential result of the measurement disregarding noise and what has already been learned from previous measurements. 
\begin{algorithm}
\caption{Info-Greedy Sensing}
\begin{algorithmic}[1]
\REQUIRE distributions of signal $x$ and noise \(w\),
  error tolerance $\varepsilon$
  or maximum number of iterations $M$
\STATE \(i \leftarrow 1\)
\REPEAT
\STATE $a_{i} \leftarrow \argmax_{a_i}
  \mutualInfo[y_{j}, a_{j}, j < i]{x}
  {\transpose{a_{i}} x + w_i}/{\beta_i}$
\STATE  \(y_{i} = \transpose{a_{i}} x + w_i\) \COMMENT{measurement}
\STATE $i \leftarrow i + 1$
\UNTIL{$\mutualInfo[y_{j}, a_{j}, j\leq i]{x}
  { \transpose{a_{i}} x + w_i} \leq \delta(\varepsilon)$ or $i > M$.}
\end{algorithmic}
\label{alg:max_mutual_info_1}
\end{algorithm}

\yx{Algorithm \ref{alg:max_mutual_info_1} stops either when the
  conditional mutual information is smaller than a threshold
  $\delta(\varepsilon)$, or we have reached the maximum number of
  iterations $M$. How $\delta(\varepsilon)$ relates to the precision
  $\varepsilon$ depends on the specific signal model employed.  For
  example, for Gaussian signal, the conditional mutual information is
  the log determinant of the conditional covariance matrix, and hence
  the signal is constrained to be in a small ellipsoid with high
  probability.  Also note that in this algorithm, the recovered signal
  may not reach accuracy $\varepsilon$ if it exhausts the number of
  iterations $M$.  In theoretical analysis we assume $M$ is
  sufficiently large to avoid it.}

Note that the optimization problem in
Info-Greedy Sensing
$\argmax_{a_i} \mutualInfo[y_{j}, a_{j}, j < i]{x}
{\transpose{a_{i}} x + w_i}$ 
is non-convex in general
\cite{PayaroPalomar2009}.
\yx{Hence, we will discuss various heuristics and establish their
  theoretical performance in terms of the following metric:}
\begin{defn}[Info-Greedy] We call an algorithm \emph{Info-Greedy} if the
  measurement maximizes
  \(\mutualInfo[y_{j} : j < i]{x}{y_{i}} / \beta_{i}\) \yao{for each $i$},
  where \(x\) is the unknown signal,
  \(y_{j}\) is the measurement outcome,
  and
  \(\beta_{i}\) is the amount of resource for measurement
  \(i\).
\label{def1}
\end{defn}

\section{$k$-sparse signal} \label{sec:k_sparse}

\yx{
In this section, we consider the Info-Greedy Sensing for $k$-sparse
signal with arbitrary nonnegative amplitudes in the noiseless case as
well as under Gaussian measurement noise. We
show that a natural modification of the bisection algorithm corresponds
to Info-Greedy Sensing under a certain probabilistic model. We also show that Algorithm~\ref{alg:bisection} is
optimal in terms of the number of measurements for \(1\)-sparse signals as well as optimal up to a \(\log k\)
factor for \(k\)-sparse signals in the noiseless case. In the presence of Gaussian measurement noise, it is optimal up to at most another \(\log n\) factor. Finally, we show Algorithm~\ref{alg:bisection} is Info-Greedy when \(k=1\), and when \(k > 1\) 
it is \yx{Info-Greedy} up to a \(\log k\) factor.

To simplify the problem, we assume
the sensing matrix $A$ consists of binary entries: \(a_{ij}
\in \binSet\). Consider  a signal with each element $x_i \in \R_+$ with up to
\(k\) non-zero entries which are distributed uniformly at random.
The following lemma gives an upper bound on the number of measurements
$m$ for our modified bisection algorithm (see Algorithm \ref{alg:bisection}) to
recover such \(x\). In the description of Algorithm \ref{alg:bisection}, let
 \[[a_{S}]_{i} \coloneqq
    \begin{cases}
      1, & i \in S \\
      0, & i \notin S
    \end{cases}\] denote the characteristic vector of a set \(S\). 
The basic idea is to recursively estimate a tuple $(S, \ell)$ that
consists of a set $S$ which contains possible locations of the
non-zero elements, and the total signal amplitude in that
set. 
We say that a signal \(x\) has minimum amplitude \(\tau\), if \(x_{i} > 0\)
  implies \(x_{i} \geq \tau\) for all \(i \in [n]\).

\begin{thm}[Upper bound for $k$-sparse signal
$x$]\label{lemma1}
Let \(x \in \mathbb{R}_{+}^n\) be a \(k\)-sparse
signal. 
\begin{enumerate}
\item In the noiseless case, Algorithm \ref{alg:bisection}
recovers the signal \(x\) exactly with
at most \(2 k \lceil \log n \rceil\) measurements
(using \(r = 1\) in Line~\ref{line:r}).
\item In the noisy case with \(w_i\sim \mathcal{N}(0, \sigma^2)\), Algorithm \ref{alg:bisection}
  recovers the signal \(x\) such that
  \(\|x - \widehat{x}\|_2 \leq \sqrt{k}\varepsilon\) with probability at least \(1-k \lceil \log n \rceil / (n^{\varepsilon^{2} / (2 k \sigma^{2})})
  =
  O(1)\)
  using at most
  \(2 k \lceil \log n \rceil^{2}\) measurements.
\end{enumerate}
\end{thm}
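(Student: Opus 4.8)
The plan is to analyze the recursion tree generated by the bisection procedure of Algorithm~\ref{alg:bisection}, in which each recursive call operates on a tuple $(S,\ell)$, splits $S$ into two halves $S_{1},S_{2}$, measures $\transpose{a_{S_{1}}}x$ (recovering the complementary amplitude as $\ell - \transpose{a_{S_{1}}}x$), and recurses only into halves whose estimated amplitude is positive. The central combinatorial observation is that a half is recursed upon only when it is \emph{active}, i.e.\ it contains at least one nonzero entry of $x$; since $x$ has at most $k$ nonzeros, at most $k$ sets are active at any fixed recursion depth, while the depth never exceeds $\lceil \log n\rceil$ because each split halves the set size.

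For Part~1 I would first argue correctness in the noiseless case: every measurement is exact, so the support identified and the amplitudes assigned at the singleton leaves coincide with those of $x$, and the recursion terminates precisely when each active set has been reduced to a singleton. I then bound the number of measurements by the number of active internal nodes. Counting level by level yields at most $\sum_{d=0}^{\lceil \log n\rceil - 1}\min(2^{d},k)$ active nodes, each contributing a single measurement; this sum is at most $k\lceil \log n\rceil$, hence at most $2k\lceil \log n\rceil$, giving the claimed bound (the factor $2$ is slack that I retain for a uniform statement with Part~2).

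For Part~2 the plan is to run the same tree but, as prescribed by Line~\ref{line:r} with $r=\lceil \log n\rceil$, to repeat every measurement $r$ times and average, so that the effective noise on each measured sum is $\mathcal{N}(0,\sigma^{2}/r)$. I would fix a per-measurement accuracy target $t=\varepsilon/\sqrt{k}$, chosen so that if every active leaf returns an amplitude within $t$ of the truth then $\tnorm{x-\widehat{x}}^{2}\le k\,t^{2}=\varepsilon^{2}\le k\varepsilon^{2}$, whence $\tnorm{x-\widehat{x}}\le\sqrt{k}\,\varepsilon$. A single averaged measurement deviates from its mean by more than $t$ with probability at most the Gaussian tail $\exp(-r t^{2}/(2\sigma^{2}))=\exp(-\lceil \log n\rceil\,\varepsilon^{2}/(2k\sigma^{2}))=n^{-\varepsilon^{2}/(2k\sigma^{2})}$, and a union bound over all at most $2k\lceil \log n\rceil$ measurement nodes produces the stated success probability $1-k\lceil \log n\rceil\,n^{-\varepsilon^{2}/(2k\sigma^{2})}$. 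The total number of measurements is the number of active nodes times $r$, i.e.\ at most $2k\lceil \log n\rceil\cdot\lceil \log n\rceil=2k\lceil \log n\rceil^{2}$.

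The step I expect to be the main obstacle is the noisy analysis: I must guarantee that, with high probability, the averaged measurements are accurate enough \emph{both} to avoid misclassifying an empty half as active or an active half as empty — so that the active-node count and depth bound of Part~1 persist — \emph{and} to control the amplitude error at the leaves, while ensuring these requirements are met simultaneously by a single choice of threshold and of $r$. Getting the dependence on $k$ inside the exponent right (via the target $t=\varepsilon/\sqrt{k}$) and verifying that the amplitude errors do not compound along a root-to-leaf path of length $\lceil \log n\rceil$ are the delicate points; by comparison the combinatorial counting underlying Part~1 is routine.
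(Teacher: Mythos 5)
Your Part~1 is essentially the paper's argument: at most $k$ sets survive each round, so at most $2k$ sets are measured per round, and the halving forces termination after $\lceil \log n\rceil$ rounds. One caveat: Algorithm~\ref{alg:bisection} as written does \emph{not} carry the amplitude $\ell$ down the tree and infer one half as $\ell - \transpose{a_{S_{1}}}x$; it measures \emph{both} halves directly (this is exactly where the factor $2$ comes from, rather than being ``slack''). Your worry about amplitude errors compounding along a root-to-leaf path is therefore an artifact of your variant and simply does not arise for the algorithm being analyzed — every test and every leaf estimate uses a fresh, direct (averaged) measurement of that set.

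The genuine gap is in Part~2, and it is precisely the step you flag as ``the main obstacle'' and then leave unresolved. You propose to choose the threshold and $r$ so that, with high probability, no active half is ever misclassified as empty. This is impossible: the theorem assumes only $x\in\mathbb{R}_{+}^{n}$ and $k$-sparsity, with no minimum-amplitude condition, so an active set $S$ can have $\transpose{a_{S}}x$ arbitrarily small and will be pruned by the test $y\le\varepsilon$ no matter how accurate the measurements are. The missing idea — and the crux of the paper's noisy analysis — is that such misclassification need not be prevented, only shown to be \emph{harmless}: conditioned on every averaged measurement having noise at most $\varepsilon$, (i) an empty set always satisfies $y\le\varepsilon$ and is correctly pruned, so $L$ never exceeds $k$ sets and the count/depth bounds of Part~1 persist; (ii) an active set that is pruned satisfies $\transpose{a_{S}}x\le y+\varepsilon = O(\varepsilon)$, so setting $\widehat{x}=0$ there costs at most $O(\varepsilon)$ per nonzero coordinate; (iii) each surviving singleton is estimated with error at most $\varepsilon$. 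Summing over the at most $k$ support coordinates gives $\tnorm{x-\widehat{x}}=O(\sqrt{k}\,\varepsilon)$. Without point (ii) your proof does not go through. (Your choice $t=\varepsilon/\sqrt{k}$ does reproduce the stated exponent $\varepsilon^{2}/(2k\sigma^{2})$ exactly — the paper's own choice $t=\varepsilon$ actually yields a better probability — but your claimed final error $\tnorm{x-\widehat{x}}\le\varepsilon$ is no longer valid once the pruned active sets are accounted for, since they contribute up to $\varepsilon+t$ per coordinate against the fixed threshold $\varepsilon$ of Line~\ref{line:test}; the correct conclusion is again $O(\sqrt{k}\,\varepsilon)$.)
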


\yx{
\begin{algorithm}[h!]
  \caption{Bisection for \(k\)-sparse signals}
  \begin{algorithmic}[1]
    \REQUIRE ambient dimension \(n\) of \(x\), error probability
    \(\delta\), noise variance \(\sigma\), error \(\varepsilon\)
    \STATE \(r \leftarrow \lceil \log n \rceil\)
    \label{line:r}
    \STATE \(L \leftarrow \face{[n]}\)
    \STATE \(\widehat{x} \leftarrow 0\)
    \COMMENT{initialize estimator}
    \WHILE{\(L\) not empty}\label{line:loop}
    \FORALL{\(S \in L\)}
      \STATE Partition \(S = S_1 \dot\cup S_2\) with
      \(\abs*{\card{S_1}-\card{S_2}} \leq 1\)
      \STATE Replace \(S\) by \(S_{1}\) and \(S_{2}\) in \(L\)
    \ENDFOR
    \FORALL{\(S \in L\)}
      \STATE Measure \(r\) times and average:
      \(y = \transpose{a_{S}} x + \yx{w}\)
      \IF{\(y \leq \varepsilon\)}
      \label{line:test}
      \STATE Remove \(S\) from \(L\). \COMMENT{\(\widehat{x}\) is
        already \(0\) on \(S\).}
      \ELSIF{\(\card{S} = 1\)}
      \STATE Remove \(S\) from \(L\).
      \STATE \(\widehat{x}_{i} \leftarrow y\) where \(S = \{i\}\).
      \label{line:estimate}
      \ENDIF
\ENDFOR
    \ENDWHILE
    \RETURN \(\widehat{x}\) as estimator for \(x\).
  \end{algorithmic}
  \label{alg:bisection}
\end{algorithm}
}
%
%The second part of the statement above follows via simple concentration estimations (see e.g., \cite{MalloyNowak2013}) and taking the union bound over all measurements. 
%Moreover, once the support is obtained, the amplitudes can be recovered with \(k\) additional Dirac measurements in the noiseless case and \(k \lceil \log n \rceil \) Dirac measurements\widehat{x} in the noisy case.  
}
%We can also show that Algorithm~\ref{alg:bisection} is optimal with
%respect to the number of measurements, up to a \(\log k\)
%factor in the noiseless case, using an information-theoretic
%argument. In the noisy case, we lose at most another \(\lceil \log n \rceil\) factor.

\begin{lem}[Lower bound for noiseless $k$-sparse signal $x$] \label{lem:LB-bisection}
  Let \(x \in \mathbb{R}_+^n\), \( x_i \in \binSet\)
  be a \(k\)-sparse signal.
  Then to recover \(x\) exactly,
  the expected number of measurements $m$ required for any
  algorithm is at least \(\frac{k}{\log k +1} (-1 + \log n)\).
\end{lem}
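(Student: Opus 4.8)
The plan is to prove an average-case lower bound, which automatically gives a worst-case one, by an information-theoretic decision-tree argument. First I would place the uniform prior on the signal: let $\supp(x)$ be a uniformly random $k$-subset of $[n]$ and fix the nonzero entries to $1$ (prescribing the amplitudes only makes the recovery problem easier, so it is a valid hard instance). There are $\binom{n}{k}$ such signals, so $\entropy{x} = \log\binom{n}{k}$. Since the statement concerns the expected number of measurements for an arbitrary algorithm, it suffices by averaging over internal randomness to bound deterministic algorithms against this prior; each such algorithm is an adaptive decision tree whose internal nodes are the measurements and whose sequences of answers must determine $x$ uniquely.

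The key structural observation is that each measurement carries only a bounded amount of information. Because the sensing entries satisfy $a_{ij}\in\binSet$ and the signal satisfies $x_i\in\binSet$ with at most $k$ ones, every noiseless outcome $y_i = \transpose{a_i}x = \card{\supp(a_i)\cap\supp(x)}$ is an integer in $\{0,1,\dots,k\}$, hence takes at most $k+1$ distinct values. Thus the decision tree has branching factor at most $k+1$. Mapping each signal to the sequence of answers produced until it is identified yields a prefix-free $(k+1)$-ary code for $x$, so by Kraft's inequality and the noiseless source-coding bound the expected number of measurements is at least the base-$(k+1)$ entropy,
\begin{equation}
\expectation{m} \;\ge\; \frac{\entropy{x}}{\log(k+1)} \;=\; \frac{\log\binom{n}{k}}{\log(k+1)}.
\end{equation}
The per-tree inequality holds for every fixing of the algorithm's coins, so averaging over the internal randomness extends the bound to randomized algorithms as well.

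It then remains to massage this clean ratio into the stated closed form, using a lower bound on $\binom{n}{k}$ together with $\log(k+1)\le \log k + 1$ (valid since $k+1\le ek$ for $k\ge 1$), where $\log$ denotes the natural logarithm, to reach $\frac{k}{\log k+1}(\log n - 1)$. I expect this final arithmetic step to be the main obstacle: the crude estimate $\binom{n}{k}\ge (n/k)^k$ is \emph{not} quite strong enough to deliver the factor $\log n - 1$ uniformly in $n$ and $k$ (it already falls short when $\log n$ is comparable to $(\log k)^2$), so one must retain some of the slack in $\binom{n}{k}$ — for instance through a sharper Stirling-type estimate of $\log\binom{n}{k}$, or by carrying $\log(k+1)$ rather than $\log k + 1$ through the denominator before the final relaxation. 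Everything else — the choice of prior, the bounded branching factor, and the source-coding inequality — is routine; the real content of the lemma is the observation that a binary measurement of a binary $k$-sparse signal reveals at most $\log(k+1)$ nats.
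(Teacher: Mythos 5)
Your argument is, in substance, the paper's own: the paper also puts the uniform distribution on the \(\binom{n}{k}\) binary \(k\)-sparse signals, observes that every noiseless outcome \(\transpose{a}x\) with \(a\in\binSet^n\) lies in \(\{0,\dots,k\}\) so that a single measurement carries at most \(\log k+1\) bits, and divides \(\log\binom{n}{k}\) by that per-measurement bound. The only structural difference is how the division is justified: you use a self-contained decision-tree argument (branching factor \(k+1\), distinct signals at distinct leaves, Kraft's inequality, source coding), while the paper invokes its Theorem~\ref{thm:estimateRunningTime}, the conditional-information decomposition of the transcript from \cite{bgp2013}. The two are interchangeable here, and your route has the mild advantage of not needing the transcript machinery; your base-of-logarithm worry is immaterial since \(\log(k+1)\le\log k+1\) in base \(2\) as well (because \(k+1\le 2k\)).

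Your suspicion about the final arithmetic is correct, and it is not a defect of your write-up alone --- the paper's own proof stumbles at exactly the same place. Both arguments honestly deliver \(\mathbb{E}[m]\ge \frac{k(\log n-\log k)}{\log k+1}\), and the paper then asserts \(\frac{k}{\log k+1}\log n-k>\frac{k}{\log k+1}(-1+\log n)\), which is equivalent to \(k<\frac{k}{\log k+1}\) and hence false for every \(k\ge 2\) (equality at \(k=1\)). Since \(\log n-\log k<\log n-1\) as soon as \(\log k>1\), the stated constant \(-1+\log n\) cannot be recovered from \(\binom{n}{k}>(n/k)^{k}\) for \(k\ge 3\), exactly as you predicted; even the sharper estimate \(\binom{n}{k}\gtrsim(en/k)^{k}/\sqrt{k}\) only rescues the statement for \(k\) up to a small absolute constant, and for \(k\) growing with \(n\) (e.g.\ \(k=n/2\), where coin-weighing designs identify the support with \(O(n/\log n)\) counting queries) the claimed bound is genuinely unattainable. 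So the correct conclusion of both proofs is \(\mathbb{E}[m]\ge\frac{k}{\log k+1}(\log n-\log k)\), which coincides with the lemma as stated only for \(k\le 2\); your instinct to stop short of the claimed closed form rather than force the last inequality was the right one.
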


\begin{lem}[Bisection Algorithm~\ref{alg:bisection} for \(k=1\) is Info-Greedy]\label{lem:infoGreedyBis}
  For \(k=1\) Algorithm~\ref{alg:bisection}, is Info-Greedy.
 \label{lemma_lower_bound} 
\end{lem}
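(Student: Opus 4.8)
The plan is to analyze a single step of Algorithm~\ref{alg:bisection} under the posterior induced by the previous measurements and to show that the balanced split it performs is exactly the binary measurement of maximal conditional mutual information. First I would record the effect of conditioning: for $k=1$ the signal is a single location $\ell$ together with an amplitude $a=x_\ell$, and a balanced bisection is symmetric, so after any number of rounds the posterior is $\ell$ uniform over the current candidate set $S$ with the amplitude posterior identical for every location in $S$. Since every candidate measurement in a given round uses the same resource $\beta_i$ (the number $r$ of repeated measurements), the factor $1/\beta_i$ is common and the Info-Greedy condition reduces to maximizing $\mutualInfo[y_j, j<i]{x}{y_i}$ over admissible binary sensing vectors $a_T$; moreover entries of $T$ outside $S$ are known to be zero and may be dropped, so I may assume $T\subseteq S$ and parametrize the choice by $t=\card{T}$.

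Next I would reduce the location part of the information to a binary-input channel. Writing the measurement as $y=\transpose{a_T}x+w=a\,\allOne[\ell\in T]+w$, the key observation is that, conditioned on the indicator $b=\allOne[\ell\in T]$, the outcome $y$ is independent of the precise identity of $\ell$ inside $T$ or inside $S\setminus T$, because the amplitude does not depend on which coordinate is active. Hence $\mutualInfo{\ell}{y}=\mutualInfo{b}{y}$, where $b$ is Bernoulli with parameter $p=t/\card{S}$. Consequently the information about the location depends on the chosen subset only through $p$, and a balanced partition is precisely the choice $p=\tfrac12$.

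I would then optimize over $p$. In the noiseless case (equivalently, once the amplitude is known) $y$ determines $b$ exactly, so $\mutualInfo{b}{y}$ equals the binary entropy $h(p)=-p\log p-(1-p)\log(1-p)$, which is uniquely maximized at $p=\tfrac12$; since any binary measurement reveals at most $h(p)\le 1$ bit about the discrete location, the balanced split attains the global maximum. In the noisy case the map $b\mapsto y=ab+w$ is a binary-input channel whose two output laws are exchanged by the symmetry $y\mapsto a-y$, and by this symmetry together with the concavity of mutual information in the input distribution the capacity-achieving input is the uniform one, so $p=\tfrac12$ again maximizes $\mutualInfo{b}{y}$. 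In either case the balanced bisection is the maximizer, which is exactly what Algorithm~\ref{alg:bisection} does, giving the Info-Greedy property. (For $k=1$ it is worth noting that once one half has been measured the complementary half is determined, so the informative measurements form precisely a sequence of balanced bisections.)

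The main obstacle is the amplitude. When $a$ is continuous and unknown, a noiseless measurement reveals it exactly, so the raw mutual information $\mutualInfo{x}{y}$ is infinite for every $p>0$; and in the noisy case the chain rule gives $\mutualInfo{x}{y}=\mutualInfo{b}{y}+p\,\mutualInfo{a}{a+w}$, whose second term is increasing in $p$ and thus competes with the balanced optimum of the first. The crux of the argument is therefore to isolate the location information as the quantity the greedy step optimizes---treating the amplitude contribution as a nuisance that does not alter the optimal split among subsets under the stated model---and to make the symmetry-and-concavity argument for the binary-input channel rigorous. These are the two steps I expect to require the most care.
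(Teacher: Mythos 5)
Your core argument for the location variable is the same as the paper's: under a balanced split the outcome is a balanced binary random variable, so the measurement extracts the full $\entropy{Y}=1$ bit, which is the most any binary-outcome measurement can give, and by symmetry the posterior after each round is again uniform on the surviving half, so induction finishes the proof. The paper's proof is exactly this two-line computation, $\mutualInfo{X}{Y}=\entropy{Y}-\entropy[X]{Y}=1$ versus $\mutualInfo{X}{Y}\le\entropy{Y}\le 1$ ``as $Y$ is binary,'' carried out implicitly in the model of Lemma~\ref{lem:LB-bisection} where the amplitudes are binary, $x_i\in\binSet$, so that the only uncertainty is the location and the outcome of any $\binSet$-valued measurement on a $1$-sparse signal is itself binary. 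Your reduction to the Bernoulli indicator $b$ and the binary entropy $h(p)$ maximized at $p=\tfrac12$ is a slightly more general phrasing of the same idea.

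The gap is the one you flag yourself and then do not close. Under the model you set up --- unknown, continuous amplitude $a=x_\ell$ --- the quantity that Definition~\ref{def1} requires to be maximized is $\mutualInfo{x}{y}$, and your own chain-rule decomposition $\mutualInfo{x}{y}=\mutualInfo{b}{y}+p\,\mutualInfo{a}{a+w}$ shows that the balanced split does \emph{not} maximize it: the second term is increasing in $p$, so the optimum sits at some $p>\tfrac12$, and in the noiseless case the objective is infinite for every $p>0$, so the maximization is vacuous. There is no information-theoretic device that ``isolates the location information as the quantity the greedy step optimizes''; the resolution is a choice of probabilistic model --- binary amplitudes, hence no amplitude uncertainty --- under which the competing term vanishes and the two objectives coincide. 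As written, your argument establishes that bisection maximizes $\mutualInfo{\ell}{y}$, not that it is Info-Greedy in the sense of the definition, and the step you defer (``treating the amplitude contribution as a nuisance'') is precisely where the statement would fail without restricting the model. The symmetry-plus-concavity argument for the noisy binary-input channel is plausible but is also not needed once the model is fixed as above, since the paper's treatment of this lemma is effectively noiseless.
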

In general case the simple analysis that leads to Lemma \ref{lemma_lower_bound} fails. However, using
Theorem~\ref{thm:estimateRunningTime} in the Appendix we can estimate the average
amount of information obtained from a measurement:
\begin{lem}[Bisection Algorithm~\ref{alg:bisection} is Info-Greedy up
  to a \(\log k\) factor in the noiseless case]\label{lem:bisection-k-ub} 
  Let \(k \leq n \in \N\).
  Then the average information of a measurement
  in Algorithm~\ref{alg:bisection}:
\[\mutualInfo[Y_1,\dots, Y_{i-1}]{X}{Y_i} \geq 1-
\frac{\log k}{\log n}.\]
\end{lem}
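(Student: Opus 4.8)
The plan is to turn the per-measurement conditional information into a statement about the \emph{total} information gathered over the whole run, via the chain rule for mutual information. Writing $Y_1,\dots,Y_m$ for the sequence of measurements produced by Algorithm~\ref{alg:bisection}, the chain rule gives $\sum_{i=1}^m \mutualInfo[Y_1,\dots,Y_{i-1}]{X}{Y_i} = \mutualInfo{X}{Y_1,\dots,Y_m}$, so the average to be bounded is exactly the total information divided by the number of measurements. Since the algorithm recovers $x$ exactly in the noiseless case, the measurements determine the support $\supp X$, so by the data-processing inequality $\mutualInfo{X}{Y_1,\dots,Y_m} \geq \mutualInfo{\supp X}{Y_1,\dots,Y_m} = \entropy{\supp X}$. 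This conveniently sidesteps any differential-entropy issue coming from the continuous amplitudes and reduces the claim to two estimates: a lower bound on $\entropy{\supp X}$ and an upper bound on the number of measurements, the latter being precisely the running-time estimate supplied by Theorem~\ref{thm:estimateRunningTime}.

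For the first estimate, since the support is uniform over the $k$-subsets of $[n]$ we have $\entropy{\supp X} = \log\binom{n}{k} \geq \log\bigl((n/k)^{k}\bigr) = k(\log n - \log k)$, using the elementary bound $\binom{n}{k} \geq (n/k)^{k}$. For the second, the bisection tree has depth $\lceil\log n\rceil$, and at every level at most $k$ of the current sets can contain a non-zero entry, so at most $k$ measurements per level carry genuinely new information; hence the number of informative measurements is at most $k\lceil\log n\rceil$. Writing $m$ for that count and dividing the lower bound on the total information by the upper bound on $m$,
\[
\frac{\mutualInfo{X}{Y_1,\dots,Y_m}}{m} \;\geq\; \frac{k(\log n - \log k)}{k\log n} \;=\; 1 - \frac{\log k}{\log n},
\]
which is the asserted bound.

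The main obstacle is this counting step, and it is exactly where Theorem~\ref{thm:estimateRunningTime} does the real work. A naive tally gives $2k\lceil\log n\rceil$ measurements (cf.\ Theorem~\ref{lemma1}), because the algorithm measures \emph{both} children $S_1,S_2$ of every split set $S$; feeding that count into the display above would yield only half of the stated bound. The point is that once the sum over a parent set is known, measuring one child forces the other through the constraint $Y_{S_1}+Y_{S_2}=Y_S$, so the sibling measurement has zero conditional mutual information and must be dropped from the effective count. Establishing rigorously that the informative measurements number at most $k\lceil\log n\rceil$ while the redundant ones contribute nothing is the content of Theorem~\ref{thm:estimateRunningTime}; a secondary technical nuisance is that $m$ is a random stopping time, so the chain-rule identity and the division by $m$ should be read in expectation rather than at a fixed horizon.
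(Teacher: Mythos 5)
Your argument is essentially the paper's: lower-bound the total acquired information by $\entropy{\mathcal F}=\log\binom{n}{k}\geq k\log(n/k)$, upper-bound the number of measurements by $k\lceil\log n\rceil$, and divide, with Theorem~\ref{thm:estimateRunningTime} supplying the decomposition of the total information into per-measurement conditional terms. Where you add value is in confronting the factor of two head-on: the paper's proof simply asserts that the algorithm ``requires $k\lceil\log n\rceil$ queries,'' even though Theorem~\ref{lemma1} counts $2k\lceil\log n\rceil$ measurements, and your sibling-redundancy observation ($Y_{S_1}+Y_{S_2}=Y_S$, with $Y_S$ known from the previous level and $Y_{[n]}=k$ for the exactly-$k$-sparse binary family) is the right way to justify discarding half of them. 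Two caveats. First, your leading justification for the count --- that only the at most $k$ support-intersecting sets ``carry genuinely new information'' --- is not valid on its own: a measurement returning zero is informative unless it is determined by earlier measurements, so it is the sibling argument, not the support count, that does the work; state it that way. Second, the reduction to $k\lceil\log n\rceil$ informative measurements is \emph{not} ``the content of Theorem~\ref{thm:estimateRunningTime}'' --- that theorem is only the chain-rule/stopping-time identity and proves nothing about which measurements are redundant --- so this counting step must be argued explicitly (as you in fact do), and the resulting bound should be read as a statement about the average over the informative measurements, in expectation over the random stopping time, exactly as in the paper.
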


\begin{rem}
\begin{enumerate}
\item \yx{Observe that Lemma~\ref{lem:bisection-k-ub} establishes that
  Algorithm~\ref{alg:bisection} for a sparse signal with
  \(\log k = o(\log n)\)
  acquires at least a \(\frac{1}{\log k+1} - o(1)\)
  fraction of the maximum possible mutual information (which on
  average  is roughly \(1\) bit per measurement).}
  \item \yx{Here we constrained the entries of matrix $A$ to be binary valued. This may correspond to applications, for examples, sensors reporting errors and the measurements count the total number of errors. Note that, however, if we relax this constraint and allow entries of $A$ to be real-valued, in the absence of noise the signal can be recovered from one measurement that project the signal onto a vector with entries $[2^0, 2^1, 2^2, \cdots]$.}
\item \yx{The setup here with $k$-sparse signals and binary measurement
matrix $A$ generalizes the group testing \cite{IwenTewfikGroup2010} setup.}
\item the CASS algorithm \cite{MalloyNowak2013} is another algorithm
  that recovers a $k$-sparse signal $x$ by iteratively partitioning
  the signal support into $2k$ subsets, computing the sum over that
  subset and keeping the largest $k$. In \cite{MalloyNowak2013} it was
  shown that to recover a $k$-sparse \(x\)
  with non-uniform positive amplitude with high probability, the
  number of measurements $m$ is on the order of $2k\log (n/k)$ with {\it varying} power measurement. 
   It is important to note that the CASS algorithm
  allows for power allocation to mitigate noise, while we repeat
  measurements. This, however, coincides with the number of unit length measurements of our algorithm,  $2k \lceil \log n\rceil^2$ in Lemma \ref{lemma1} after
  appropriate normalization.
 %
 %  the number of unit length measurements of our algorithm,  $2k \lceil \log n\rceil^2$ in Lemma \ref{lemma1},  coincide with that in CASS algorithm.
  For specific regimes of error probability, the \(O(\log n)\)
  overhead in Lemma \ref{lemma1} can be further reduced. For example,
  for any constant probability of error \(\varepsilon > 0\),
  the number of required repetitions per measurement is \(O(\log \log n
  )\) leading to improved performance. Our algorithm can be also
  easily modified  to incorporate power allocation.
\end{enumerate}
\end{rem}

\section{Low-Rank Gaussian Models} \label{sec:Gaussian}

In this section, we derive the Info-Greedy Sensing algorithms for the
single low-rank Gaussian model as well as the low-rank GMM signal
model, and also quantify the algorithm's performance. 

\subsection{Single Gaussian model}

Consider a Gaussian signal \(x \sim \Dnormal{\mu}{\Sigma}\)
with known parameters $\mu$ and $\Sigma$. The covariance matrix $\Sigma$ has rank $k \leq n$. 
We will consider three noise models:
\begin{enumerate}
\item white Gaussian noise added after the measurement (the most common model
  in compressed sensing):
\begin{equation}
y = Ax + w, \quad w\sim \mathcal{N}(0, \sigma^2 I).
\end{equation} 
Let $\beta_i = \|a_i\|_2^2$ represent the power allocated to the $i$th measurement. In this case, higher power $\beta_i$ allocated to a measurement increases SNR of that measurement. %Hence, $\beta_i$ represents the amount of resource we allocated to the $i$th measurement.
\item white Gaussian noise added prior to the measurement, a model that appears in
  some applications such as reduced dimension multi-user detection in
  communication systems \cite{XieEldarGoldsmith2013} and also known as
  the ``noise folding'' model \cite{CastroEldar2011}:
\begin{equation}
y = A(x + w), w \sim \mathcal{N}(0, \sigma^2 I).
\end{equation}
In this case, allocating higher power for a measurement cannot increase the SNR of the outcome. Hence, we use the actual number of repeated measurements in the same direction as a proxy for the amount of resource allocated for that direction.
\item colored Gaussian noise with covariance $\Sigma_w$ added
  either prior to the measurement:
\begin{equation}
y = A(x + w), w \sim \mathcal{N}(0, \Sigma_w),
\end{equation}
or after the measurement:
\begin{equation}
y = Ax + w, w \sim \mathcal{N}(0, \Sigma_w).
\end{equation}
\end{enumerate}

In the following, we will establish lower bounds on the amount of resource (either the minimum power or the number of measurements) needed for Info-Greedy Sensing to achieve a recovery error \(\|x - \widehat{x}\|_2\leq \varepsilon\).

\subsubsection{White noise added prior to measurement or ``noise folding''}
\label{sec:isotrope-noise}

We start our discussion with this model and results for other models can be derived similarly. As  \(\beta_{i}\)
does not affect SNR, we set $\|a_i\|_2 = 1$. 
Note that conditional distribution of \(x\) given \(y_{1}\) is a Gaussian random vector with adjusted parameters
\begin{equation}
  \label{eq:3}
  \begin{split}
  & x \mid y_{1} \sim \mathcal{N}(\mu
    + \Sigma a_{1}
    (\transpose{a_{1}} \Sigma a_{1} + \sigma^{2})^{-1}
    (y_{1} - \transpose{a_{1}} \mu), \\
  &~~\Sigma - \Sigma a_{1}
    (\transpose{a_{1}} \Sigma a_{1} + \sigma^{2})^{-1}
    \transpose{a_{1}} \Sigma).
    \end{split}
\end{equation}
Therefore, to find Info-Greedy Sensing for a single Gaussian signal, it suffices to characterize the first measurement $a_1 = \arg\max_{a_1} \mutualInfo{x}{y_{1}}$ and from there on iterate with adjusted distributional parameters. 
For Gaussian signal $x \sim \mathcal{N}(\mu, \Sigma)$
%\begin{align}
%  \label{eq:entropy-normal}
%  \entropy{x}
%  &
%  =
%  \frac{1}{2} \log \left[
%    (2 \pi e)^{n / 2} \det \Sigma
%  \right]. %x\sim \Dnormal{\mu}{\Sigma}.
%\end{align}
%Hence, 
and the noisy measurement $\sigma > 0$, we have 
\begin{equation}
  \label{eq:Gaussian-noise-mutual}
  \mutualInfo{x}{y_{1}} =
  \entropy{y_{1}} - \entropy[x]{y_1}
  = \frac{1}{2}\ln
  \left(
    \transpose{a_{1}} \Sigma a_{1}/\sigma^{2} + 1
  \right). 
\end{equation}
Clearly, with $\|a_1\|_2 = 1$, (\ref{eq:Gaussian-noise-mutual}) is maximized when $a_1$ corresponds to the largest \yx{eigenvector} of \(\Sigma\). 
From the above argument, the Info-Greedy Sensing algorithm for a single Gaussian signal is to choose \(a_{1}, a_{2}, \dotsc\) as the
orthonormal eigenvectors of \(\Sigma\) in a decreasing order of eigenvalues, as described in Algorithm~\ref{alg:Gaussian-all}. 
The following theorem establishes the bound on the number of measurements needed.
\begin{thm}[White Gaussian noise added prior to measurement or ``noise folding'']
  \label{thm:Gaussian}
  Let \(x \sim \Dnormal{\mu}{\Sigma}\) and let
  \(\lambda_{1}, \dotsc, \lambda_{k}\) be the eigenvalues
  of \(\Sigma\) with multiplicities.
  Further let \(\varepsilon > 0\) be the
  accuracy and  \(w_i \sim \Dnormal{0}{\sigma^{2}}\). 
  Then Algorithm~\ref{alg:Gaussian-all} recovers
  \( x\)
  satisfying \(\tnorm{x-\widehat x} < \varepsilon\)  with probability
  at least
  \(p\) using at most the following number of measurements
  by unit vectors \(\|a_i\|_2 = 1\):
  \begin{subequations}
    \begin{equation}
      \label{eq:Gaussian-number-noisy}
      m = \sum_{\substack{i=1 \\ \lambda_{i} \neq 0}}^{k}
      %WARNING: This formula doesn't apply to σ=0.
      \max\left\{
        0, \ceil*{
          \left(
            \frac{\chi_{n}^{2}(p)}{\varepsilon^{2}}
            - \frac{1}{\lambda_{i}}
          \right)
          \sigma^{2}}
      \right\}
    \end{equation}
    provided \(\sigma > 0\). If \(\sigma^{2} \leq \varepsilon^{2} /
    \chi_{n}^{2}(p)\) the number of measurements simplifies to
    \begin{equation}
      \label{eq:Gaussian-number-exact}
      \card{\left\{
        i : \lambda_{i} > \frac{\varepsilon^{2}}{\chi_{n}^{2}(p)}
      \right\}}.
    \end{equation}
    This also holds when \(\sigma = 0\).
  \end{subequations}
%The total power requirement is at most $P = m$.
\end{thm}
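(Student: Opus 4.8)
The plan is to exploit the fact that Algorithm~\ref{alg:Gaussian-all} probes only along the orthonormal eigenvectors of $\Sigma$, so the $k$-dimensional estimation problem decouples into $k$ independent scalar problems, one per nonzero eigenvalue. First I would verify this decoupling: substituting an eigenvector $a_{1} = v_{i}$ (with $\Sigma v_{i} = \lambda_{i} v_{i}$ and $\tnorm{v_i}=1$) into the conditional covariance update in \eqref{eq:3} gives $\Sigma \mapsto \Sigma - \frac{\lambda_{i}^{2}}{\lambda_{i} + \sigma^{2}} v_{i} \transpose{v_{i}}$, which preserves the eigenbasis and shrinks only the $i$-th eigenvalue from $\lambda_{i}$ to $(\frac{1}{\lambda_{i}} + \frac{1}{\sigma^{2}})^{-1}$, leaving every other direction untouched. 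Iterating this $r_{i}$ times along $v_{i}$ (each repeated unit-power measurement contributing fresh $\Dnormal{0}{\sigma^{2}}$ noise) drives the posterior variance in that direction to $(\frac{1}{\lambda_{i}} + \frac{r_{i}}{\sigma^{2}})^{-1}$. Hence after the algorithm terminates the posterior covariance is diagonal in the eigenbasis, and since the repetition counts $r_{i}$ depend only on the known quantities $\lambda_{i}, \sigma, \varepsilon, p$, this covariance $\Sigma_{\mathrm{post}}$ is deterministic.

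Next I would convert the target guarantee $\tnorm{x-\widehat{x}} < \varepsilon$ into a uniform threshold on these variances. Set $\theta \coloneqq \varepsilon^{2} / \chi_{n}^{2}(p)$. Conditioned on the measurements, the residual $x - \widehat{x}$ is zero-mean Gaussian with covariance $\Sigma_{\mathrm{post}}$, supported on the $k$-dimensional range of $\Sigma$. If every posterior variance is at most $\theta$, then $\Sigma_{\mathrm{post}} \preceq \theta I$ on that subspace, so $\tnorm{x-\widehat{x}}^{2}$ equals $\sum_{i} s_{i} g_{i}^{2}$ with $s_{i} \leq \theta$ and $g_{i}$ i.i.d.\ standard normal, hence is stochastically dominated by $\theta$ times a $\chi^{2}$ variable with $k \leq n$ degrees of freedom. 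Using $\probability{\chi_{k}^{2} \geq t} \leq \probability{\chi_{n}^{2} \geq t}$ together with the definition of the $p$-quantile $\chi_{n}^{2}(p)$ then yields $\probability{\tnorm{x-\widehat{x}}^{2} \geq \varepsilon^{2}} \leq \probability{\chi_{n}^{2} \geq \chi_{n}^{2}(p)} = 1 - p$, which is exactly the required confidence. This is where the (slightly conservative) appearance of $\chi_{n}^{2}$ in place of $\chi_{k}^{2}$ enters, and I expect this probabilistic step to be the main point requiring care.

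The counting is then routine. Imposing the per-direction requirement $(\frac{1}{\lambda_{i}} + \frac{r_{i}}{\sigma^{2}})^{-1} \leq \theta$ and solving for the smallest nonnegative integer number of repetitions gives $r_{i} = \max\{0, \ceil*{\sigma^{2}(\frac{1}{\theta} - \frac{1}{\lambda_{i}})}\} = \max\{0, \ceil*{(\frac{\chi_{n}^{2}(p)}{\varepsilon^{2}} - \frac{1}{\lambda_{i}})\sigma^{2}}\}$, where the clamp at $0$ records that a direction already satisfying $\lambda_{i} \leq \theta$ needs no measurement. Summing over the nonzero eigenvalues produces exactly \eqref{eq:Gaussian-number-noisy}.

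Finally I would dispatch the two special cases. When $\sigma^{2} \leq \theta$, for any direction with $\lambda_{i} > \theta$ the quantity $\sigma^{2}(\frac{1}{\theta} - \frac{1}{\lambda_{i}})$ is positive and strictly below $\frac{\sigma^{2}}{\theta} \leq 1$, so its ceiling is exactly $1$, while directions with $\lambda_{i} \leq \theta$ contribute $0$; hence $m$ collapses to $\card{\{i : \lambda_{i} > \varepsilon^{2}/\chi_{n}^{2}(p)\}}$ as in \eqref{eq:Gaussian-number-exact}. The noiseless case $\sigma = 0$ is the limiting instance: a single exact measurement annihilates the variance of its direction, so again precisely the directions with $\lambda_{i} > \theta$ must be probed once, giving the same count.
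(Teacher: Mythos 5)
Your proposal is correct and follows essentially the same route as the paper's proof: the eigenvalue update $\lambda \mapsto \lambda\sigma^{2}/(\lambda+\sigma^{2})$, the equivalence of $r_{i}$ repeated measurements with a single one of combined power (giving posterior variance $(1/\lambda_{i}+r_{i}/\sigma^{2})^{-1}$), the per-direction threshold $\theta=\varepsilon^{2}/\chi_{n}^{2}(p)$ with the same ceiling/clamp counting, and the chi-squared quantile argument for the confidence level. The only (cosmetic) difference is in the last step, where you use stochastic domination by $\theta\,\chi_{k}^{2}$ and monotonicity of the tail in the degrees of freedom, which handles the rank-deficient posterior covariance slightly more carefully than the paper's direct appeal to the prediction interval via $\Sigma'^{-1}$.
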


\subsubsection{White noise added after measurement}

A key insight in the proof for Theorem \ref{thm:Gaussian} is that
repeated measurements in the same eigenvector direction
corresponds to a single measurement in that direction with all the power summed together. %, as discussed in Remark~\ref{rem:powerVsIterations}. 
\yx{This can be seen from the following discussion. After measuring in the direction of a unit norm eigenvector $u$ with eigenvalue \(\lambda\),
and using power \(\beta\),
the conditional covariance matrix takes the form of}
\begin{equation}
  \label{eq:covariance-eigenvector}
  \begin{split}
  &\Sigma - \Sigma \sqrt{\beta} u
  \left(
    \transpose{\sqrt{\beta} u} \Sigma \sqrt{\beta} u
    + \sigma^{2}
  \right)^{-1}
  \transpose{\sqrt{\beta} u} \Sigma \\
&  =
  \frac{\lambda \sigma^{2}}{\beta \lambda + \sigma^{2}}
  u \transpose{u}
  + \Sigma^{\perp u},
  \end{split} 
\end{equation}
where \(\Sigma^{\perp u}\) is the component of \(\Sigma\)
in the orthogonal complement of \(u\).
Thus, the only change in the eigendecomposition of \(\Sigma\)
is the update of the eigenvalue of \(u\)
from \(\lambda\) to
\(\lambda \sigma^{2} / (\beta \lambda + \sigma^{2})\).
\yx{Informally, measuring with power allocation $\beta$ on a Gaussian signal \(x\)  reduces the uncertainty in direction \(u\) as illustrated in Fig. \ref{fig:Gaussian_mat}. We have the following performance bound for sensing a Gaussian signal:}
%
%\yx{By iterating the update above, we obtain the performance bound for sensing a Gaussian signal using $\|a_i\|_2^2 = \beta_i$. %Similar argument can be applied to the case when $\|a_i\|_2^2 = 1$ and we allow repeating measurements in one direction. 
%For example, in the ``noise folding'' case, we could only apply integer number of measurements in certain direction because varying power on the measurement does not help with SNR. In this case, the expression entails a rounding up in (\ref{eq:Gaussian-number-noisy}) as we cannot apply power in fractions.
%}
%
\begin{thm}[White Gaussian noise added after measurement]
  \label{thm:Gaussian-fixed-noise}
  Let \(x \sim \Dnormal{\mu}{\Sigma}\) and let
  \(\lambda_{1}, \dotsc, \lambda_{k}\) be the eigenvalues
  of \(\Sigma\) with multiplicities.
  Further let \(\varepsilon > 0\) be the
  accuracy and   \(w_i\sim \Dnormal{0}{\sigma^{2}}\).
  Then Algorithm~\ref{alg:Gaussian-all}
  recovers \( x\) satisfying \(\tnorm{x-\widehat x} < \varepsilon\)  with probability
  at least
  \(p\) using at most the following power 
  \begin{equation}
    \label{eq:Gaussian-number-fix-noise}
    P = \sum_{\substack{i=1 \\ \lambda_{i} \neq 0}}^{k}
    % WARNING: This formula doesn't apply to σ=0.
    \max\left\{
      0,
      \left(
        \frac{\chi_{n}^{2}(p)}{\varepsilon^{2}}
        - \frac{1}{\lambda_{i}}
      \right)
      \sigma^{2}
    \right\}
  \end{equation}
  provided \(\sigma > 0\). %, and the total number of measurements is at most $k$.
\end{thm}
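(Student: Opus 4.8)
The plan is to reduce the analysis to a per-eigenvector calculation using the covariance update \eqref{eq:covariance-eigenvector}, and then to control the recovery error through a chi-squared tail bound. Since Algorithm~\ref{alg:Gaussian-all} probes the orthonormal eigenvectors $u_1, \dots, u_k$ of $\Sigma$ with nonzero eigenvalue, and in the present model a single measurement $a_i = \sqrt{\beta_i}\, u_i$ with power $\beta_i$ directly raises the effective SNR, I would first establish by induction on the number of measurements that, after allocating power $\beta_i$ to direction $u_i$, the posterior covariance is
\[
\Sigma' = \sum_{\substack{i=1\\ \lambda_i \neq 0}}^{k}
\frac{\lambda_i \sigma^2}{\beta_i \lambda_i + \sigma^2}\, u_i \transpose{u_i}.
\]
The inductive step is precisely \eqref{eq:covariance-eigenvector}: each $u_i$ remains an eigenvector of the current posterior covariance because the earlier updates only rescaled eigenvalues along orthogonal directions, so measuring $u_i$ changes only its own eigenvalue, from $\lambda_i$ to $\lambda_i \sigma^2/(\beta_i \lambda_i + \sigma^2)$, and leaves the orthogonal complement $\Sigma^{\perp u_i}$ untouched. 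The directions outside $\range \Sigma$ keep their zero eigenvalue and so are absent from the sum.

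Next I would translate the accuracy requirement into a bound on the eigenvalues of $\Sigma'$. Because the returned estimator $\widehat{x}$ is the posterior mean, the error satisfies $x - \widehat{x} \sim \Dnormal{0}{\Sigma'}$, hence $\tnorm{x-\widehat{x}}^2 = \sum_j \mu_j z_j^2$ where $\mu_1, \dots, \mu_n$ are the eigenvalues of $\Sigma'$ and the $z_j$ are i.i.d.\ standard normal. If every $\mu_j \leq \varepsilon^2/\chi_{n}^{2}(p)$, then $\tnorm{x-\widehat{x}}^2 \leq (\varepsilon^2/\chi_{n}^{2}(p))\sum_j z_j^2$ pointwise, and since $\sum_j z_j^2 \sim \chi_{n}^{2}$ the definition of the quantile function yields
\[
\probability{\tnorm{x-\widehat{x}}^2 \leq \varepsilon^2}
\geq \probability{\textstyle\sum_j z_j^2 \leq \chi_{n}^{2}(p)} = p.
\]
Thus it suffices to drive every nonzero posterior eigenvalue down to the threshold $c \coloneqq \varepsilon^2/\chi_{n}^{2}(p)$, while the zero eigenvalues require no measurement.

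Finally I would solve the per-direction power requirement. The condition $\lambda_i \sigma^2/(\beta_i \lambda_i + \sigma^2) \leq c$ is equivalent to $\beta_i \geq (1/c - 1/\lambda_i)\sigma^2 = (\chi_{n}^{2}(p)/\varepsilon^2 - 1/\lambda_i)\sigma^2$, which is exactly the summand in \eqref{eq:Gaussian-number-fix-noise}; the outer $\max\{0,\cdot\}$ accounts for directions with $\lambda_i \leq c$, where the eigenvalue is already below threshold and no power is spent. Summing over $i$ gives the stated total power $P$, and $\sigma > 0$ is needed so that $c$ corresponds to a finite positive power. I expect the main obstacle to be the orthogonality bookkeeping in the inductive step, namely verifying that successive measurements along distinct eigenvectors act independently so that $\Sigma'$ stays diagonal in the fixed eigenbasis $\{u_i\}$; once \eqref{eq:covariance-eigenvector} is applied per step this is routine, and the chi-squared and algebraic steps are elementary. (This mirrors the proof of Theorem~\ref{thm:Gaussian}, except that power is now continuous, so no ceiling appears.)
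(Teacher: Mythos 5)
Your proposal is correct and follows essentially the same route as the paper: the paper's proof also reduces to the per-eigenvalue update of \eqref{eq:covariance-eigenvector}, drives each nonzero eigenvalue down to the threshold $\varepsilon^{2}/\chi_{n}^{2}(p)$ with power $(\chi_{n}^{2}(p)/\varepsilon^{2}-1/\lambda_{i})\sigma^{2}$, and concludes via the chi-squared prediction interval for the posterior Gaussian (deferring these steps to the proof of Theorem~\ref{thm:Gaussian}). Your explicit diagonalization of $\Sigma'$ and the pointwise domination $\tnorm{x-\widehat{x}}^{2}\le(\varepsilon^{2}/\chi_{n}^{2}(p))\sum_j z_j^2$ are just a slightly more detailed writing of the same argument.
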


\begin{figure}[h!]
\begin{center}
  \begin{tikzpicture}[every node/.style={outer sep=.5ex,
      drop shadow=black},
    line width=2pt]
    \node[ellipse, fill=green!11!yellow!66!cyan!75!black,
    minimum width=7em, minimum height=6em](orig){\(\Sigma\)};
    \node[ellipse, fill=green!68!cyan!53!black,
    minimum width=6em, minimum height=4em,
    right=of orig](first){\(\Sigma_{x \mid y_{1}}\)};
    \node[ellipse, fill=green!84!yellow!74!cyan!36!black,
    minimum width=1em, minimum height=.5em,
    right=of first](second)
    {\(\Sigma_{x \mid y_{1}, y_{2}}\)};
    \coordinate[right=of second](dummy);

    \draw[->] (orig) -- (first);
    \draw[->] (first) -- (second);
    %\draw[->] (second) -- (dummy);
  \end{tikzpicture}
\caption{Evolution of the covariance matrix by
  sequentially measuring with
  an eigenvector of the largest eigenvalue.}
  \label{fig:Gaussian_mat}
\end{center}
\end{figure}
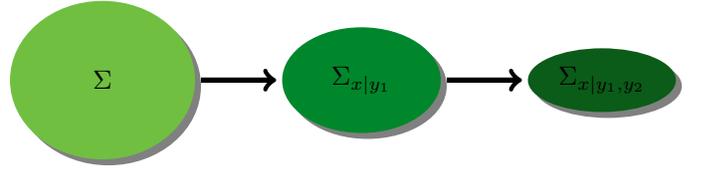

\subsubsection{Colored noise}
\label{sec:general-noise}

When a colored noise $w\sim \mathcal{N}(0, \Sigma_w)$ is added either prior to, or after the measurement, similar to the white noise cases, the conditional distribution of \(x\) given the first measurement \(y_1\) is a Gaussian random variable with adjusted parameters. Hence, as before, the measurement vectors can be found iteratively. 
%The conditional distribution of \(x\) given \(y_{1}\) is
%\begin{equation}
%  \label{eq:g3}
%  x \mid y_{1} \sim \Dnormal{\mu
%    + \Sigma a_{1}
%    (\transpose{a_{1}} \Sigma a_{1} +
%    \transpose{\widehat{a_{1}}} \Sigma_{w} \widehat{a_{1}})^{-1}
%    (y_{1} - \transpose{a_{1}} \mu)}
%  {\Sigma - \Sigma a_{1}
%    (\transpose{a_{1}} \Sigma a_{1} +
%    \transpose{\widehat{a_{1}}} \Sigma_{w} \widehat{a_{1}})^{-1}
%    \transpose{a_{1}} \Sigma}.
%\end{equation}
%As in the white noise case, the conditional distribution of the signal \(x\) after making measurement is also Gaussian with adjusted parameters. Hence, as before, the measurement vectors can be derived iteratively. 
Algorithm~\ref{alg:Gaussian-all} presents Info-Greedy Sensing for this case and the derivation is given in Appendix \ref{app:Gaussian_color}.  Algorithm~\ref{alg:Gaussian-all} also summarizes all the Info-Greedy Sensing algorithms for Gaussian signal under various noise models.

%\textcolor{red}{not sure how to include the low-rank signal cov matrix here. Because the product of noise covariance and signal cov matrix can fail to be low-rank.}

%\begin{algorithm}[h!]
%\caption{Update Gaussian distribution parameters}
%\begin{algorithmic}[1]
%\REQUIRE signal mean \(\mu\) and covariance \(\Sigma\), noise variance \(\sigma^2\), measurement vector $a$ and outcome $y$
%\STATE \(\mu \leftarrow \mu
%        + \Sigma a
%        (\transpose{a} \Sigma a + \sigma^{2})^{-1}
%        (y - \transpose{a}\mu )\)
%        %\COMMENT{update distribution parameters}
%
%\STATE \(\Sigma\leftarrow \Sigma
%        - \Sigma a
%        (\transpose{a} \Sigma a + \sigma^{2})^{-1}
%        \transpose{a}\Sigma\)
%\RETURN updated parameters $\mu$, $\Sigma$
%\end{algorithmic}
%  \label{alg:Gaussian-update}
%\end{algorithm}

\begin{algorithm}[h!]
  \caption{Info-Greedy Sensing for Gaussian signals}
  \begin{algorithmic}[1]
    \REQUIRE signal mean \(\mu\) and covariance \(\Sigma\),
      accuracy \(\varepsilon\),
      probability of correctness \(p\), noise covariance matrix $\Sigma_w$ (for white noise \(\sigma^2I\) )
      %\(\mu = \mu\), \(\Sigma = \Sigma\)
    \REPEAT 
       \IF{white noise added after measurement}
             \STATE \(\lambda \leftarrow \norm{\Sigma}\)
        \COMMENT{largest eigenvalue}
            \STATE \(u \leftarrow\)
        eigenvector of \(\Sigma\)
        for eigenvalue \(\lambda\)
  \STATE
        \(\beta
        \leftarrow
          \left(
            \frac{\chi_{n}^{2}(p)}{\varepsilon^{2}}
            - \frac{1}{\lambda}
          \right)
          \sigma^{2}
        \)
        \STATE \(a\leftarrow \sqrt{\beta} u\)
%\STATE $a\leftarrow \sqrt{\beta} u$
        \STATE \(y = \transpose{a} x + w \) 
       \ELSIF{white noise added prior to measurement}
             \STATE \(\lambda \leftarrow \norm{\Sigma}\)
        \COMMENT{largest eigenvalue}
       \STATE \(u \leftarrow\)
        eigenvector of \(\Sigma\)
        for eigenvalue \(\lambda\)
        \STATE \(a\leftarrow \sqrt{\beta} u\)
        \STATE \(y = \transpose{a} (x + w)\)
        \ELSIF{colored noise added after measurement}
        \STATE 
        \hspace{-0.1in}$\Sigma = U_x \Lambda_x \transpose U_x, \Sigma_w = U_w \Lambda_w \transpose U _w$
        \COMMENT{eigendecomposition} \vspace{-0.15in}
        \STATE \(u \leftarrow (1/\|\Lambda_w^{1/2} \transpose U_w e_1\|_2) U_x \Lambda_w^{1/2} \transpose U_w e_1\)
        \STATE \(a\leftarrow \sqrt{\beta} u\)
        \STATE \(y = \transpose{a} x + w \) 
        \ELSIF{colored noise added prior to measurement}
        \STATE \(\lambda \leftarrow \norm{\Sigma_{w}^{-1} \Sigma}\)
        \COMMENT{largest eigenvalue}
      \STATE
        \(\beta
        \leftarrow
        \frac{\chi_{n}^{2}(p)}{\varepsilon^{2}}
        \norm{\Sigma_{w}}
        - \frac{1}{\lambda}
        \)
      \STATE \(u \leftarrow\)
        largest eigenvector of \(\Sigma_{w}^{-1} \Sigma\) for eigenvalue $\lambda$
        %for eigenvalue \(\lambda\) with length \(\tnorm{a} = \sqrt{\beta}\)
        \STATE \(a\leftarrow \sqrt{\beta} u\)
      \STATE \(y = \transpose{a} (x + w)\) 
       \ENDIF
       \STATE \(\mu \leftarrow \mu
        + \Sigma a
        (\transpose{a} \Sigma a + \sigma^{2})^{-1}
        (y - \transpose{a}\mu )\) \COMMENT{mean}
        \STATE \(\Sigma\leftarrow \Sigma
        - \Sigma a
        (\transpose{a} \Sigma a + \sigma^{2})^{-1}
        \transpose{a}\Sigma\) \COMMENT{covariance}
           \UNTIL{\(\norm{\Sigma} \leq \varepsilon^{2} / \chi_{n}^{2}(p)\)}
    \COMMENT{all eigenvalues become small}
    \RETURN posterior mean \(\mu\)
  \end{algorithmic}
  \label{alg:Gaussian-all}
\end{algorithm}

The following version of Theorem~\ref{thm:Gaussian} is for the required number of measurements for colored noise in the ``noise folding'' model: 
\begin{thm}[Colored Gaussian noise added prior to measurement or ``noise folding'']
  \label{thm:Gaussian-non-iso}
  Let \(x \sim \Dnormal{\mu}{\Sigma}\) be a Gaussian signal,
  and let
  \(\lambda_{1}, \dotsc, \lambda_{n}\) denote the eigenvalues
  of \(\Sigma_{w}^{-1} \Sigma\) with multiplicities.
  Assume 
  \(w\sim\Dnormal{0}{\Sigma_{w}}\).
  Furthermore, let \(\varepsilon > 0\) be the required accuracy. 
Then Algorithm~\ref{alg:Gaussian-all} recovers \({x}\) 
  satisfying \(\tnorm{x-\widehat{x}} < \varepsilon\)
  with probability at least \(p\) using at most the following number of measurements by unit vectors $\|a_i\|_2 = 1$: 
    \begin{equation}
      \label{eq:Gaussian-number-non-iso}
      m = \sum_{\substack{i=1 \\ \lambda_{i} \neq 0}}^{n}
      \max\left\{
        0,
          \left \lceil
            \frac{\chi_{n}^{2}(p)}{\varepsilon^{2}}
            \norm{\Sigma_{w}}
            - \frac{1}{\lambda_{i}}
          \right \rceil
      \right\}.
    \end{equation}
\end{thm}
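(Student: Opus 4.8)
The plan is to reduce this colored noise-folding model to the isotropic analysis behind Theorem~\ref{thm:Gaussian} by simultaneously diagonalizing the matrix pencil $(\Sigma,\Sigma_{w})$. Since $\Sigma_{w}\succ 0$, there is a basis $v_{1},\dots,v_{n}$ that is $\Sigma_{w}$-orthonormal, $\transpose{v_{i}}\Sigma_{w}v_{j}=\delta_{ij}$, and diagonalizes $\Sigma$, $\transpose{v_{i}}\Sigma v_{j}=\lambda_{i}\delta_{ij}$, where $\lambda_{1}\geq\dots\geq\lambda_{n}$ are exactly the eigenvalues of $\Sigma_{w}^{-1}\Sigma$ and each $v_{i}$ satisfies $\Sigma v_{i}=\lambda_{i}\Sigma_{w}v_{i}$. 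Writing $P=[v_{1},\dots,v_{n}]$ this reads $\Sigma=\transposeInv{P}\Lambda P^{-1}$ and $\Sigma_{w}=\transposeInv{P}P^{-1}$ with $\Lambda=\diag(\lambda_{i})$. First I would record that in the noise-folding model a measurement along $a$ reads $y=\transpose{a}x+\transpose{a}w$ with effective noise variance $\transpose{a}\Sigma_{w}a$, so the ratio $\transpose{a}\Sigma a/\transpose{a}\Sigma_{w}a$ is scale invariant and, as argued in the body, maximized by the top generalized eigenvector $v_{1}$; averaging $m$ unit-norm repetitions in a fixed direction merely divides the effective noise by $m$.

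The second step establishes the crucial invariant: measuring along $v_{i}$ changes only the $i$-th generalized eigenvalue while leaving the eigenvectors and all other eigenvalues fixed. Substituting $a\propto v_{i}$ into the posterior update $\Sigma\mapsto\Sigma-\Sigma a\transpose{a}\Sigma/(\transpose{a}\Sigma a+\text{noise})$ and using $\Sigma v_{i}=\lambda_{i}\Sigma_{w}v_{i}$ together with $\transpose{v_{j}}\Sigma_{w}v_{i}=\delta_{ij}$, a short computation gives, after $m$ repetitions,
\[
  \Sigma'\,v_{j}=\lambda_{j}\,\Sigma_{w}v_{j}\ (j\neq i),\qquad
  \Sigma'\,v_{i}=\frac{\lambda_{i}}{m\lambda_{i}+1}\,\Sigma_{w}v_{i}.
\]
Hence the posterior covariance keeps the form $\transposeInv{P}\Lambda'P^{-1}$ with the \emph{same} $P$, so every $v_{i}$ remains an eigenvector of $\Sigma_{w}^{-1}\Sigma'$ throughout and Algorithm~\ref{alg:Gaussian-all} drives each generalized eigenvalue down independently, exactly as in the diagonal isotropic case. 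Setting the final generalized eigenvalue $\mu_{i}=\lambda_{i}/(m_{i}\lambda_{i}+1)$ equal to the target $\mu^{\ast}\coloneqq\varepsilon^{2}/(\chi_{n}^{2}(p)\norm{\Sigma_{w}})$ and solving for $m_{i}$ yields $m_{i}=\max\{0,\lceil\tfrac{\chi_{n}^{2}(p)}{\varepsilon^{2}}\norm{\Sigma_{w}}-\tfrac{1}{\lambda_{i}}\rceil\}$, the per-direction summand in \eqref{eq:Gaussian-number-non-iso}; directions with $\lambda_{i}=0$ carry no signal and are omitted. Because the directions do not interact, the greedy order is immaterial and the total count is $\sum_{i}m_{i}$.

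The last step — and the one requiring the most care — is to convert the guarantee on the generalized eigenvalues into the Euclidean stopping criterion $\norm{\Sigma}\leq\varepsilon^{2}/\chi_{n}^{2}(p)$ used both by the algorithm and by the recovery argument. The measurement dynamics naturally control $\mu_{i}$, the eigenvalue of $\Sigma_{w}^{-1}\Sigma$, whereas $\ell_{2}$-recovery of $x$ is governed by the ordinary spectral norm of $\Sigma$. I would bridge the two as follows: once $\mu_{i}\leq\mu^{\ast}$ for every $i$, expanding an arbitrary $z=\sum_{i}c_{i}v_{i}$ and using $\Sigma_{w}$-orthonormality gives $\transpose{z}\Sigma z=\sum_{i}c_{i}^{2}\mu_{i}\leq\mu^{\ast}\sum_{i}c_{i}^{2}=\mu^{\ast}\transpose{z}\Sigma_{w}z$, that is $\Sigma\preceq\mu^{\ast}\Sigma_{w}$ in the positive semidefinite order; monotonicity of $\lambda_{\max}$ under $\preceq$ then yields $\norm{\Sigma}\leq\mu^{\ast}\norm{\Sigma_{w}}=\varepsilon^{2}/\chi_{n}^{2}(p)$. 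This is precisely where the factor $\norm{\Sigma_{w}}$ in \eqref{eq:Gaussian-number-non-iso} enters, collapsing to $\sigma^{2}$ when $\Sigma_{w}=\sigma^{2}I$. With the stopping criterion met, the proof closes as for Theorem~\ref{thm:Gaussian}: writing $x-\widehat{x}=\Sigma^{1/2}\xi$ with $\xi\sim\Dnormal{0}{I}$ gives $\tnorm{x-\widehat{x}}^{2}\leq\norm{\Sigma}\tnorm{\xi}^{2}$, and since $\tnorm{\xi}^{2}\sim\chi_{n}^{2}$ the event $\tnorm{x-\widehat{x}}<\varepsilon$ has probability at least $p$ by the definition of the quantile $\chi_{n}^{2}(p)$. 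I expect the $\Sigma_{w}$-weighted bookkeeping in the second step and the $\preceq$-to-spectral-norm passage in the last step to be the only genuinely delicate points; everything else mirrors the isotropic argument.
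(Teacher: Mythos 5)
Your proposal is correct and is essentially the paper's own argument: the $\Sigma_w$-orthonormal basis that simultaneously diagonalizes the pencil $(\Sigma,\Sigma_w)$ is exactly the whitening change of basis $\Sigma_w=\transpose{F}F$ that the paper uses to reduce to the white-noise case of Theorem~\ref{thm:Gaussian}, and your final passage from $\Sigma\preceq\mu^{*}\Sigma_w$ to $\norm{\Sigma}\leq\varepsilon^{2}/\chi_{n}^{2}(p)$ matches the paper's step from $\norm{\Sigma_w^{-1}\Sigma'}$ to $\norm{\Sigma'}$. The only difference is that you carry out explicitly (the per-direction eigenvalue update invariant and the Loewner-order comparison) what the paper delegates to the earlier theorem, which makes your write-up more self-contained but not a different proof.
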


\begin{rem}
(1) Under these noise models, the posterior distribution of the signal is also Gaussian, and the measurement outcome \(y_i\) affects \emph{only its mean} and but not the covariance matrix (see
  \eqref{eq:3}). In other words, the outcome does not affect the mutual information of posterior Gaussian signal. In this sense, for Gaussian signals adaptivity brings no advantage {\it when $\Sigma$ is accurate}, as the measurements are pre-determined by the eigenspace of $\Sigma$. \yx{However, when knowledge of $\Sigma$ is inaccurate for Gaussian signals, adaptivity brings benefit as demonstrated in Section \ref{sec:eg_Gaussian}, since a sequential update of the covariance matrix incorporates new information and ``corrects'' the covariance matrix when we design the next measurement.}

(2) In (\ref{eq:covariance-eigenvector}) the eigenvalue \(\lambda\) reduces to
\(\lambda \sigma^{2} / (\beta \lambda + \sigma^{2})\)  after the first measurement. Now iterating
this we see by induction that after \(\yx{m'}\) measurements in direction
\(a\), the eigenvalue \(\lambda\) reduces to 
\(\lambda \sigma^{2} / (m' \beta \lambda + \sigma^{2})\), which is the
same as measuring once in direction \(a\) with power \(m' \beta\). Hence, measuring several times in the same direction of \(a\), and thereby splitting power
into \(\beta_{1}, \dotsc, \beta_{m'}\) for the measurements,
has the same effect as making one measurement
with total the power \(\sum_{i=1}^{m'} \beta_{i}\).

(3) Info-Greedy Sensing for Gaussian signal can be
  implemented efficiently. \yx{Note that in the algorithm we only need compute the leading eigenvector of the covariance matrix; moreover, updates of the covariance matrix and mean are  simple and iterative.} In particular, for a sparse
  $\Sigma \in \mathbb{R}^{n\times n}$ with $v$ non-zero entries, the
  computation of the largest eigenvalue and associated eigenvector can
  be implemented in $\mathcal{O}(t(n+v))$ using sparse power's method
  \cite{sparseEigenvector2011}, where $t$ is the number of power
  iterations.   
  In many high-dimensional applications, $\Sigma$ is sparse if the
  variables (entries of $x$) are not highly correlated.  Also note that
    the sparsity structure of the covariance matrix as well as the
    correlation structure of the signal entries will not be changed by
    the update of the covariance matrix. This is because
    in (\ref{eq:covariance-eigenvector}) the update only changes the
    eigenvalues but not the eigenvectors. To see why this is true, let
    $\Sigma = \sum_i \lambda_i q_i q_i^\top$ be the eigendecomposition of $\Sigma$. By saying that the
    covariance matrix is sparse, we assume that $q_i$'s are sparse and,
    hence, the resulting covariance matrix $\Sigma$ has few 
    non-zero entries. Therefore, updating the covariance
    matrix will not significantly change the
    number of non-zero entries in a covariance matrix. We demonstrate the scalability of Info-Greedy Sensing with larger examples in Section \ref{sec:eg_Gaussian}.
  %
%\end{enumerate}
\end{rem}

%\subsection{Gradient descent for general signal model} \label{sec:grad_descent}

\subsection{Gaussian mixture model (GMM)}

The probability density function of GMM
is given by 
\begin{equation}
p(x) = \sum_{c=1}^C \pi_c \mathcal{N}(\mu_c, \Sigma_c), \label{GMM_model}
\end{equation}
where $C$ is the number of classes,
and $\pi_c$ is the probability of samples from class $c$. 
Unlike Gaussian, mutual information of GMM cannot be explicitly written. However, for GMM signals a gradient descent approach that works for an arbitrary signal model can be used as outlined in \cite{CarsonChenRodrigues2012}. The derivation uses the fact that the gradient of the conditional mutual information with respect to $a_i$ is a linear transform of the minimum mean square error (MMSE)  matrix \cite{PalomarVerdu2006, PayaroPalomar2009}. 
\yx{Moreover, the gradient descent approach for GMM signals exhibits structural properties that can be exploited to reduce the computational cost for evaluating the MMSE matrix, as outlined in \cite{MChenThesis,CarsonChenRodrigues2012}. 
For completeness we include the detail of the algorithm here, as summarized in Algorithm \ref{alg:GMM_heuristics} and the derivations are given in Appendix \ref{app:GMM}\footnote{\yx{Another related work is \cite{RennaCalderbank2014} which studies the behavior of minimum mean sure error (MMSE) associated with the reconstruction of a signal drawn from a GMM as a function of the properties of the linear measurement kernel and the Gaussian mixture, i.e. whether the MMSE converges or does not converge to zero as the noise. }}.}

%\subsubsection{Greedy heuristic for GMM}

An alternative heuristic for sensing GMM is the so-called \emph{greedy heuristic}, \yx{which is also mentioned in \cite{CarsonChenRodrigues2012}}. The heuristic picks the Gaussian component with the highest posterior $\pi_c$ at that moment, and chooses the next measurement $a$ to be its eigenvector associated with the maximum eigenvalue, as summarized in Algorithm \ref{alg:GMM_heuristics}. The greedy heuristic is not Info-Greedy, but it can be implemented more efficiently compared to the gradient descent approach. 
The following theorem establishes a simple upper bound on the number of required measurements to recover a GMM signal using the greedy heuristic with small error. 
The analysis is based on the well-known
multiplicative weight update method (see e.g.,
\cite{arora2012multiplicative}) and utilizes a simple
reduction argument showing that when the variance of every component has been reduced sufficiently to ensure a low error recovery with probability \(p\), we can learn (a mix of) the right component(s) with few extra
measurements. 

\begin{thm}[Upper bound on $m$ of greedy heuristic algorithm for GMM]\label{thm:GMM}
Consider a GMM signal $x$ parameterized in (\ref{GMM_model}). Let \(m_c\) be the required number of
measurements (or power) to ensure \(\tnorm{x - \widehat x} < \varepsilon\) 
with probability \(p\) for a Gaussian signal
\(\Dnormal{\mu_c}{\Sigma_c}\) corresponding to component \(c\) for all \(c \in
C\). Then we need at most
\[\left(\sum_{c \in C} m_c \right ) + \Theta(\frac{1}{\tilde{\eta}} \ln \card{C})\]
measurements  (or power) to ensure \(\tnorm{x - \widehat x} < \varepsilon\) when sampling from
the posterior distribution of \(\pi\) with probability  \(p (1- \tilde{\eta} - o(1))\).
\end{thm}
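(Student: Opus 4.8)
The plan is to split the analysis into two phases mirroring the two terms in the bound, and to treat the \(\card{C}\) components as experts in a multiplicative weight update (MWU) scheme whose weights are the posterior probabilities \(\pi_{c}\). The greedy heuristic interleaves measurements that shrink a component's covariance with measurements that refine the posterior over components, and the key is to account for these two effects separately so that the costs add rather than compound.

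First I would handle the variance-reduction phase. Whenever the greedy heuristic selects component \(c\), it measures along the leading eigenvector of \(\Sigma_{c}\) and thereby shrinks that component's covariance exactly as in the single-Gaussian analysis, via the update \eqref{eq:covariance-eigenvector}. By the per-component guarantees of Theorem~\ref{thm:Gaussian} (and its after-measurement and colored variants), driving component \(c\) down to the point where \(\Dnormal{\mu_{c}}{\Sigma_{c}}\) alone would yield \(\tnorm{x - \widehat x} < \varepsilon\) with probability \(p\) costs at most \(m_{c}\) measurements. Since \eqref{eq:covariance-eigenvector} changes only the eigenvalues of the selected component and leaves the others untouched, the reductions on distinct components do not interfere, so the total work to drive every component's covariance below the required threshold is at most \(\sum_{c \in C} m_{c}\), independently of the order in which the greedy heuristic visits them.

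Next I would handle the identification phase. Once all covariances are small, each component is effectively concentrated near its mean \(\mu_{c}\), so the remaining task is to decide which \(\mu_{c}\) the signal sits near. I would model each additional measurement as a round of MWU: the outcome updates the posterior \(\pi_{c}\) multiplicatively through the Gaussian likelihood, boosting the weight of components consistent with the observation and damping the rest. Writing \(\tilde{\eta}\) for the per-round multiplicative advantage enjoyed by the correct component (governed by the separation of the means relative to the now-small variances), the standard MWU potential argument (see \cite{arora2012multiplicative}) shows that the posterior concentrates on the correct component(s) after \(\Theta(\frac{1}{\tilde{\eta}}\ln\card{C})\) rounds. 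Summing the two phases yields the claimed \(\left(\sum_{c \in C} m_{c}\right) + \Theta(\frac{1}{\tilde{\eta}}\ln\card{C})\) measurements (or power, in the fixed-noise variant).

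Finally I would account for the success probability. The factor \(p\) is inherited directly from the per-component recovery guarantee, while the \((1 - \tilde{\eta} - o(1))\) correction tracks the residual chance that MWU leaves non-negligible posterior mass on an incorrect component; the \(\tilde{\eta}\) loss is precisely the per-round slack traded off against the round count, and the \(o(1)\) collects lower-order errors from the finite-variance collapse. The main obstacle, I expect, is making the MWU reduction rigorous: one must show that the Gaussian posterior update genuinely behaves like a multiplicative weight update with a controllable rate \(\tilde{\eta}\), and carefully quantify how the small but nonzero residual variances perturb both the per-round gain and the final probability. The clean separation of the two phases — measuring for variance reduction versus measuring for identification — is what makes the additive decomposition possible, and justifying that this separation survives the greedy heuristic's interleaving of the two kinds of measurements is the delicate point.
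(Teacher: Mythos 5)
Your proposal follows essentially the same route as the paper's (sketch of) proof: the same additive two-phase decomposition into a variance-reduction phase costing \(\sum_{c\in C} m_{c}\) via the single-Gaussian guarantees, followed by a component-identification phase in which the posterior \(\pi_{c}\) is treated as a multiplicative weight update on \(\card{C}\) experts, invoking the hedge analysis of \cite{arora2012multiplicative} to get the \(\Theta(\frac{1}{\tilde{\eta}}\ln\card{C})\) term and the \(p(1-\tilde{\eta}-o(1))\) success probability. The delicate points you flag (rigorously matching the Gaussian likelihood update to an MWU with rate \(\tilde{\eta}\), and the interaction between the two phases) are precisely the details the paper also leaves at the level of a sketch.
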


\begin{rem}
\yx{In the high noise case, i.e., when SNR is low, Info-Greedy measurements can be approximated easily. Let $c_0$ denote the random variable indicating the class where the signal is sampled from. Then $\mutualInfo{x}{y} = \mutualInfo[c]{x}{y} + \mutualInfo{x}{c} - \underbrace{\mutualInfo[y]{x}{c}}_{0} =\mutualInfo{x}{c} + \sum_c \pi_c\log(1+\transpose{a} \Sigma_c a/\sigma^2 )/2\propto \sum_c \pi_c \transpose{a} \Sigma_c a /\sigma^2  = \transpose{a} (\sum_c \pi_c \Sigma_c)a/\sigma^2$. Hence, the Info-Greedy measurement should be the leading eigenvector of the average covariance matrix with the posterior weights.}
\end{rem}

\begin{algorithm}
\caption{Gradient descent for mutual information maximizing measurement}
\begin{algorithmic}[1]
\REQUIRE initial $a_i$, step-size $\mu$, tolerance $\eta > 0$
\REPEAT
\STATE generate $c_0 \sim \widetilde{\pi}_c$, and $x_0 \sim \mathcal{N}(\mu_{c_0} + \Sigma_c D_{i-1}(y_{-i} - D_{i-1} \mu_c)/\sigma^2, \Sigma_{c}
  - \Sigma_{c} \transpose{D_{i-1}} D_{i-1} \Sigma_{c} / \sigma^{2})$.
\STATE measure $y_0 = \transpose{a_{i}} x_0 + w_i$
\STATE evaluate $g(y_0)$ using \eqref{g_def}
 \STATE estimate $E_i \approx \frac{1}{N} \sum_{j=1}^N \tilde{p}(y_j) g(y_j)$. 
 \STATE evaluate $h_i(a_i) \triangleq \partial \mutualInfo[y_j, j < i]{x}{y_i}/\partial a_i$ using  \eqref{gradient}
 \STATE update $a_i \leftarrow a_i + \mu h_i(a_i)$
 \STATE evaluate approximated mutual information using (\ref{MI})
   \UNTIL{increase in mutual information $\leq \eta$}
    \RETURN solution measurement vector $a_i$
\end{algorithmic}
\label{alg:a_grad}
\end{algorithm}

\begin{algorithm}
\caption{Update GMM distributional parameters}
\begin{algorithmic}[1]
\REQUIRE mean \(\{\mu_c\}\), covariance \(\{\Sigma_c\}\), number of GMM components $C$, distribution $\{\pi_c\}$, standard deviation \(\sigma\) of noise, matrix contains vectors thus far $D$ and measurements acquired thus far $\widetilde{y}$ 
\FOR{$c = 1, \ldots, C$}
        \STATE \(\mu_c \leftarrow \mu_c
        + \Sigma_c a
        (\transpose{a} \Sigma_c a + \sigma^{2})^{-1}
        (y - \transpose{a}\mu_c )\) \COMMENT{mean}
        \STATE \(\Sigma_c \leftarrow \Sigma_c
        - \Sigma_c a
        (\transpose{a} \Sigma_c a + \sigma^{2})^{-1}
        \transpose{a}\Sigma_c\) \COMMENT{covariance}
        \STATE \(\pi_c \leftarrow\pi_c \mathcal{N}(\widetilde{y}, D \mu_c 
      D\Sigma_c  \transpose{D} + \sigma^2)\)     
        \ENDFOR
    \STATE  \(\pi_c \leftarrow \pi_c/\sum_{c=1}^C \pi_c\) \COMMENT{normalizing distribution}
\RETURN updated parameters $\{\mu_c, \Sigma_c, \pi_c\}$
\end{algorithmic}
\label{alg:GMM_update}
\end{algorithm}

\begin{algorithm}
  \caption{Info-Greedy Sensing for GMM using greedy heuristic and gradient descent approach
  %using deterministic, randomize or greedy heuristics
  }
  \begin{algorithmic}[1]
    \REQUIRE mean \(\{\mu_c\}_{c=1}^{C}\), covariance \(\{\Sigma_c\}_{c=1}^C \), initial distribution \(\{\pi_c\}_{c=1}^{C}\)
      standard deviation \(\sigma\) of noise,
      %accuracy \(\varepsilon\),
      probability of correctness \(p\)
      \STATE Initialize \(\mu^{(0)}_c = \mu_c\), \(\Sigma^{(0)}_c = \Sigma_c\), \(\pi_c^{(0)} = \pi_c\)
    \REPEAT 
                    \IF{greedy heuristic}
                    %\STATE \(z \leftarrow \mbox{mod}(i, C) + 1\)
         %\COMMENT{deterministic approach}
           % \STATE \(z \leftarrow c\) with probability \(\pi_c^{(i-1)} \) \COMMENT{random approach}
             
            \STATE \(z \leftarrow \arg\max_c \pi_c^{(i-1)} \)  
        
             \STATE \(a_{i} \leftarrow\)
        largest eigenvector of \(\Sigma_{z}^{(i-1)}\)
        \ELSIF{gradient decent approach}
         \STATE \(a_{i} \leftarrow\) solved from Algorithm \ref{alg:a_grad}
        \ENDIF
        
          \STATE \(y_i = \transpose{a_i} x + w_i \) \COMMENT{measure}
          
          \STATE update parameters \(\mu_c^{(i)}, \Sigma_c^{(i)}, \pi_c^{(i)}\) using Algorithm \ref{alg:GMM_update} 
  
    \UNTIL{reach maximum iteration %or $\mutualInfo[y_j, j < i]{x}{y_i} \leq \eta$
    }
    \RETURN signal estimate $c^* = \arg\max_c \pi_c^{(I)}$, \(\widehat{\mu} = \mu^{(I)}_{c^*}\)
  \end{algorithmic}
  \label{alg:GMM_heuristics}
\end{algorithm}

\section{Sparse measurement vector} \label{sec:sparse}

In various applications, we are interested in finding a sparse measurement vector $a$. With such requirement, we can add a cardinality constraint on $a$ in the Info-Greedy Sensing formulation: $\|a\|_0 \leq k_0$, where $k_0$ is the number of non-zero entries we allowed for $a$ vector. This is a non-convex integer program with non-linear cost function, which can be solved by outer approximation \cite{DuranGrossmannOuterApprox1986, schrijver1986theory}. The idea of outer approximation is to generate a sequence of cutting planes to approximate the cost function via its subgradient and iteratively include these cutting planes as constraints in the original optimization problem. In particular, we initialize by solving the following optimization problem
\begin{equation}
\begin{array}{rl}
\underset{a, r, z}{\mbox{maximize}} & z \\
\mbox{subject to} & \sum_{i=1}^n r_i \leq k_0 \\
& a_i \leq r_i, \quad -a_i \leq r_i \\
& 0 \leq z \leq c, \quad r_i \in \{0, 1\}, i = 1,\ldots,n \\
& a \in \mathbb{R}^n, \quad z \in \mathbb{R},
\end{array}
\label{sparse_a_solution}
\end{equation}
where $r$ and $z$ are introduced auxiliary variables, and $c$ is an user specified upper bound that bounds the cost function over the feasible region. The constraint of the above optimization problem can be casted into matrix vector form as follows:
\[
F_0 \triangleq  \left[
\begin{array}{c|c|c}
  1_{1\times n} & 0_{1\times n}  & 0\\ \hline
  -I_{n}        & I_n   & 0_{n\times 1} \\ \hline
   -I_{n}        & -I_n   & 0_{n\times 1}\\ \hline
   0_{1\times n} & 0_{1\times n} & 1 \\ \hline
   0_{1\times n} & 0_{1\times n}  & -1 
  %0 & \raisebox{-15pt}{{\huge\mbox{{$A$}}}} \\[-4ex]
  %\vdots & \\[-0.5ex]
  %0 &
\end{array}
\right], \quad
g_0 \triangleq \begin{bmatrix}
k_0 \\ 0_{2n\times 1} \\ c \\ 0
\end{bmatrix}
\]
such that
$
F_0 \transpose{\begin{bmatrix}
r & a & z
\end{bmatrix}}
 \leq g_0.
$
The mixed-integer linear program formulated in (\ref{sparse_a_solution}) can be solved efficiently by a standard software such as GUROBI\footnote{http://www.gurobi.com}.  
In the next iteration, solution $a_*$ to this optimization problem will be used to generate a new cutting plane, which we include in the original problem by appending a row to $F_\ell$ and adding an entry to $g_\ell$ as follows
\begin{align}
F_{\ell + 1} & = \left[\begin{array}{c}
F_\ell \\\hline 0 \quad -\transpose{(\nabla f(a_*))} \quad 1
\end{array}\right],\\
g_{\ell + 1} &=\left[\begin{array}{c}
g_\ell \\\hline f(a_*) - \transpose{a_*} \nabla f(a_*)
\end{array}\right],
\end{align}
where $f$ is the non-linear cost function in the original problem. For Gaussian signal $x$, the cost function and its gradient take the form of:
\begin{align}
f(a) = \frac{1}{2} \log(\frac{\transpose a \Sigma a}{\sigma^2} + 1),
\quad  \nabla f(a) = \frac{1}{\transpose a \Sigma a + \sigma^2}\Sigma a.
\end{align}
By repeating iterations as above, we can find a measurement vector with sparsity $k_0$ which is approximately Info-Greedy.

\section{Numerical examples} \label{sec:egs}

\subsection{Simulated examples}

\subsubsection{Low-rank Gaussian model} \label{sec:eg_Gaussian} 
First, we examine the performance of Info-Greedy Sensing for Gaussian signal. The dimension of the signal is $n = 100$, and we set the probability of recovery $p = 0.95$, the noise standard deviation $\sigma = 0.01$. The signal mean vector $\mu = 0$, where the covariance matrix $\Sigma$ is  generated as $\Sigma = \mathcal{T}_{0.7}(\Sigma_0 \transpose\Sigma_0/\|\Sigma_0 \transpose\Sigma_0\|_2)$, $\Sigma_0 \in \mathbb{R}^{n\times n}$ has each entry i.i.d. $\mathcal{N}(0, 1)$, and the operator $\mathcal{T}_{0.7}(X)$ thresholds eigenvalues of a matrix $X$ that are smaller than 0.7 to be zero.  The error tolerance $\epsilon = 0.1$ (represented as dashed lines in the figures). For the white noise case, we set $w \sim \mathcal{N}(0, \sigma^2 I)$, and for the colored noise case, $w\sim \mathcal{N}(0, \Sigma_w)$ and the noise covariance matrix $\Sigma_w$ is generated randomly as $\transpose{\tilde{\Sigma}_0} \tilde{\Sigma}_0/\| \transpose{\tilde{\Sigma}}_0 \tilde{\Sigma}_0\|_2$ for a random matrix $\tilde{\Sigma}_0$ with entries i.i.d. $\mathcal{N}(0, 1)$. The number of measurements is determined from Theorem \ref{thm:Gaussian} and Theorem \ref{thm:Gaussian-fixed-noise}. We run the algorithm over 1000 random instances. Fig. \ref{Fig:comp_white_color_Gaussian} demonstrates the ordered recovery error $\|x - \widehat x\|_2$, as well as the ordered number of measurements calculated from the formulas, for the white and colored noise cases, respectively. Note that in both the white noise and colored noise cases, the errors for Info-Greedy Sensing can be two orders of magnitude lower than the errors obtained from measurement using Gaussian random vectors, and the errors fall below our desired tolerance $\varepsilon$ using the theoretically calculated $m$.

When the assumed covariance matrix for the signal $x$ is equal to its true covariance matrix, Info-Greedy Sensing is identical to the batch method \cite{CarsonChenRodrigues2012} (the batch method measures using the largest eigenvectors of the signal covariance matrix). However, when there is a mismatch between the two, Info-Greedy Sensing outperforms the batch method due to adaptivity, as shown in Fig. \ref{Fig:mismatch}. \yao{For Gaussian signals, the complexity of the batch method is $\mathcal{O}(n^3)$ (due to eigendecomposition), versus the complexity of Info-Greedy Sensing algorithm is on the order of $\mathcal{O}(t m n^2)$ where $t$ is the number of iterations needed to compute the eigenvector associated with the largest eigenvalue (e.g., using the power method),  and $m$ is the number of measures which is typically on the order of $k$. }

We also try larger examples. Fig. \ref{Fig:Gaussian_n1000} demonstrates the performance of Info-Greedy Sensing for a signal $x$ of dimension 1000 and with dense and low-rank $\Sigma$ (approximately $5\%$ of non-zero eigenvalues). Another interesting case is shown in Fig. \ref{Fig:GaussianLarge}, where $n = 5000$ and $\Sigma$ is rank 3 and very sparse: only about $0.0003\%$ of the entries of $\Sigma$ are non-zeros. In this case Info-Greedy Sensing is able to recover the signal with a high precision using only $3$ measurements. This shows the potential value of Info-Greedy Sensing for big data.

\begin{figure}[h!]
\begin{center}
\begin{tabular}{cc}
\includegraphics[width = 0.46\linewidth]{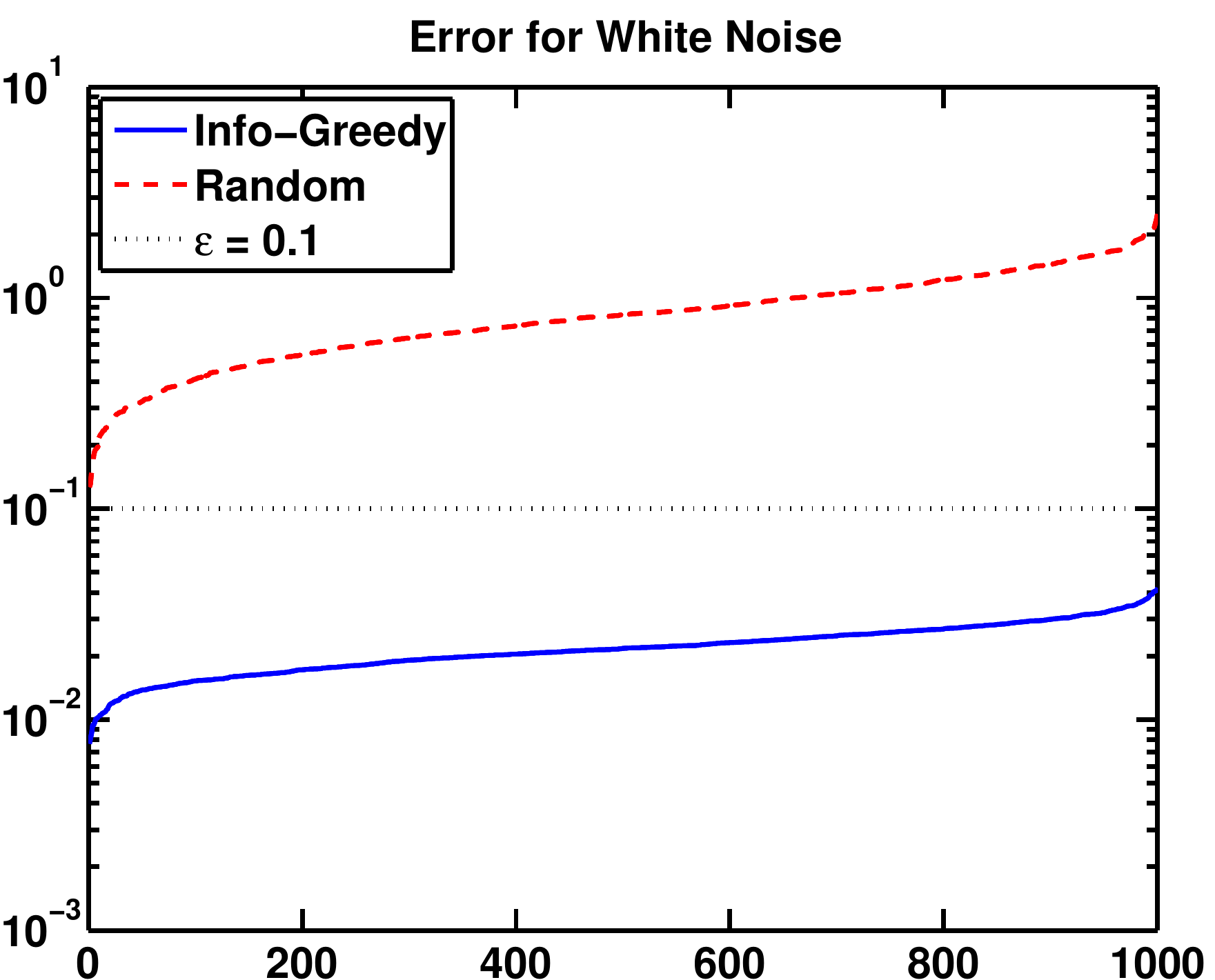} &
\includegraphics[width = 0.46\linewidth]{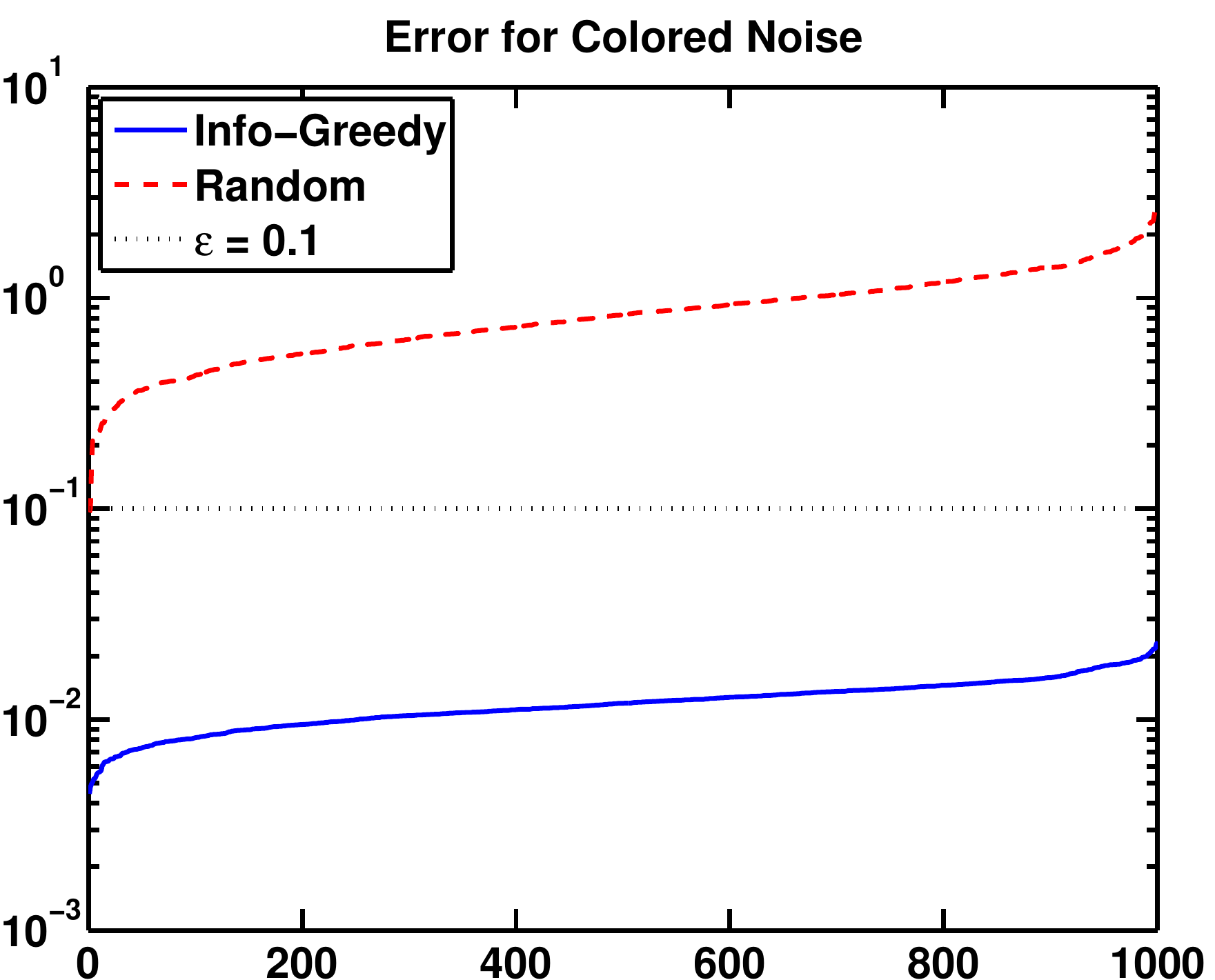} \\
(a) & (b) \\
\includegraphics[width = 0.45\linewidth]{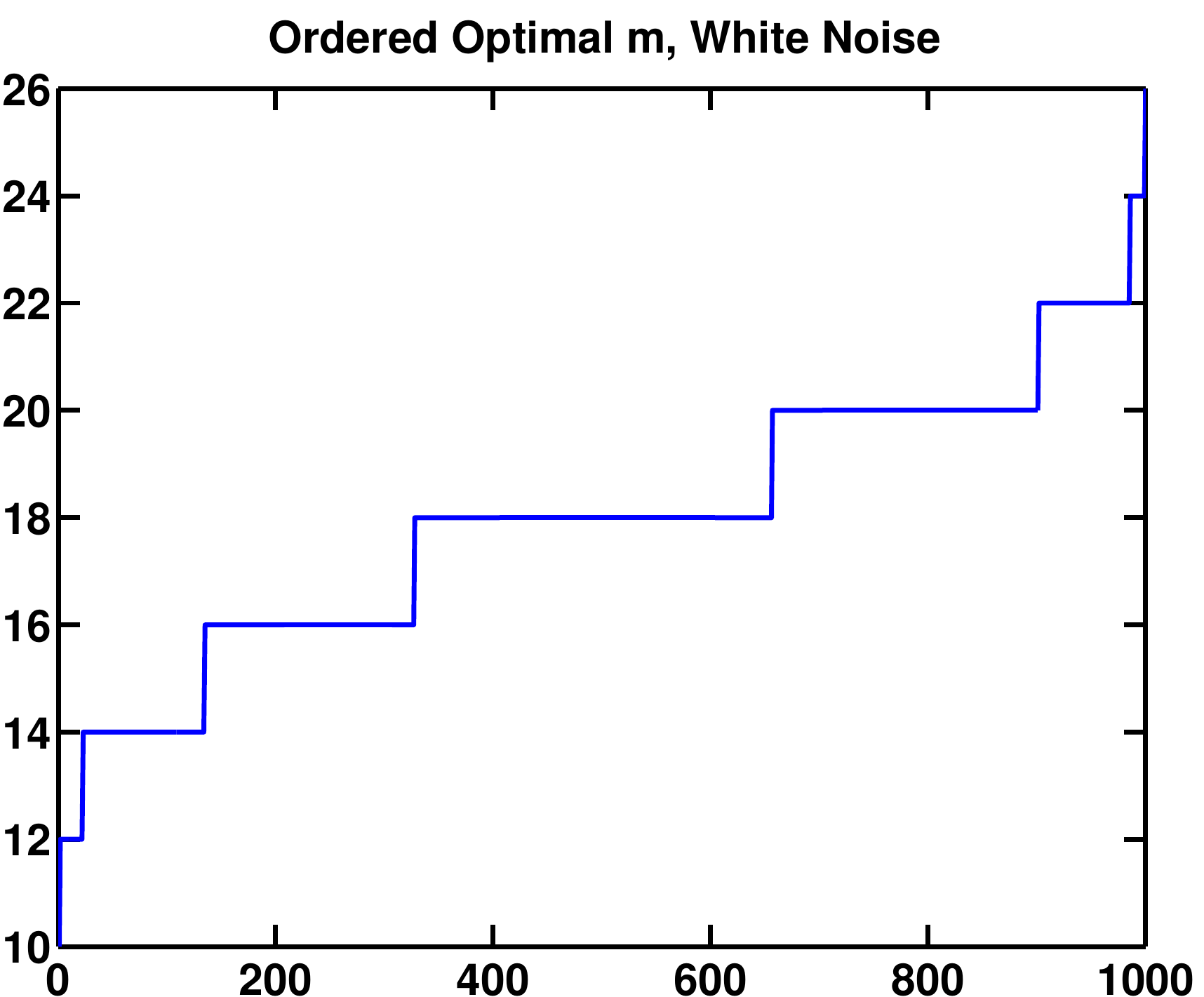} &
\includegraphics[width = 0.45\linewidth]{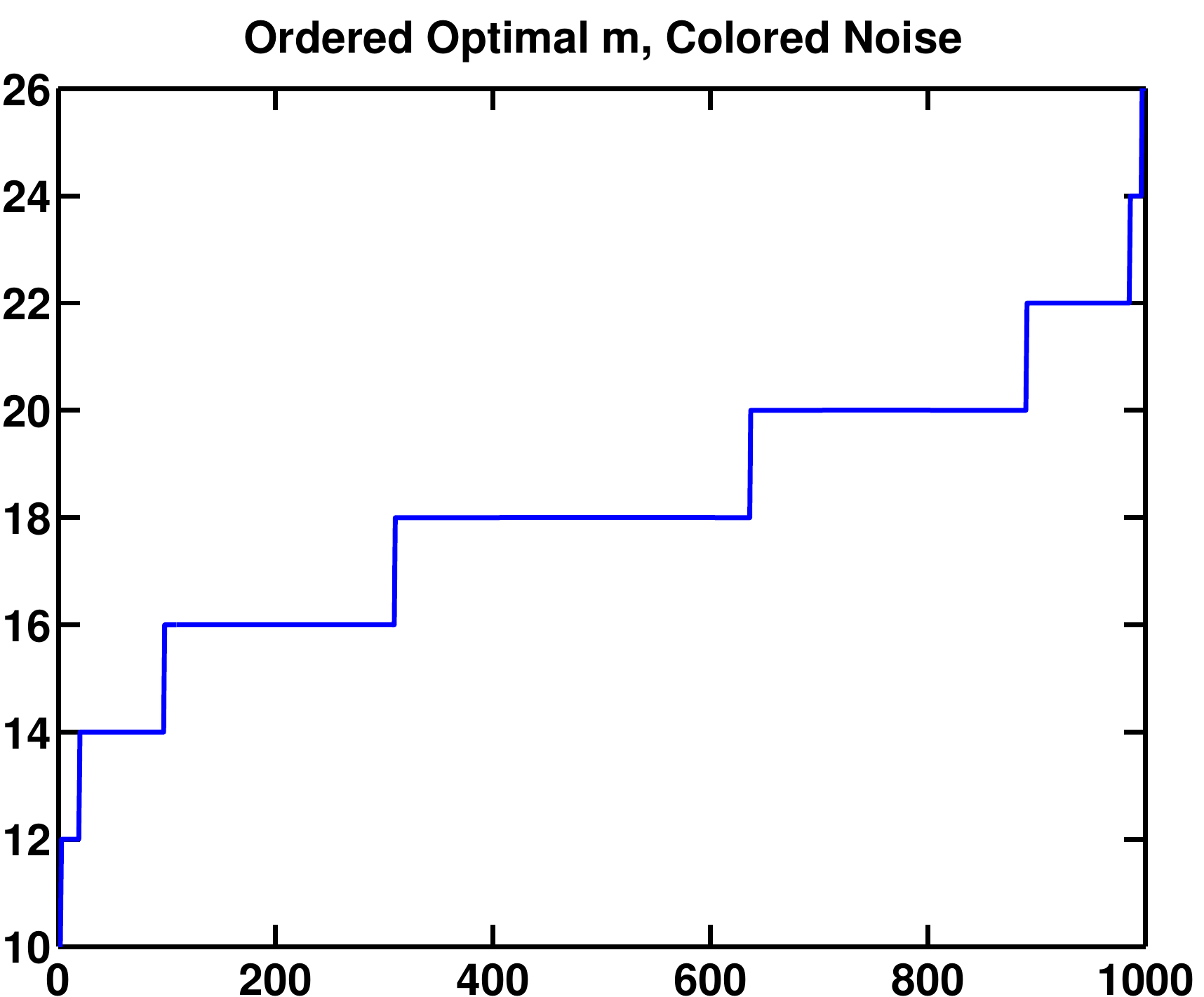} \\
(c) & (d)
\end{tabular}
\end{center}
\caption{Sensing a low-rank Gaussian signal of dimension $n=100$ and with about $70\%$ eigenvalues of $\Sigma$ zero: (a) and (c) compare recovery error $\|x - \widehat x\|_2$ for the Info-Greedy Sensing and random sensing $A$, in the presence of white noise added after the measurement, and colored noise added prior to the measurement (``noise folding''), respectively; (c) and (d) show ordered number of measurements for Info-Greedy Sensing in the two cases. \yx{Info-Greedy Sensing and batch method perform identical in this case.}}
\label{Fig:comp_white_color_Gaussian}
\end{figure}

\begin{figure}
\begin{center}
\includegraphics[width = 0.45\linewidth]{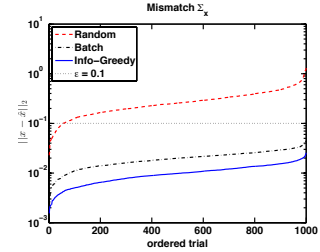}
\end{center}
\caption{\yx{Sensing a low-rank Gaussian signal of dimension $n= 500$ and about $5\%$ of the eigenvalues are non-zero, when there is mismatch between the assumed covariance matrix and true covariance matrix: ${\Sigma}_{\rm , assumed} = {\Sigma}_{\rm , true} + e \transpose{e}$, where $e\sim \mathcal{N}(0, I)$, and using 20 measurements. The batch method measures using the largest eigenvectors of ${\Sigma}_{\rm , assumed}$, and the Info-Greedy Sensing updates ${\Sigma}_{\rm , assumed}$ in the algorithm. Info-Greedy Sensing is more robust to mismatch than the batch method.}}
\label{Fig:mismatch}
\end{figure}

\begin{figure}
\begin{center}
\begin{tabular}{cc}
\includegraphics[width = 0.45\linewidth]{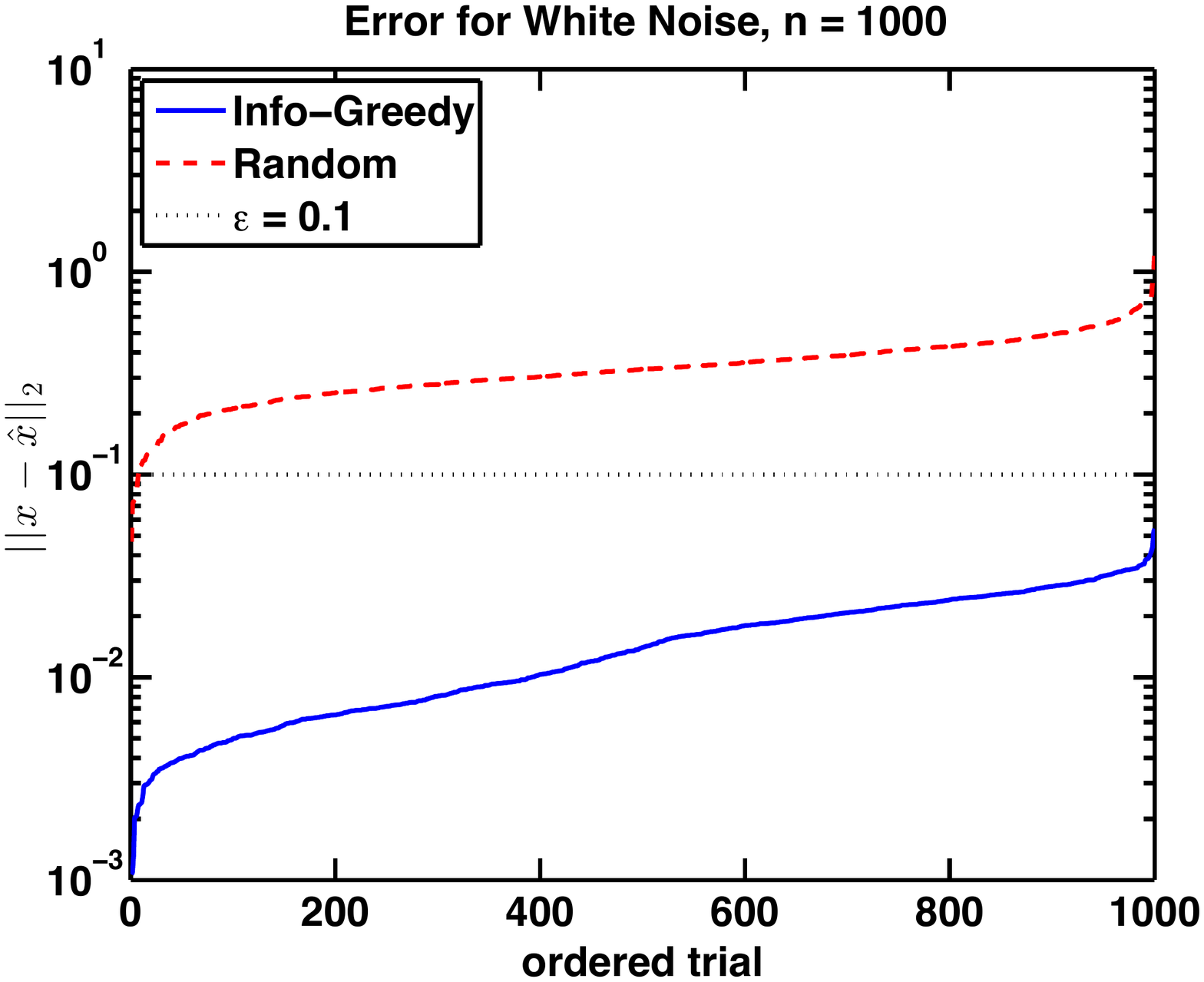} &
\includegraphics[width = 0.45\linewidth]{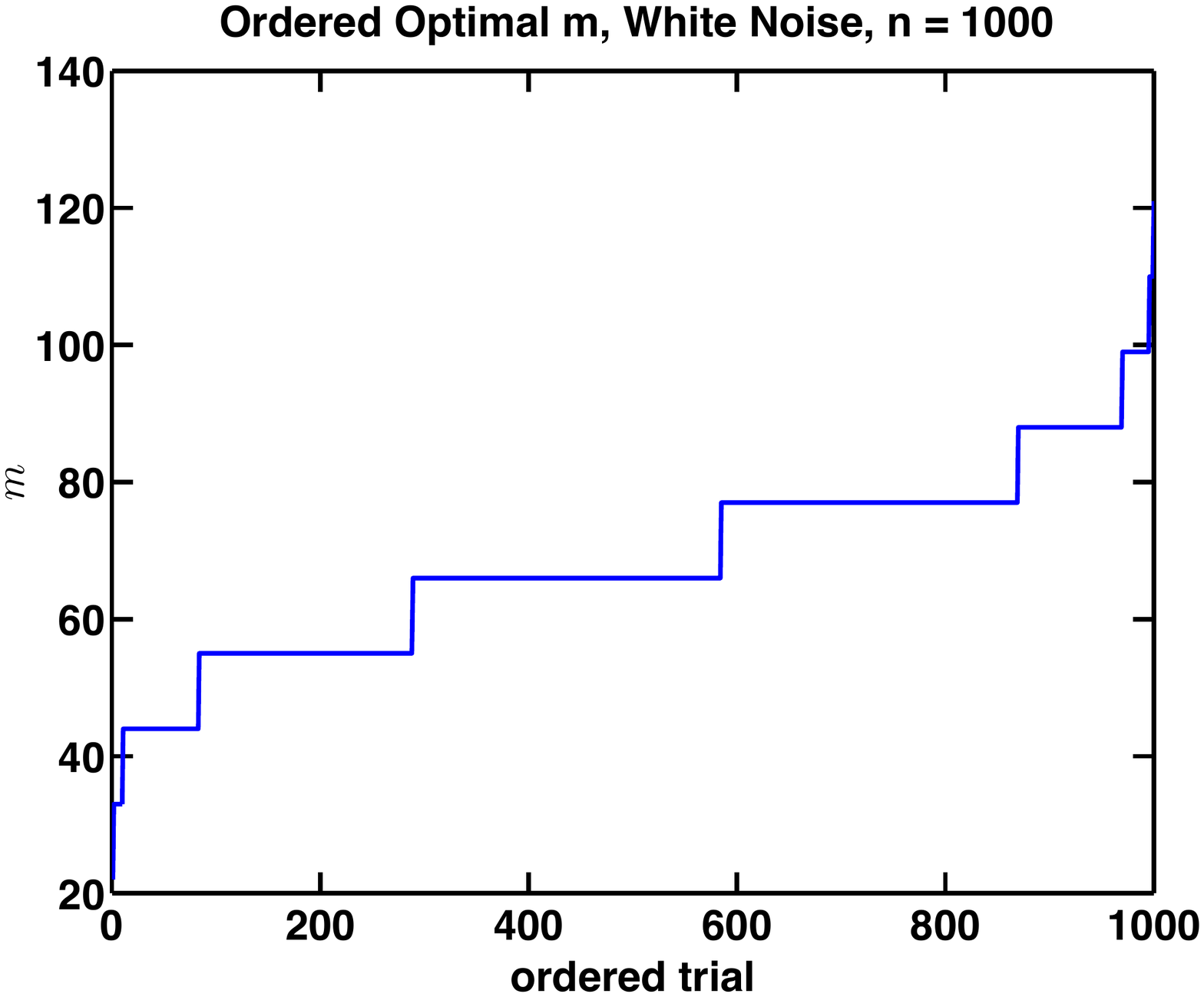}
\end{tabular}

\end{center}
\caption{Sense a low-rank Gaussian signal of dimension $n =1000$ and about $5\%$ eigenvalues of $\Sigma$ are non-zero. Info-Greedy Sensing has two orders of magnitude improvement over the random projection. 
}
\label{Fig:Gaussian_n1000}
\end{figure}

\begin{figure}
\begin{center}
\includegraphics[width = 0.45\linewidth]{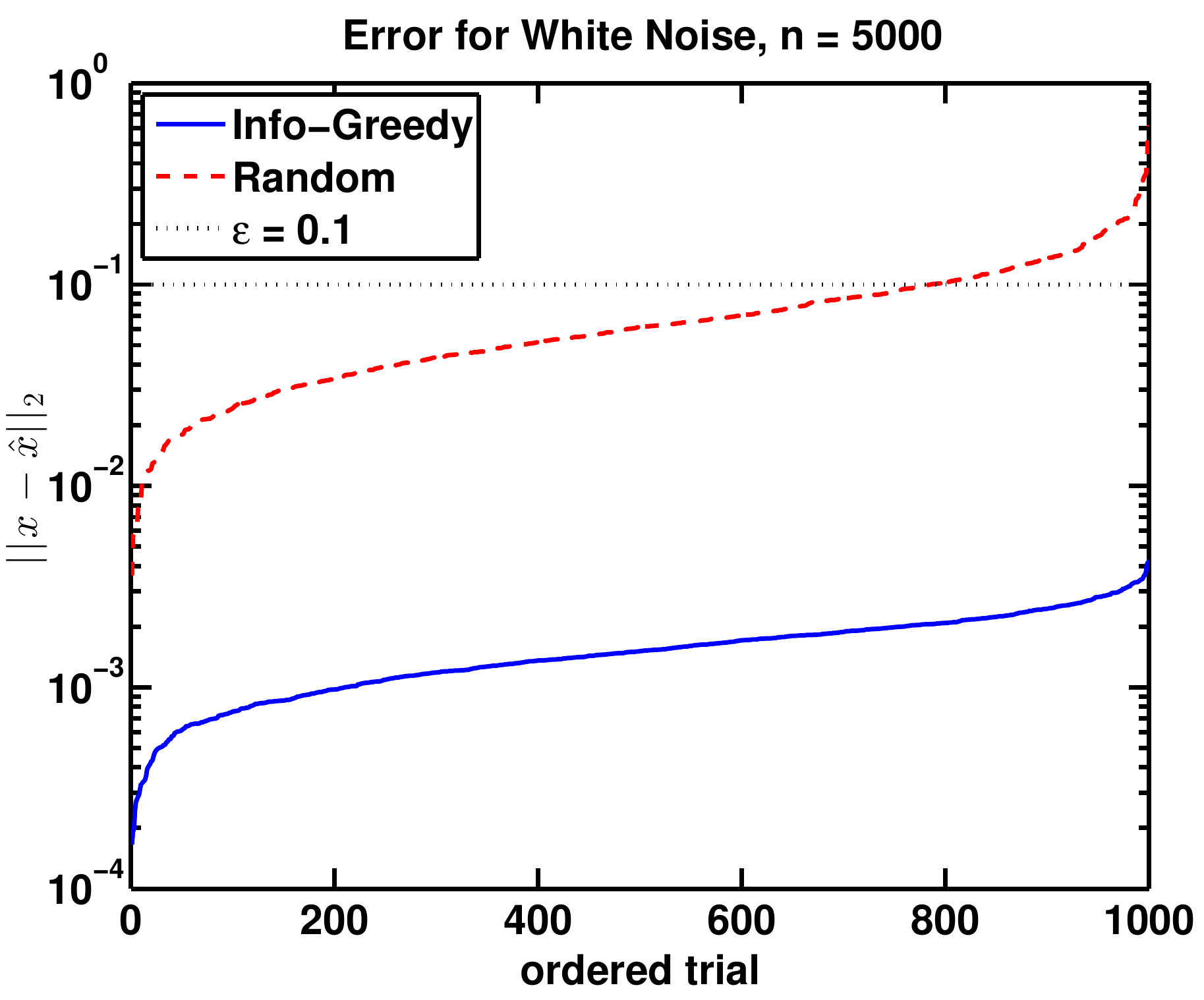} 

\end{center}
\caption{Sense a Gaussian signal of dimension $n = 5000$. The covariance matrix is low-rank and sparse: only $0.0003\%$ of entires $\Sigma$ are non-zero and the rank is 3. Info-Greedy Sensing has two orders of magnitude improvement over the random projection. The number of measurements is 3 as calculated through  (\ref{eq:Gaussian-number-fix-noise}).} 
\label{Fig:GaussianLarge}
\end{figure}

\subsubsection{Low-rank GMM model}

In this example we consider a GMM model with $C = 3$ components, and each Gaussian component is generated as a single Gaussian component described in the previous example Section \ref{sec:eg_Gaussian} ($n = 100$ and $\sigma = 0.01$). The true prior distribution is $\pi = (0.3, 0.2, 0.5)$ for the three components (hence each time the signal $x$ is draw from one component with these probabilities), and the assumed prior distribution for the algorithms is uniform: each component has probability $1/3$. The parameters for the gradient descent approach are: step size $\mu = 0.2$ and the error tolerance to stop the iteration $\eta = 0.01$. Fig. \ref{fig:MI} demonstrates the estimated cumulative mutual information and mutual information in a single step, averaged over 100 Monte Carlo trials, and the gradient descent based approach has higher information gain than that of the greedy heuristic, as expected. Fig. \ref{fig:errGMM} shows the ordered errors for the batch method based on mutual information gradient \cite{CarsonChenRodrigues2012}, the greedy heuristic versus gradient descent approach, when $m = 11$ and $m = 20$, respectively. Note that Info-Greedy Sensing approaches (greedy heuristic and gradient descent) outperform the batch method due to adaptivity, and that the simpler greedy heuristic  performs fairly well compared with the gradient descent approach. \yao{For GMM signals, the complexity of the batch method is $\mathcal{O}(C n^3)$ (due to eigendecomposition of $C$ components), versus the complexity of Info-Greedy Sensing algorithm is on the order of $\mathcal{O}(C t m n^2)$ where $t$ is the number of iterations needed to compute the eigenvector associated with the largest eigenvector (e.g., using the power method),  and $m$ is the number of measures which is typically on the order of $k$. }

\begin{figure}[h!]
\begin{center}
\begin{tabular}{cc}
\includegraphics[width = 0.45\linewidth]{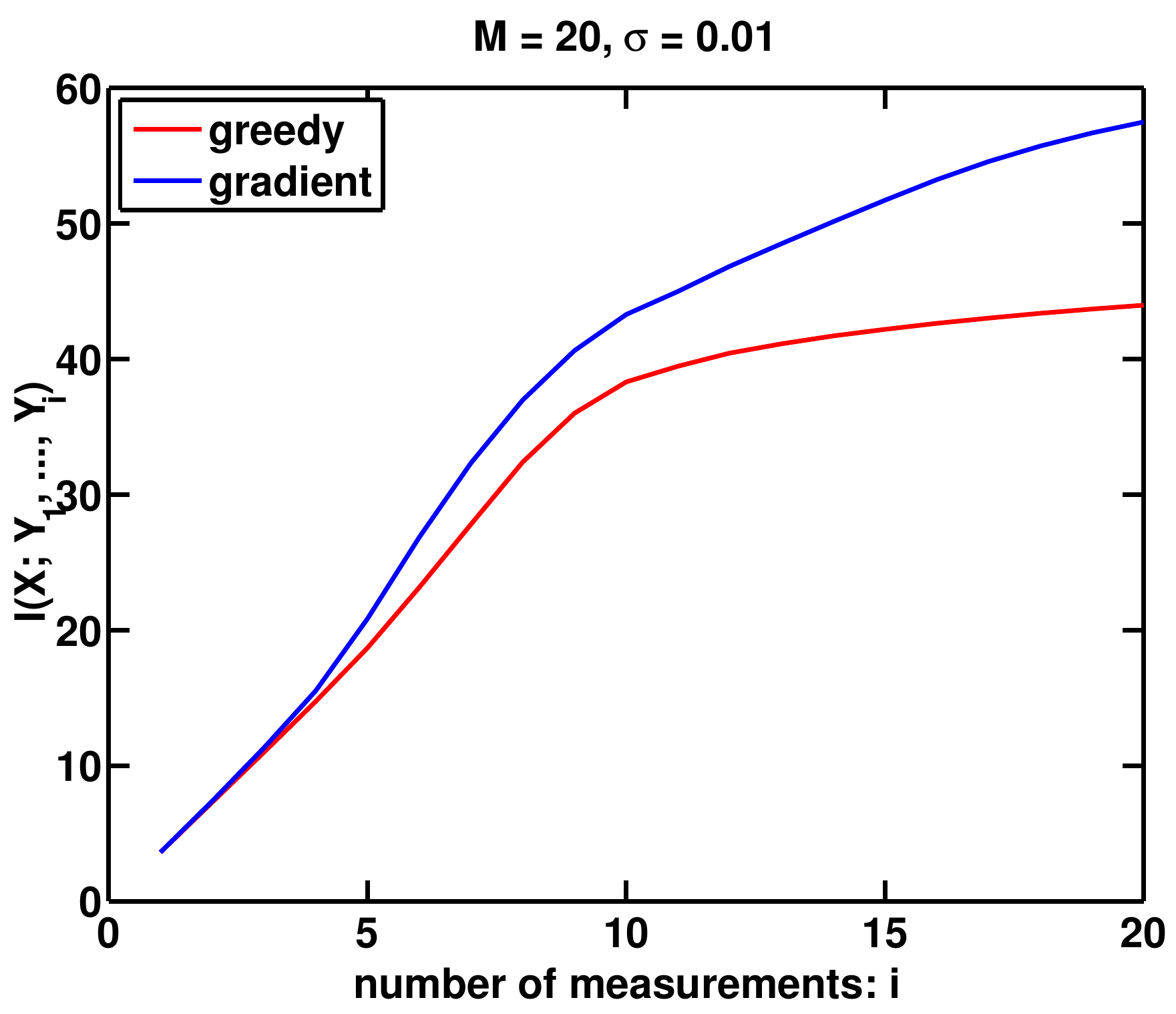} &
\includegraphics[width = 0.45\linewidth]{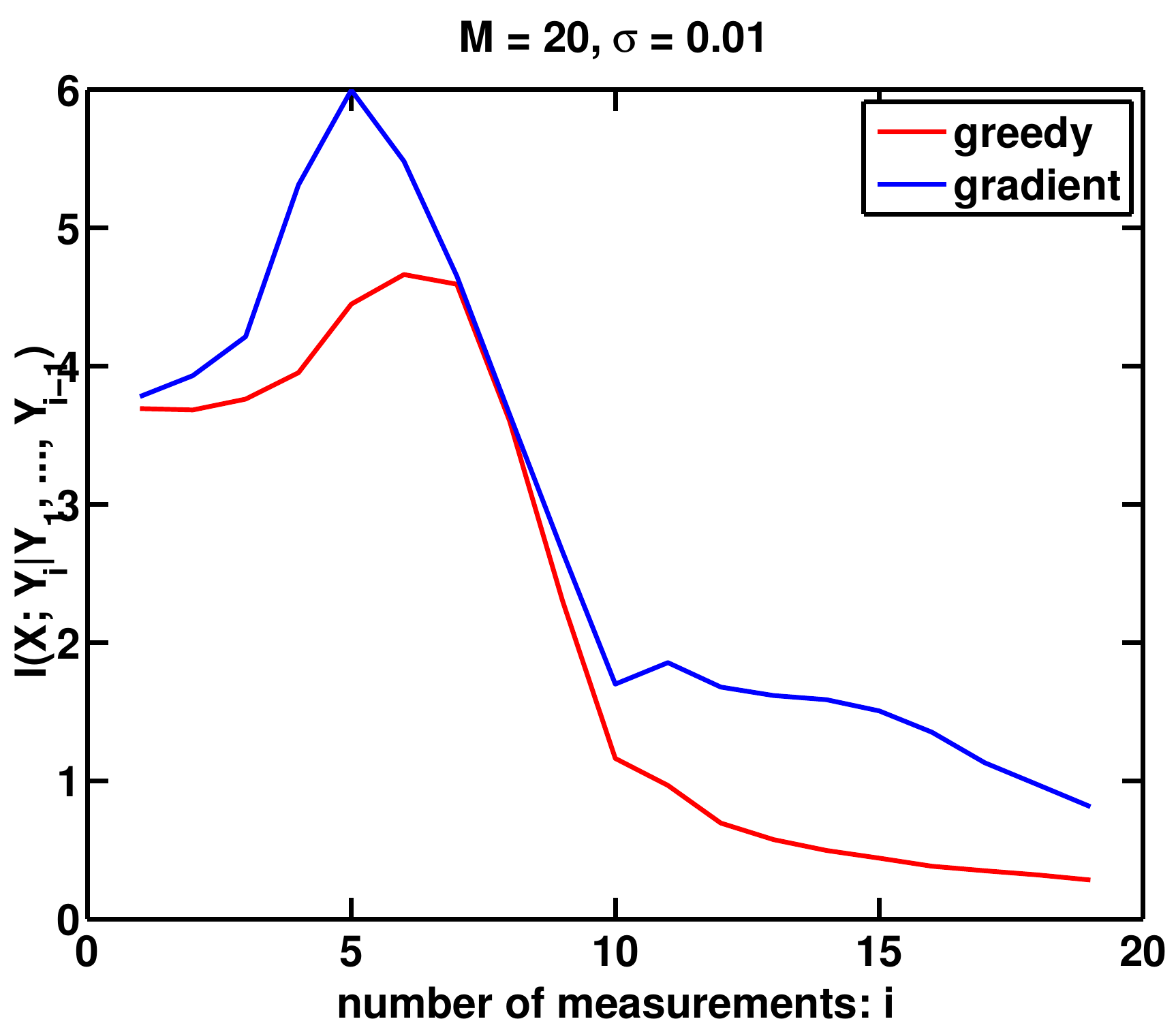} \\
(a) & (b)
\end{tabular}
\end{center}
\caption{Sensing a GMM signal: comparison of greedy heuristic and the gradient descent approach in terms of (a) mutual information $\mathbb{I}[x; y_1, \ldots, y_i]$ over number of measurements $i$, average over 100 Monte Carlo trials; (b) $\mathbb{I}[x; y_i|y_1, \ldots, y_i]$ over number of measurements $i$, averaged over 100 Monte Carlo trials. }
\label{fig:MI}
\end{figure}

\begin{figure}[h!]
\begin{center}
\begin{tabular}{c}
\includegraphics[width = 0.45\linewidth]{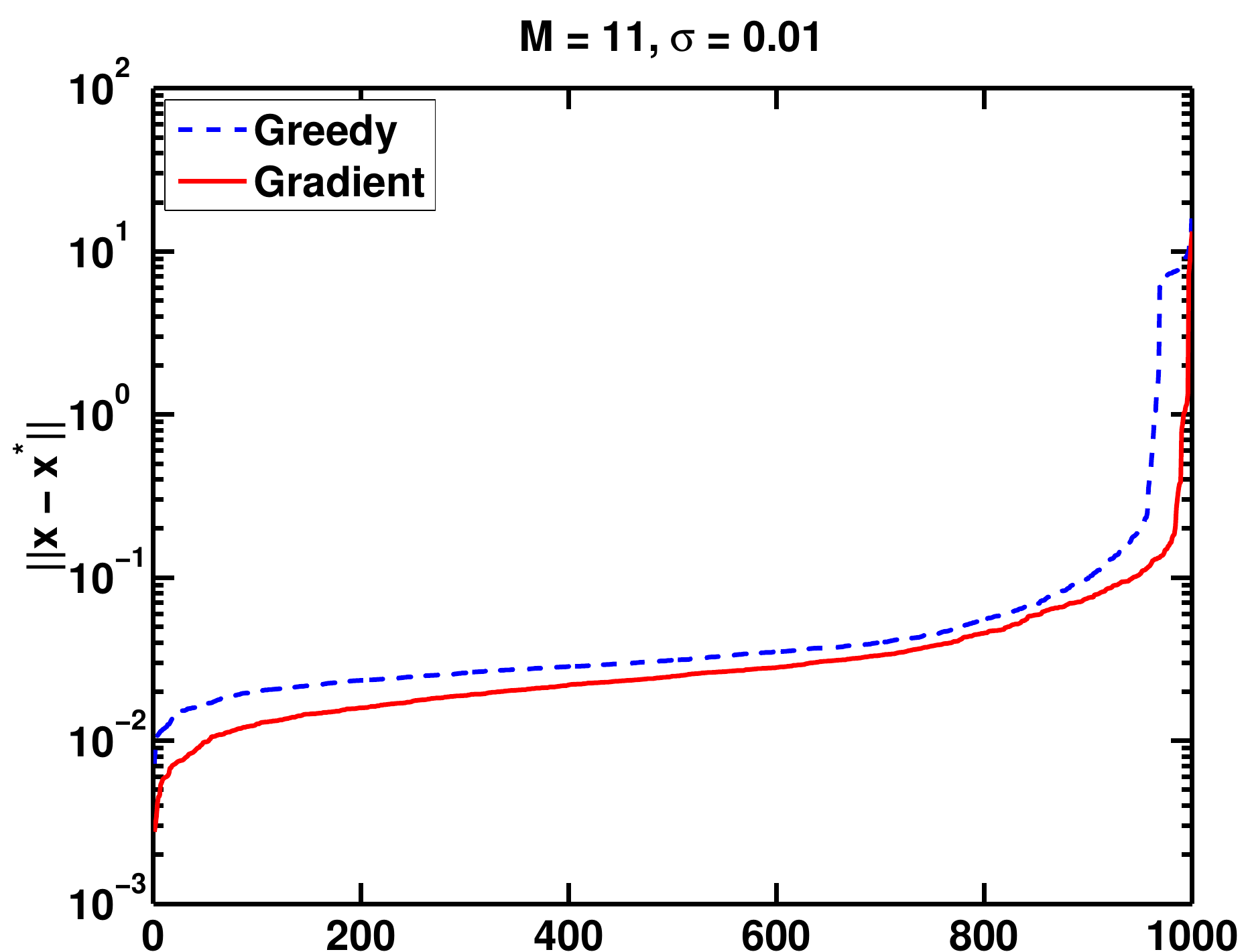}
\includegraphics[width = 0.45\linewidth]{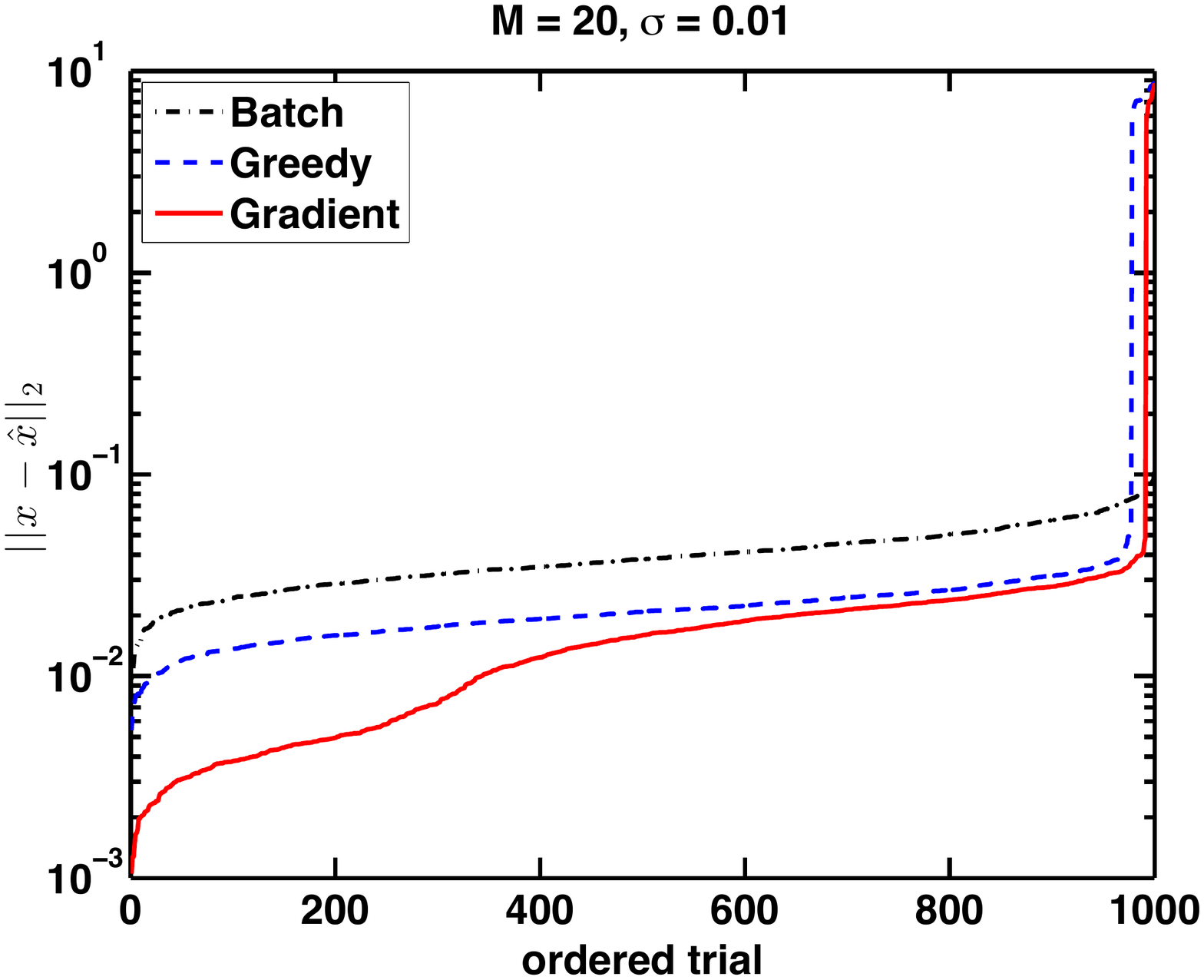}
\end{tabular}
\end{center}
\caption{Sensing a GMM signal: comparison of errors for \yx{the batch gradient descent method \cite{CarsonChenRodrigues2012}} and the Info-Greedy Sensing algorithms: the greedy heuristic and the gradient descent approach, when $m = 11$ and $m = 20$, respectively. }
\label{fig:errGMM}
\end{figure}

\subsubsection{Sparse Info-Greedy Sensing}

Consider designing a sparse Info-Greedy Sensing vector for a single Gaussian signal with $n = 10$, desired sparsity of measurement vector $k_0 = 5$, and the low-rank covariance matrix is generated as before by thresholding eigenvalues. 
Fig. \ref{fig:sparse}(a) shows the pattern of non-zero entries from measurement 1 to 5. Fig. \ref{fig:sparse}(b) compares the performance of randomly selecting 5 non-zero entries. The sparse Info-Greedy Sensing algorithm outperforms the random approach and does not degrade too much from the non-sparse Info-Greedy Sensing.  

\begin{figure}
\begin{center}
\begin{tabular}{cc}
\includegraphics[width = 0.45\linewidth]{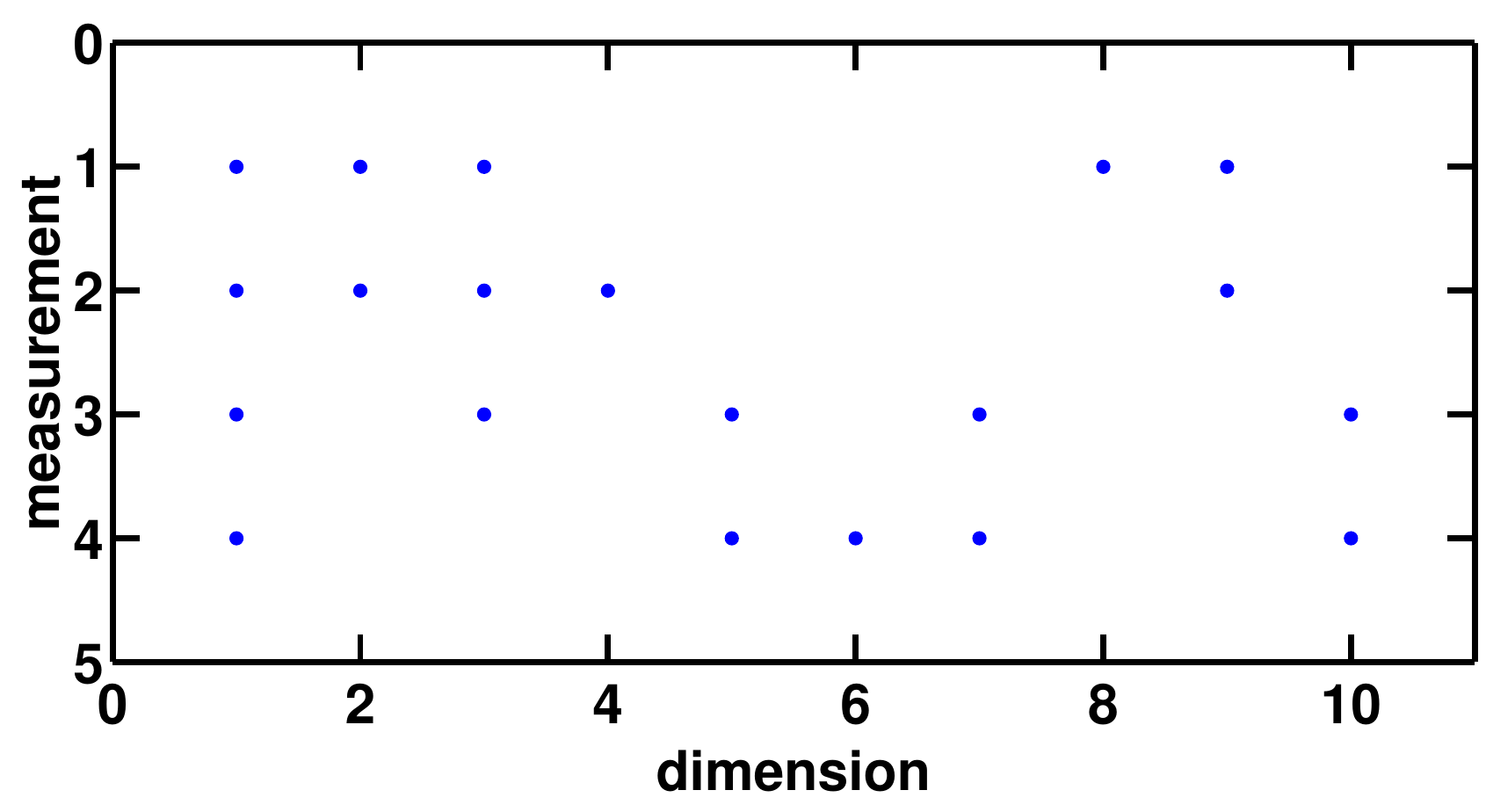} & \includegraphics[width = 0.45\linewidth]{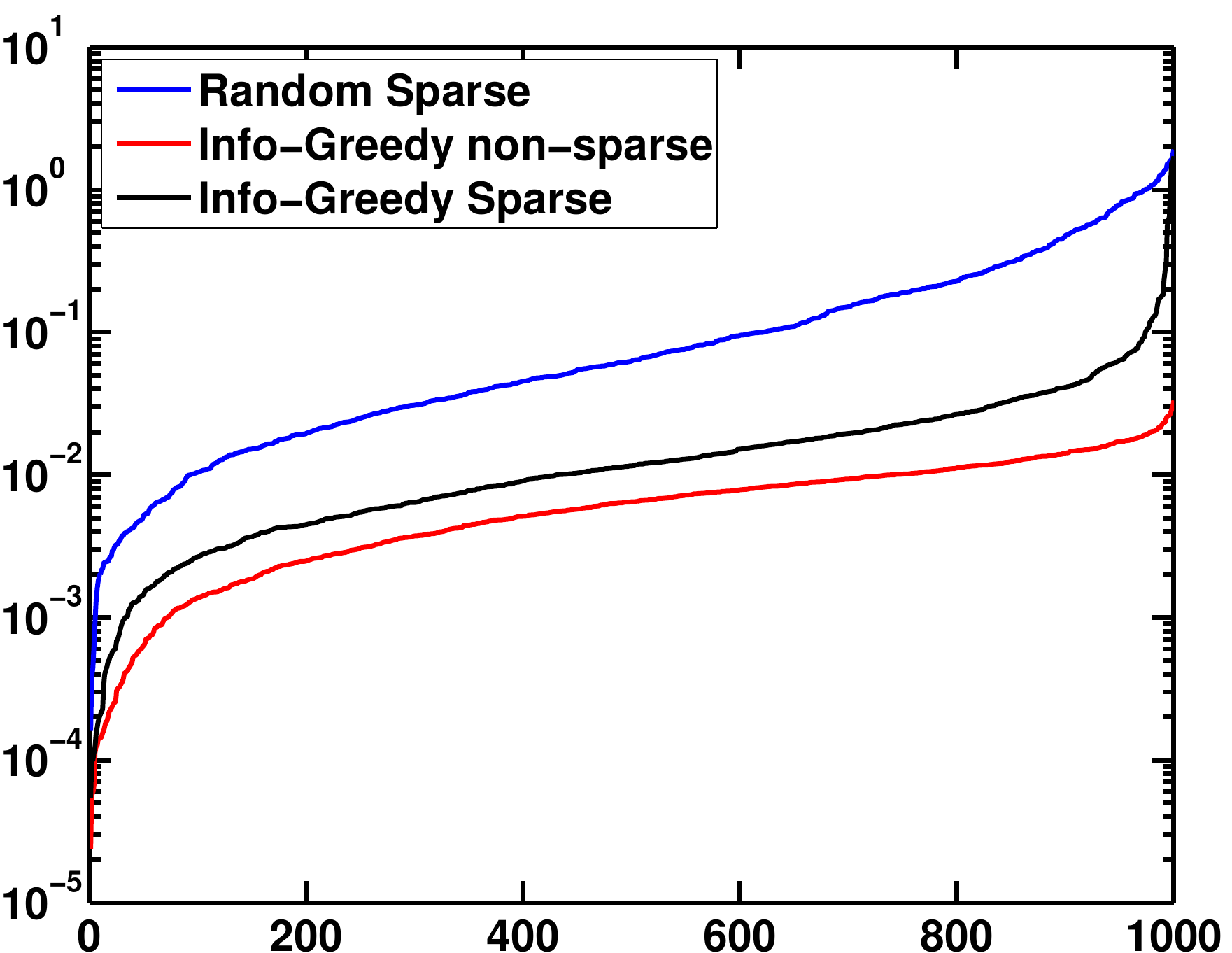} \vspace{0.1in}\\
(a) & (b)
\end{tabular}
\end{center}
\caption{Results of designing sparse sensing vectors: (a) support of the sparse measurements for $n = 10$, $k_0 = 5$, over 5 measurements; (b) comparison of errors for the random sparse measurement, sparse Info-Greedy  measurement, and non-sparse Info-Greedy measurement. }
\label{fig:sparse}
\end{figure}

\subsection{Real data}

%%
%First, we test the Info-Greedy Sensing with GMM signal model on recovering the classic ``Barbara'' image, using the same settings as those in Section 6.3 of \cite{CarsonChenRodrigues2012} and we obtained similar results: the greedy heuristic approach performs quite well relative to the gradient descent approach (detailed omitted here). 

\subsubsection{MNIST handwritten dataset}

We exam the performance of using GMM Info-Greedy Sensing on MNIST handwritten dataset\footnote{http://yann.lecun.com/exdb/mnist/}. In this example, since the true label of the training data is known, we can use training data to estimate the true prior distribution $\pi_c$, $\mu_c$ and $\Sigma_c$ (there are $C = 10$ classes of Gaussian components each corresponding to one digit) using 10,000 training pictures of handwritten digits picture of dimension 28 by 28. The images are vectorize and hence $n = 784$, and the  digit can be recognized using the its highest posterior $\pi_c$ after sequential measurements. Fig. \ref{Fig:MNIST} demonstrates an instance of recovered image (true label is 2) using $m  =40$ sequential measurements, for the greedy heuristic and the gradient descent approach, respectively. In this instance, the greedy heuristic classifies the image erroneously as 6, and the gradient descent approach correctly classifies the image as 2.
Table \ref{table_Pe} shows the probability of false classification for the testing data, where the random approach is where $a_i$ are normalized random Gaussian vectors. Again, the greedy heuristic  has good performance compared to the gradient descent method. 

\begin{figure}[h!]
\begin{center}
\includegraphics[width = .5\linewidth]{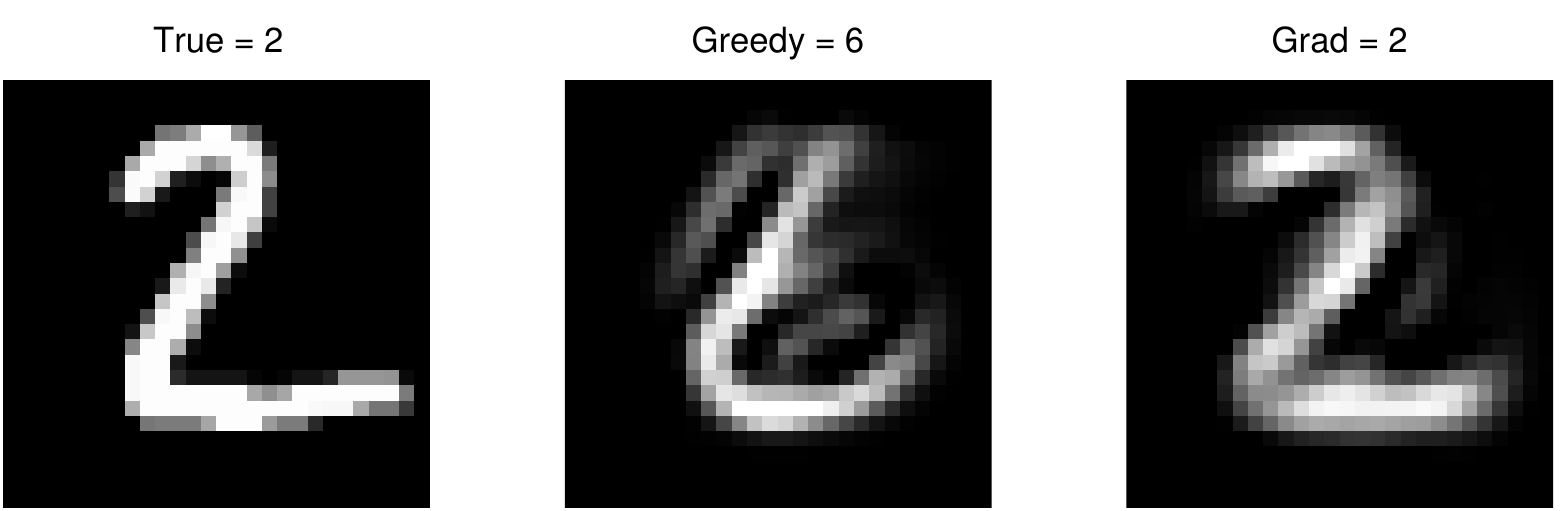}
\end{center}
\caption{Comparison of true and recovered handwritten digit 2 by the greedy heuristic and the gradient descent approach, respectively.}
\label{Fig:MNIST}
\end{figure}

\begin{table}[h!]
\begin{center}
\caption{Comparison of probability of false classification for MNIST handwritten digits dataset.} \label{table_Pe}
\begin{tabular}{c|ccc}
\hline
 Method & Random & Greedy & Gradient \\\hline
 prob. false classification &  0.192 & 0.152 &  0.144  \\\hline
\end{tabular}
\end{center}
\end{table}

\subsubsection{Recovery of power consumption vector}

We consider recovery of a power consumption vector for 58 counties in California\footnote{http://www.ecdms.energy.ca.gov/elecbycounty.aspx }. Data for power consumption in these counties from year 2006 to year 2012 are available. We first fit a single Gaussian model using data from year 2006 to 2011 (Fig. \ref{fig:power}(a), the probability plot demonstrates that Gaussian is a reasonably good fit to the data), and then test the performance of the Info-Greedy Sensing in recovering the data vector of year 2012. Fig. \ref{fig:power}(b) shows that even by using a coarse estimate of the covariance matrix from limited data (5 samples), Info-Greedy Sensing can have better performance than the random algorithm. This example has some practical implications: the compressed measurements here correspond to collecting the total power consumption over a region of the power network. This collection process can be achieved automatically by new technologies such as the wireless sensor network platform using embedded RFID in \cite{WirelessHouseElectricity2014} and, hence, our Info-Greedy Sensing may be an efficient solution to monitoring of power consumption of each node in a large power network. 

\begin{figure}[h!]
\begin{center}
\begin{tabular}{cc}
\includegraphics[width = .47\linewidth]{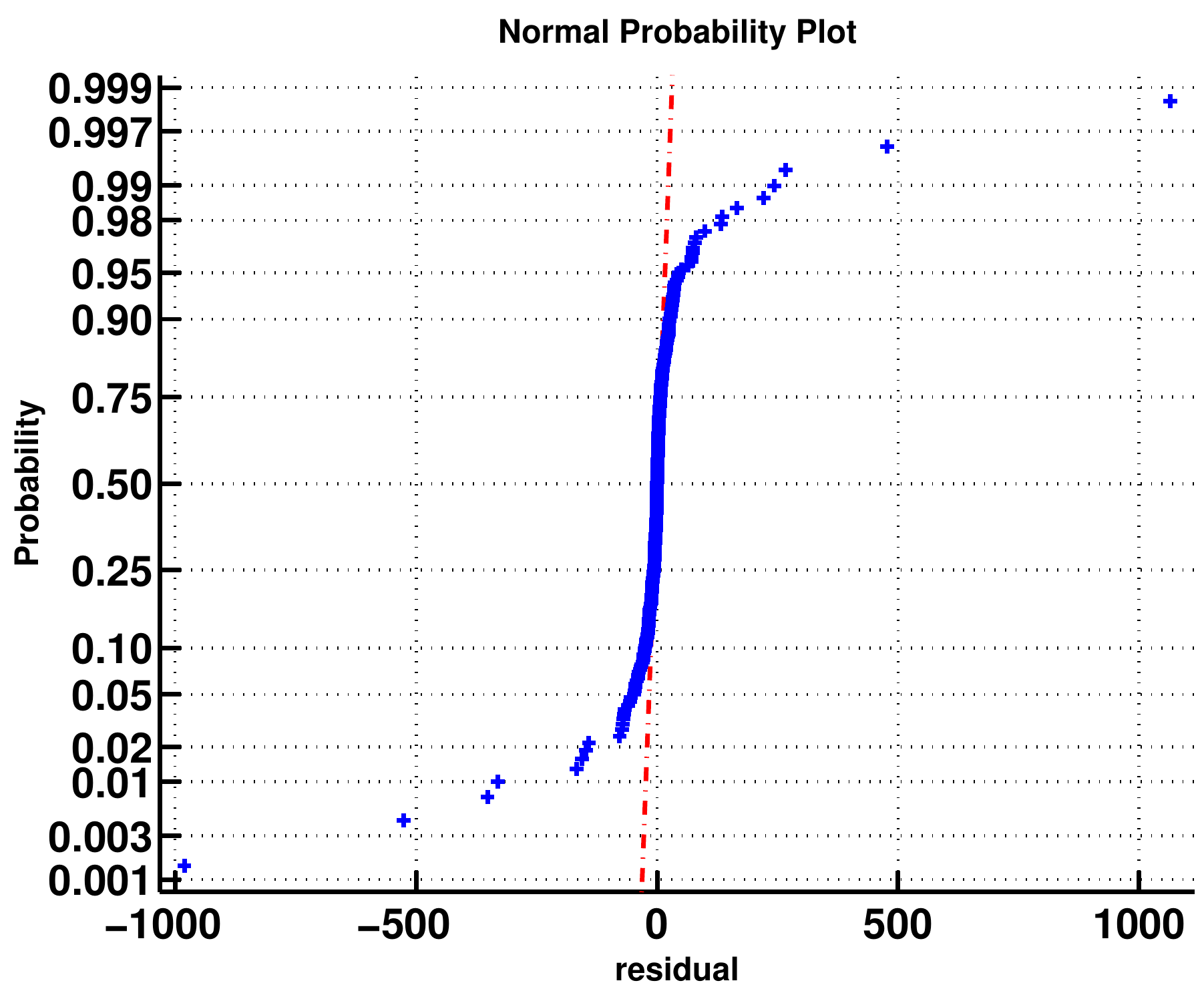} &
\includegraphics[width = .47\linewidth]{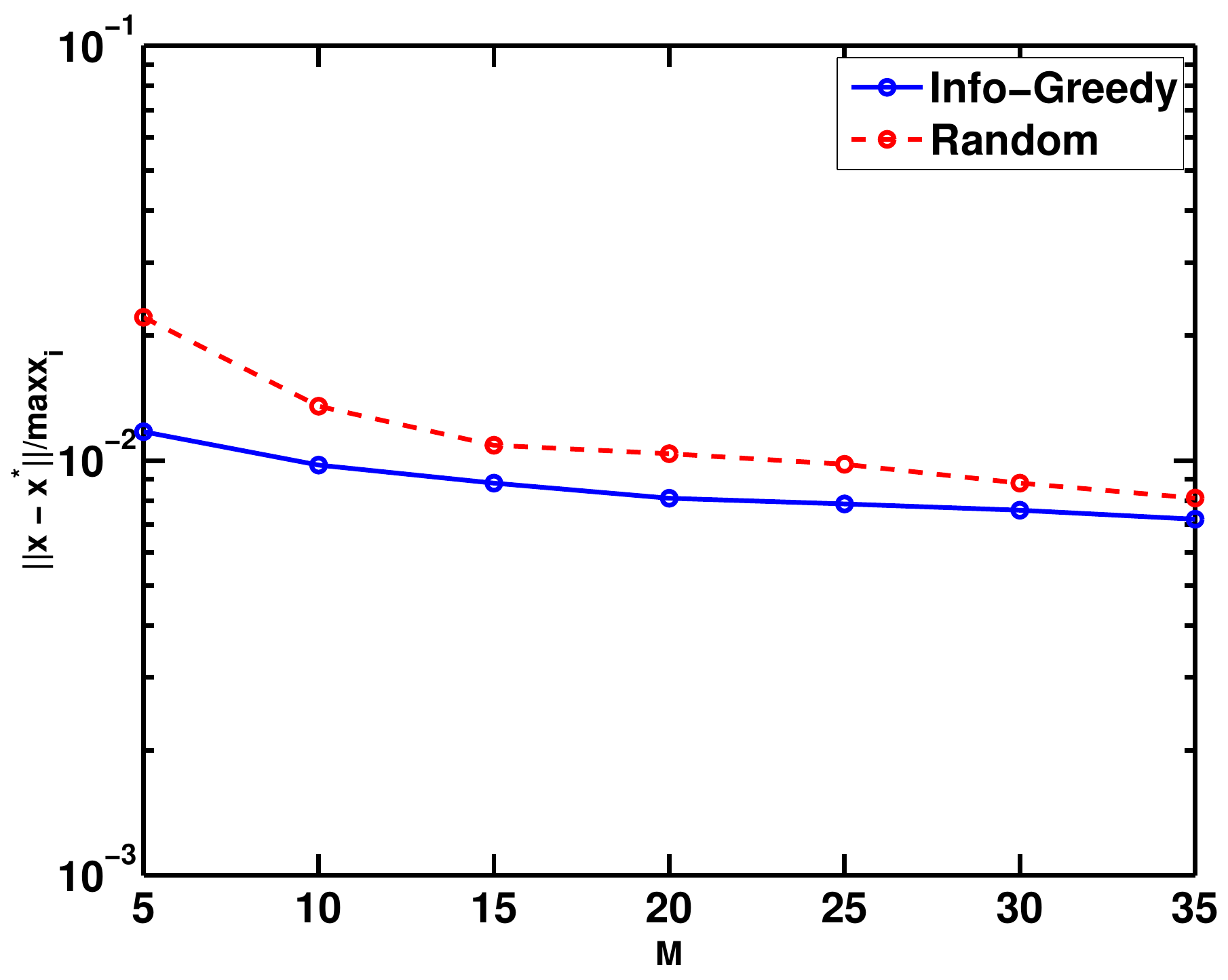}
\\
(a) &  (b) 
\end{tabular}
\end{center}
\caption{Recovery of power consumption data of 58 counties in California: (a) normal probability of residuals formed by training data after subtracting the mean estimated from year 2006 to year 2011; (b) relative error $\|x - \hat{x}\|_2/\|x\|_\infty$ for estimating power consumption vector in year 2012 versus the number of measurements.
}
\label{fig:power}
\end{figure}

\section{Conclusion}\label{sec:con}

We have presented a general framework for sequential adaptive compressed sensing, Info-Greedy Sensing, which is based on maximizing mutual information between the measurement and the signal model conditioned on previous measurements. Our results demonstrate that adaptivity helps  when prior distributional information of the signal is available and Info-Greedy is an efficient tool to explore such prior information, such as in the case of the GMM signals. Adaptivity also brings robustness when there is mismatch between the assumed and true distribution, and we have demonstrated such benefits for Gaussian signals.  Moreover, Info-Greedy Sensing shows significant improvement over random
projection for signals with sparse and low-rank covariance matrices, which demonstrate the potential value of Info-Greedy Sensing for big data.

\bibliographystyle{ieeetr}
\bibliography{bib}

\appendices

\section{General performance lower bounds} \label{sec:bound}

In the following we establish a general lower bound for the number of sequential measurements needed to obtain certain small recovery error \(\|x - \widehat{x}\|_2\), similar to the approach in \cite{bgp2013}. We consider the following model: sequentially perform measurements and performance is measured by the number \(M\) of measurements required to obtain a reconstruction of the signal with a prescribed accuracy. 
Assume the sequential measurements \(a_i\) are linear and the measurement returns
\(\transpose{a}_i x\). 
Formally, let \(\mathcal F\) be a \yx{finite} family of signals of
interest, and \(F \in \mathcal F\) be a random variable with uniform
distribution on $\mathcal F$.
Denote by \(A=(a_1,a_2,\dotsc)\) the sequence of measurements, and
\(y=(y_1,y_2,\dotsc)\) the sequence of measurement values:
\(y_i = \transpose a_i x\). Let \(\Pi = (A,y)\) denote the transcript of the measurement operations and \(\Pi_i = (a_i,y_i)\) a single measurement/value pair. Note that \(\Pi\) is a random variable of the picked signal \(F\). Assume that the accuracy \(\varepsilon\) is high enough to ensure a one-to-one
correspondence between signal \(F\) and the \(\varepsilon\)-ball it is
contained in. Thus we can return the center of such an \(\varepsilon\)-ball  as the reconstruction \(\widehat x\) of \(x\). 
In this regime, an \(\varepsilon\)-recovery of a signal $x$ is
(information-theoretically) equivalent to
learning the \(\varepsilon\)-ball that $x$ is contained in, 
and we can invoke the \emph{reconstruction principle}
\begin{equation}
  \label{eq:conservation}
  \mutualInfo{F}{\Pi} = \entropy{F} = \log \card{\mathcal F},
\end{equation}
i.e., the transcript has to contain the same information as \(F\) and in fact uniquely identify it.
With this model it was shown in \cite{bgp2013} that the total amount of information acquired, \(\entropy{F}\), is equal to the sum of the conditional information per iteration: %, i.e., 
\begin{thm}[\cite{bgp2013}] 
\label{thm:estimateRunningTime}
  \begin{equation}
    \label{eq:1}
\mutualInfo{F}{\Pi} = \sum_{i=1}^{\infty}
  \underbrace{\entropy[a_i,\Pi^{i-1}, M \geq i]{y_i}}%
  _{\text{\rm information gain by measurement $i$}}
  \probability{M\geq i},
  \end{equation}
where \(\Pi^{i-1}\) is a shorthand for \(\Pi^{i-1} \triangleq
(\Pi_1,\dotsc,\Pi_{i-1})\) and \(M\) is the random variable of required measurements. 
\end{thm}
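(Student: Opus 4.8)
The plan is to derive the identity as a pure chain-rule decomposition of $\mutualInfo{F}{\Pi}$, using two structural features of the model: the measurement directions are chosen from the past transcript, and the measurements are noiseless. To handle the random length of the transcript cleanly, I would first pad $\Pi$ to an infinite sequence $\Pi = (\Pi_1, \Pi_2, \dots)$ by declaring $\Pi_i$ to be a fixed null symbol whenever $i > M$. The chain rule for mutual information then applies to a sequence of fixed (countable) length and gives
\[
\mutualInfo{F}{\Pi} = \sum_{i=1}^{\infty} \mutualInfo[\Pi^{i-1}]{F}{\Pi_i}.
\]

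Next I would peel off the stopping-time weight. Since the decision to stop after $i-1$ measurements is a function of the transcript produced so far, the event $\{M \geq i\}$ is measurable with respect to $\Pi^{i-1}$. Writing $\mutualInfo[\Pi^{i-1}]{F}{\Pi_i}$ as an expectation over the values of $\Pi^{i-1}$ and splitting over $\{M \geq i\}$, the contribution from transcripts with $M < i$ vanishes, because there $\Pi_i$ equals the deterministic null symbol and carries no information. This extracts exactly the factor $\probability{M \geq i}$, leaving
\[
\mutualInfo[\Pi^{i-1}]{F}{\Pi_i} = \probability{M \geq i}\,\mutualInfo[\Pi^{i-1}, M \geq i]{F}{\Pi_i}.
\]

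I would then expand $\Pi_i = (a_i, y_i)$ by the chain rule a second time. The direction $a_i$ is a (possibly randomized) function of $\Pi^{i-1}$ with internal randomness independent of $F$, so conditioned on $\Pi^{i-1}$ it is independent of $F$; hence $\mutualInfo[\Pi^{i-1}, M \geq i]{F}{a_i} = 0$ and the term reduces to $\mutualInfo[a_i, \Pi^{i-1}, M \geq i]{F}{y_i}$. Because the measurements are noiseless, $y_i = \transpose{a_i} x$ is a deterministic function of $F$ and $a_i$, so $\entropy[F, a_i, \Pi^{i-1}, M \geq i]{y_i} = 0$ and the conditional mutual information collapses to the conditional entropy $\entropy[a_i, \Pi^{i-1}, M \geq i]{y_i}$. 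Assembling the three reductions term by term yields the claimed identity \eqref{eq:1}.

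The main obstacle I anticipate is not any single algebraic manipulation but the bookkeeping around the random stopping time $M$: fixing the padding convention so that the chain rule is legitimately applied to a variable-length transcript, verifying that $\{M \geq i\}$ is genuinely determined by $\Pi^{i-1}$, and checking that the conditional-information quantities remain well-defined under conditioning on $\{M \geq i\}$. Once these measurability points are settled, the two applications of the chain rule together with the deterministic relation $y_i = \transpose{a_i} x$ make the remaining steps routine.
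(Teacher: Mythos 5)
The paper does not prove this theorem: it is imported verbatim from the cited reference \cite{bgp2013}, and the appendix only \emph{uses} it (in Lemma \ref{lem:LB-bisection} and Lemma \ref{lem:bisection-k-ub}), so there is no in-paper proof to compare against. Your argument is the standard one for such decompositions and it is correct: pad the transcript past the stopping time with a null symbol, apply the chain rule to get $\mutualInfo{F}{\Pi}=\sum_i \mutualInfo[\Pi^{i-1}]{F}{\Pi_i}$, use that $\{M\ge i\}$ is $\Pi^{i-1}$-measurable and that the padded $\Pi_i$ is constant on $\{M<i\}$ to extract the factor $\probability{M\ge i}$, kill the $a_i$ term by conditional independence, and collapse $\mutualInfo[a_i,\Pi^{i-1},M\ge i]{F}{y_i}$ to $\entropy[a_i,\Pi^{i-1},M\ge i]{y_i}$ because $y_i=\transpose{a_i}x$ is a deterministic function of $(F,a_i)$ in the noiseless model. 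The one step you state a bit too quickly is the claim that $a_i$ is independent of $F$ \emph{conditioned on} $\Pi^{i-1}$: independence of the algorithm's coins $R$ from $F$ does not by itself survive conditioning on a common function of both. It does hold here, because $\probability{R=r,\Pi^{i-1}=\pi}[F=f]$ factors as a function of $(r,\pi)$ times a function of $(f,\pi)$ (the query rule depends only on the past transcript and $R$, while the responses depend only on $F$ and the queries), which is the usual rectangle/Markov property; it is worth recording that factorization explicitly. With that noted, and with the observation that the nonnegativity of all terms justifies the infinite sum by monotone convergence, your proof is complete.
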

We will use Theorem \ref{thm:estimateRunningTime} 
to establish Lemma \ref{lem:bisection-k-ub} that the bisection
  algorithm is Info-Greedy for $k$-sparse signals.
\emph{A priori}, Theorem \ref{thm:estimateRunningTime} does not give a bound on the expected number of required measurements, and it only characterizes how much information the sensing algorithm learns from each measurement. However, if we can upper bound the information acquired in each measurement by some constant, this  leads to a lower bound on the \emph{expected} number of measurements, as well as a \emph{high-probability} lower bound:
    \begin{cor}[Lower bound on number of measurements] \label{thm:lower_bound}
Suppose that for some constant \(C > 0\),
\[\entropy[a_i,\Pi^{i-1}, m \geq i]{y_i} \leq C\]
for every round \(i\) where $M$ is as above.
Then \(\expectation{M} \geq \frac{\log \card{\mathcal F}}{C}\).
Moreover, for all \(t\) we have \(\probability{M
  < t} \leq (Ct)/\entropy{F}\) and 
\(\probability{T
  = \mathcal{O}(\entropy{F})} = 1 - o(1)\).
    \end{cor}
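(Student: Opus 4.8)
The plan is to feed the per-round information budget into the exact identity of Theorem~\ref{thm:estimateRunningTime} together with the reconstruction principle~\eqref{eq:conservation}. Substituting the hypothesis $\entropy[a_i,\Pi^{i-1},m\geq i]{y_i}\leq C$ into~\eqref{eq:1} and then applying~\eqref{eq:conservation} gives
\[
  \log\card{\mathcal F}=\entropy{F}=\sum_{i=1}^{\infty}\entropy[a_i,\Pi^{i-1},M\geq i]{y_i}\,\probability{M\geq i}\leq C\sum_{i=1}^{\infty}\probability{M\geq i}.
\]
For the expectation bound I would then invoke the tail-sum identity $\expectation{M}=\sum_{i\geq1}\probability{M\geq i}$, valid for every nonnegative integer-valued $M$, which turns the right-hand side into $C\,\expectation{M}$ and yields $\expectation{M}\geq\log\card{\mathcal F}/C$ at once.

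For the high-probability statement I would truncate the series at round $t$. Let $I_{<t}\coloneqq\sum_{i=1}^{t-1}\entropy[a_i,\Pi^{i-1},M\geq i]{y_i}\,\probability{M\geq i}$ denote the information extracted in the first $t-1$ rounds. The per-round bound and $\probability{M\geq i}\leq1$ give $I_{<t}\leq C(t-1)<Ct$. The round-by-round derivation behind Theorem~\ref{thm:estimateRunningTime} also identifies this partial sum with the mutual information carried by the truncated transcript, $I_{<t}=\mutualInfo{F}{\Pi^{t-1}}$ (by the chain rule each round contributes $\mutualInfo[\Pi^{i-1}]{F}{\Pi_i}=\entropy[a_i,\Pi^{i-1},M\geq i]{y_i}\,\probability{M\geq i}$, since $a_i$ is a function of $\Pi^{i-1}$ and $y_i=\transpose{a_i}x$ a function of $F$). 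Writing $\mutualInfo{F}{\Pi^{t-1}}=\entropy{F}-\entropy[\Pi^{t-1}]{F}$, it then suffices to prove the complementary bound $\entropy[\Pi^{t-1}]{F}\leq\entropy{F}\,\probability{M\geq t}$, since this forces $I_{<t}\geq\entropy{F}\,\probability{M<t}$ and, combined with $I_{<t}<Ct$, delivers $\probability{M<t}\leq Ct/\entropy{F}$.

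I expect this complementary bound to be the main obstacle, and the key is that the stopping time is measurable with respect to the truncated transcript: whether the algorithm has halted by round $t-1$ is decided by $\Pi^{t-1}$ alone, so the event $\{M<t\}$ is determined by $\Pi^{t-1}$, and on every transcript $\pi$ with $M<t$ the signal has already been pinned down, i.e.\ $\entropy{F\mid\Pi^{t-1}=\pi}=0$. Decomposing $\entropy[\Pi^{t-1}]{F}=\sum_{\pi}\probability{\Pi^{t-1}=\pi}\,\entropy{F\mid\Pi^{t-1}=\pi}$, only the transcripts with $M\geq t$ survive, and each of those contributes at most $\entropy{F}=\log\card{\mathcal F}$; summing their probabilities gives exactly $\entropy[\Pi^{t-1}]{F}\leq\entropy{F}\,\probability{M\geq t}$.

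The third claim is then an immediate instantiation of the second. Picking any threshold $t$ that grows strictly more slowly than $\entropy{F}/C$ — for example $t=\entropy{F}/(C\omega)$ for an arbitrarily slowly diverging $\omega\to\infty$ — makes $\probability{M<t}\leq Ct/\entropy{F}=1/\omega\to0$, so that $M=\Omega(\entropy{F})$ with probability $1-o(1)$; already a constant $\omega$ shows that $M$ exceeds a fixed fraction of $\entropy{F}$ outside an event whose probability is at most that fraction, which is the high-probability companion to the expectation bound in the first claim.
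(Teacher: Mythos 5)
The paper never actually proves this corollary: despite the blanket claim that all proofs are in the appendix, Corollary~\ref{thm:lower_bound} is stated after Theorem~\ref{thm:estimateRunningTime} with no argument, presumably deferring to \cite{bgp2013}. So there is no in-paper proof to compare against, and your write-up should be judged on its own. On that basis it is correct and fills the gap. The expectation bound is the intended one-liner: plug the per-round cap into \eqref{eq:1}, apply \eqref{eq:conservation}, and use the tail-sum identity $\expectation{M}=\sum_{i\geq 1}\probability{M\geq i}$. The tail bound is where actual work is needed, and you supply the two facts that make it go through: (i) the partial sum of \eqref{eq:1} up to round $t-1$ equals $\mutualInfo{F}{\Pi^{t-1}}$ by the chain rule (legitimate because $a_i$ is determined by $\Pi^{i-1}$ and $y_i$ by $F$ and $a_i$, so each summand of \eqref{eq:1} is $\mutualInfo[\Pi^{i-1}]{F}{\Pi_i}$); and (ii) on any transcript with $M<t$ the full transcript coincides with $\Pi^{t-1}$, so the reconstruction principle forces $\entropy[\Pi^{t-1}=\pi]{F}=0$ there, while every surviving transcript contributes at most $\log\card{\mathcal F}=\entropy{F}$ because $F$ is uniform (this is the one place uniformity is genuinely needed, since pointwise conditional entropy can otherwise exceed the unconditional one). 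Together these give $\entropy{F}\,\probability{M<t}\leq\mutualInfo{F}{\Pi^{t-1}}<Ct$, which is the claimed bound.

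The only soft spot is the third claim, and the fault lies mostly with the statement itself, which is garbled ($T$ for $M$, $\mathcal{O}$ where $\Omega$ is meant). A Markov-type tail bound cannot deliver $M=\Omega(\entropy{F}/C)$ with probability $1-o(1)$: a constant-fraction threshold leaves a constant failure probability, and driving the failure probability to $o(1)$ forces the threshold down to $o(\entropy{F}/C)$. You identify exactly this trade-off in your closing sentence, which is the honest reading; just avoid the intermediate phrasing ``$M=\Omega(\entropy{F})$ with probability $1-o(1)$'' for the choice $t=\entropy{F}/(C\omega)$ with $\omega\to\infty$, since that threshold is itself $o(\entropy{F})$.
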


%A crucial difference to traditional bounds is that Corollary \ref{thm:lower_bound} is a statement for \emph{a given family of signals} rather than the entire signal space; these bounds can potentially be much stronger than the worst-case bounds such as those in \cite{CastroCandesDavenport2012}.
\yx{The information theoretic approach also lends itself to
lower bounds on the number of measurements for Gaussian signals,
as e.g., in \cite[Corollary~4]{cont_Fano}.}

\section{Derivation of Gaussian signal measured with colored noise}\label{app:Gaussian_color}

First consider the case when colored noise is added after the measurement: $y = Ax + w$, $w\sim \mathcal{N}(0, \Sigma_w)$. In the following, we assume the noise covariance matrix \(\Sigma_{w}\) is full rank. Note that we can write $w_1 = \transpose e_1 w$. Let the eigendecomposition of the noise covariance matrix be $\Sigma_w = U_w \Lambda_w \transpose U_w$, and define a constant vector $b \triangleq \Lambda_w^{1/2} \transpose{U}_w e_1$. So the variance of $w_1$ is given by $\transpose e_1 \Sigma_w e_1 = \transpose b b$. Re-parameterize $a_1$ by introducing a unitary matrix $R$: 
%\begin{equation}
$a_1 =  (\sqrt{\beta_1}/\|b\|_2) R b$.
%= (\sqrt{\beta}/\|b\|_2) R \Lambda_w^{1/2} \transpose{U}_w e_1
%\end{equation}
% 
Also let the eigendecomposition of $\Sigma$ be $\Sigma = U_x \Lambda_x \transpose U_x$.
Then the mutual information of \(x\) and \(y_{1}\) can be written as
\begin{equation}
\begin{split}
&  \mutualInfo{x}{y_{1}}
%  =
%  \entropy{y_{1}} - \entropy[x]{y_1}
  = \frac{1}{2}\ln
  \left(
    \frac{\transpose{a_{1}} \Sigma a_{1}}
    {\transpose e_1 \Sigma_w e_1}
    + 1
  \right)    =  \frac{1}{2}\ln
  \left(   \frac{\beta_1}{\|b\|_2^2} \cdot
    \frac{\transpose b \transpose R \Sigma Rb}
    {\transpose b b}
    + 1
  \right)\\
  & = \frac{1}{2}\ln
  \left(\frac{\beta_1}{\|b\|_2^2} \cdot
    \frac{\transpose e_1 U_w \Lambda_w^{1/2} \transpose R U_x \Lambda_x \transpose U_x  R  \Lambda_w^{1/2} \transpose{U}_w e_1}
    {\transpose b b}
    + 1
  \right) \\
  & \leq \frac{1}{2} \ln
  \left(
  \frac{\beta_1}{\|\Lambda_w^{1/2} \transpose{U}_w e_1 \|_2^4} 
  \cdot \|\Sigma\|\|\Sigma\|
  \right),
  \end{split}
  \label{eq:gGaussian-noise-mutual}
\end{equation}
and the maximum is achieved when $R = U_x$. Hence, the Info-Greedy Sensing vector is 
\begin{equation}
a_1 = \frac{\sqrt{\beta}_1}{\|\Lambda_w^{1/2} \transpose U_w e_1\|_2} U_x \Lambda_w^{1/2} \transpose U_w e_1. \label{matching_a}
%= \sqrt{\beta_1'} U_x \Lambda_w^{1/2} \transpose U_w e_1.
\end{equation}
Note that the solution (\ref{matching_a}) for $a_1$ has the interpretation of ``mode matching'', i.e.,  
aligning of eigenspaces of the signal and the noise similar to that in  \cite{CarsonChenRodrigues2012} for the non-adaptive setting. 

For the ``noise folding'' model with colored noise, $y = A (x + w)$, $w \sim \mathcal{N}(0, \Sigma_w)$, since the power of $a_i$ does not affect SNR, we assume $\|a_1\|_2 = 1$.  Let \(d \triangleq \Lambda_w^{1/2} \transpose U_w {a_{1}}\) and, hence, \(a_1 = U_w \Lambda_w^{-1/2} d\).  In this case
\begin{equation}
\begin{split}
  & \mutualInfo{x}{y_{1}}
   = \frac{1}{2}\ln
  \left(
    \frac{\transpose{a_{1}} \Sigma a_{1}}
    {\transpose{a}_1 \Sigma_w a_1}
    + 1
  \right) \\
  & = \frac{1}{2}\ln
  \left(
    \frac{ \transpose d  \Lambda_w^{-1/2} \transpose U_w
     \Sigma  U_w \Lambda_w^{-1/2}d}
    {\transpose d d}
    + 1
  \right)  \leq 
  \frac{1}{2}\ln
  \left(
    \|\Sigma'\|
    + 1
  \right),
  \end{split}
\end{equation}
where $\Sigma' = \Lambda_w^{-1/2} \transpose U_w
     \Sigma  U_w \Lambda_w^{-1/2}$, 
and the maximum is achieved when $d$ is the eigenvector for the largest eigenvector of $\Sigma'$. Equivalently \(a_{1}\)
is an eigenvector for the largest eigenvalue of
\(U_w\Lambda_w^{-1/2} \Sigma'_x \Lambda_w^{-1/2} U_w\) or the largest eigenvector for \(\Sigma_{w}^{-1} \Sigma\). Note that in contrast to (\ref{matching_a}), in this case the ``mode matching'' is not possible because the noise covariance depends on the measurement vector $a_i$ as well.

\section{Derivation for GMM signals}\label{app:GMM}

 Let \(\widetilde{p}\) and $\widetilde{\mathbb{E}}$ denote the probability density function and expectation given $\{y_{j} : j < i\}$. Using \cite{PalomarVerdu2006,MChenThesis,CarsonChenRodrigues2012},
the gradient of mutual information with respect to $a_i$ is given by
\begin{equation}
\frac{\mutualInfo[y_j, j < i]{x}{y_i}}{\partial a_i} 
= \frac{\transpose{E_i(a_i; y_j, j < i)} a_i }{\sigma^2},
\label{gradient}
\end{equation}
where ${E}_i(a_i; y_j, j < i) \in\mathbb{R}^{n\times n}$ is the MMSE matrix conditioned on measurements prior to $i$, which can be written as
\begin{equation}
\begin{split}
  &E_i 
  = \int \widetilde{p}(y) \cdot
  \\
  &
  \underbrace{\int \widetilde{p}(x \mid y_{i} = y)
    (x - \widetilde{\mathbb{E}}[x\mid y_i = y])
    \transpose{(x - \widetilde{\mathbb{E}}[x\mid y_i = y])} dx}_{g(y)} dy.
    \end{split}
\label{Ei_def}
\end{equation}
For GMM, a closed form formula for the integrand $g(y)$ can be derived. %
Note that the conditional distribution of
\(x\) given the $\{y_{j} : j < i\}$ and \(y_{i} = y\) turns out to be 
a GMM with updated parameters: mean \(\widetilde{\mu}_{c}\),
variance \(\widetilde{\Sigma}_c\), and weight \(\widetilde{\pi}_{c}\):
\begin{align}
  \widetilde{\mu}_{c}(y)
  &
  =
  \mu_{c} + \Sigma_{c} \transpose{D_{i}}
    (\widetilde{y}_{i} - D_{i} \mu_{c})/\sigma^{2},
  \\
  \widetilde{\Sigma}_c
  &
  =
  \Sigma_{c}
  - \Sigma_{c} \transpose{D_{i}} D_{i} \Sigma_{c} / \sigma^{2},
  \\
  \widetilde{\pi}_c
  &
  \propto
 \pi_c \Phi(\widetilde{y}_{i}; D_i \mu_c,
      D_i \Sigma_c \transpose{D_i} + \sigma^2),
\end{align}
where 
 \(\transpose{D_i}
  =
   [ a_{1},  \cdots, a_{i-1}, a_{i}]
\) 
and 
\( 
  \widetilde{y}_i
  =
  \transpose{
   [ y_1, \cdots, y_{i-1}, y ]
  },
\)
and hence  
$
\widetilde{\mathbb{E}}[x \mid y_{i} = y, c] =
   \widetilde{\mu}_{c}(y)$, 
$\widetilde{\mathbb{E}}[x\mid y_{i} = y]  =
  \sum_{c=1}^{C} \widetilde{\pi}_{c} \widetilde{\mu}_{c}(y)
$.
Based on the above results
\begin{equation}
\begin{split}
&g(y)  
     \\
   \hspace{-0.05in} = & \sum_{c = 1}^C  \widetilde{\pi}_c
    \{
    \widetilde{\Sigma}_c +(\widetilde{\mu}_{c}(y) - \sum_{c=1}^{C} \widetilde{\pi}_{c} \widetilde{\mu}_{c}(y))
    \transpose{(\widetilde{\mu}_{c}(y)
      - \sum_{c=1}^{C} \widetilde{\pi}_{c} \widetilde{\mu}_{c}(y))}
    \}.
    \end{split}
    \label{g_def}
\end{equation}
The closed form expression \eqref{g_def} enables the gradient to be
evaluated efficiently by drawing samples from $p(y)$ and computing
direct Monte Carlo integration,
as summarized in Algorithm \ref{alg:a_grad}. We stop the gradient descent iteration whenever the difference between two  conditional mutual information drops below a threshold. 
The conditional mutual information for GMM is given by 
\begin{equation}
\mutualInfo[y_j, j < i]{x}{y_i}
= \entropy[y_j, j < i]{x} - \entropy[y_j, j \leq i]{x}. \label{MI}
\end{equation}
Since the posterior distribution of $x$ conditioned on $\{y_j, j < i\}$ and $x$ conditioned on $\{y_j, j \leq i\}$ are both GMM, an approximation for the entropy of GMM will approximate \eqref{MI}.
Such an approximation is derived in \cite{GMM_chapter}. For GMM described in (\ref{GMM_model})
$
\entropy{x}  \approx  \sum_{i=1}^C \pi_i \log \left((2\pi e)^{n/2}|\Sigma_i|^{1/2}/\pi_i  - (C-1)\right).
$
\yx{This approximation is good when the Gaussian components are not overlapping too much, or more precisely, when $\sum_{c \neq i} \pi_c \mathcal{N}(\mu_c, \Sigma_c)/(\pi_c \mathcal{N}(\mu_c, \Sigma_c)) \ll 1$.}

\section{Proofs}

\begin{proof}[Proof of Lemma \ref{lemma1}]
\yx{%
We will first prove the noiseless case.
The set \(L\) is intended to consist of
at most \(k\) disjoint subsets covering
the part of the signal \(x\) that has not been determined yet.

At each iteration of the loop starting at Line \ref{line:loop},
Algorithm~\ref{alg:bisection} first splits every set in \(L\)
into two of almost equal size, and decides
which of the new sets \(S\) intersects the support of \(S\)
by measuring \(a_{S}^{\intercal} x\).
Then keeps only the \(S\) in \(L\), which intersect the support of
\(x\), and have size greater than \(1\).
On the removed subsets \(S\),
the measurement \(a_{S}^{\intercal} x\) already determines \(x\),
and the estimator \(\widehat{x}\) is updated to coincide with \(x\).

Now we estimate the number of measurements altogether.
As the support of \(x\) has size at most \(k\),
at every iteration \(L\) consist of at most \(k\) sets,
meaning \(2 k\) measurement per iteration.
Finally, due to halving of sets,
as the sizes of the sets in \(L\) are at most
\(2^{\lceil \log n \rceil - i}\) after iteration \(i\),
therefore after at most \(\lceil \log n \rceil\) iteration,
all the sets in \(L\) will have size \(1\),
and the algorithm stops, having determined the whole \(x\).
Thus, at most \(2 k \lceil \log n \rceil\) measurements
are made altogether.

In the noisy case, the main difference is that
every measurement is repeated \(r = \lceil \log n \rceil\) times,
and average is taken
over the block of \(r\) measurements
to reduce the error to at most \(\varepsilon \)
with error probability
at most \(\exp(- r \varepsilon^{2} / 2 \sigma^{2}) / 2\).
Assuming the error is less than \(\varepsilon \)
for every block of measurements,
the algorithm always correctly detects
when a subset \(S\) does not intersect
the support of \(x\),
as then \(y \leq \varepsilon \) in Line~\ref{line:test}.
On such subsets \(x\) is estimated by \(0\),
which is exact.
However, \(y \leq \varepsilon \) might also happen
if \(a_{S} x \leq \varepsilon \) but \(x\) is not \(0\)
on \(S\).
This will not cause \(L\) to consist of more than \(k\)
subsets,
but \(x\) will be estimated by \(0\) on \(S\),
causing errors at most \(\varepsilon \)
on the non-zero coordinates on \(x\) in \(S\). 
Note that Line~\ref{line:estimate}
establishes an error at most \(\varepsilon \)
on subsets \(S\) with \(\card{S} = 1\). 
All in all, the algorithm terminates 
after at most \(2 k \lceil \log n \rceil\)
blocks of measurements,
and in the end
the estimator \(\widehat{x}\) coincides with \(x\)
outside the support of \(x\),
has error at most \(\varepsilon \) in every coordinate
in the support of \(x\).
Therefore \(\norm{\widehat{x} - x}_2 \leq \varepsilon \sqrt{k}\). 
By the union bound,
the probability of making an error greater than
\(\varepsilon \)
in some of the first \(2 k \lceil \log n \rceil\)
blocks is at most \(2 k \lceil \log n \rceil
\exp(- r \varepsilon / (2 \sigma^{2})) / 2\),
which provides the claimed error probability.
}
  \end{proof}

  \begin{proof}[Proof of Lemma \ref{lem:LB-bisection}]
We consider the family \(\mathcal F\) of signals consisting of all \(k\)-sparse
signals on \(n\) bits with uniform distribution. In particular
\(\log |\mathcal F| = \log \binom{n}{k} > \log (\frac{n}{k})^k = k \log
\frac{n}{k}\).
Observe that for every measurement \(a\) we have \(y = \transpose{a}x \in
\face{0,\dots,k}\) and hence the entropy of the measurement result is
less than \(\entropy{y} \leq \log k+1\). 
   We apply Theorem~\ref{thm:estimateRunningTime} to obtain a lower bound on the
   expected  number of measurements \(m\):
\begin{equation}
\begin{split}
\mathbb{E}[m] &\geq \frac{k \log
\frac{n}{k}}{\log k+1}  = \frac{k \log n - k \log k}{\log k+1} \\ & >
\frac{k}{\log k +1} \log n - k > \frac{k}{\log k +1} (-1 + \log n ).
\end{split}
\end{equation}
  \end{proof}

  \begin{proof}[Proof of Lemma \ref{lemma_lower_bound}]
It is easy to observe that after a measurement the size of the domain
is effectively reduced, however the signal \(x\) is still distributed
uniformly at random in the residual set. Thus it suffices to consider
a single measurement. 
Let the measurement $a$ be chosen such that the first half of the entries are \(0\)
and the other half of the entries are \(1\), i.e., we partition \([n]
= A_1 \dot \cup A_2\). The obtained measurement \(Y\) satisfies
$
Y = \left\{
\begin{array}{ll}
0, & \text{w.p. } 1/2; \\
1, & \text{w.p. } 1/2.
\end{array}
\right.
$
Note that \(Y\) is determined by \(X\) given the measurement, i.e.,
\(\entropy[X]{Y} = 0\). We therefore obtain
$\mutualInfo{X}{Y} = \entropy{Y} - \entropy[X]{Y}  = \entropy{Y} = 1$.
On the other hand \(\mutualInfo{X}{Y} \leq \entropy{Y} \leq 1\) as
\(Y\) is binary. Thus the measurement maximizes the mutual
information. 
As the reduced problem after the measurement is identical to the
original one except for the domain size being reduced by a factor of
\(1/2\), by induction, we obtain that the continued bisections maximize the conditional mutual information.
  \end{proof}

\begin{proof}[Proof of Lemma \ref{lem:bisection-k-ub}]
 The considered family \(\mathcal F\) has entropy at least
\(\entropy{\mathcal{F}} = \log |\mathcal F| = \log \binom{n}{k} > \log (\frac{n}{k})^k = k \log
\frac{n}{k}\). On the other hand, the bisection algorithm requires \(k
\lceil\log n\rceil\) queries. Using
Theorem~\ref{thm:estimateRunningTime} and let \(c\) be the upper bound
on information gathered per measurement. We obtain 
\(k
\lceil\log n\rceil \geq (k \log
\frac{n}{k})/c.\) 
Solving for \(c\) we obtain \(c \geq 1- (\log k)/(\log n)\). Thus
the expected amount of information per query is at least \(c\).
\end{proof}

\begin{proof}[Proof of Theorem \ref{thm:Gaussian}]
We consider how the covariance matrix of \(x\) changes
conditioned on the measurements taken.
\yx{As explained in (\ref{eq:covariance-eigenvector}),}
measuring with an eigenvector reduces its eigenvalue
from \(\lambda\) to \(\lambda \sigma^{2} / (\lambda + \sigma^{2})\)
leaving the other eigenvalues unchanged.
Thus, as far as the spectrum of the covariance matrix is concerned,
each measurements applies this reduction to
one of the then-largest eigenvalue. 
Note that Algorithm~\ref{alg:Gaussian-all} might reduce several times
an eigenvalue,
but as mentioned before,
several reductions has the same effect as
one reduction with the combined power.
Thus, to reduce \(\lambda_{i}\) to a value at most
\(\delta \coloneqq \varepsilon^{2}/\chi_{n}^{2}(p)\),
the minimum required number of measurements is
\(\left(1/\delta - 1/\lambda_{i} \right) \sigma^2\)
provided \(\lambda_{i} > \delta\) and \(\sigma > 0\).
Rounding up to integer values and summing up for all directions,
we obtain \eqref{eq:Gaussian-number-noisy}
as a lower bound on total power. 
Furthermore, if \(\sigma^{2} \leq \delta\)
then a single measurement
suffices to ensure \(\lambda_{i} \leq \delta\),
and a measurement is only needed if \(\lambda_{i} > \delta\).
This provides \eqref{eq:Gaussian-number-exact}
even in the noiseless case \(\sigma = 0\). 
All in all,
after the algorithm has finished,
the posterior distribution of the signal \(x\)
is Gaussian \(\Dnormal{\mu'}{\Sigma'}\)
with mean \(\mu'\) and covariance matrix \(\Sigma'\).
The largest eigenvalue \(\norm{\Sigma'}\) of \(\Sigma'\)
is at most \(\delta\),
i.e.,
$
  \varepsilon
  \geq
  \sqrt{\norm{\Sigma'} \cdot \chi_{n}^{2}(p)}.
$
As a consequence,
we show that the mean \(\mu'\) returned by the algorithm
is an estimator of the signal
with the required accuracy \(\varepsilon\).
An easy calculation shows that
the distance between \(x\) and \(\mu'\) is at most \(\varepsilon\)
with probability at least \(p\):
\begin{equation*}
 \begin{split}
  & \probability(x \sim \Dnormal{\mu'}{\Sigma'}){\tnorm{x - \mu'}
    \leq \varepsilon} \\
  &
  \geq
  \probability(x \sim \Dnormal{\mu'}{\Sigma'}){\tnorm{x - \mu'}
    \leq \sqrt{\norm{\Sigma'} \cdot \chi_{n}^{2}(p)}}
  \\
  &
  \geq
  \probability(x \sim \Dnormal{\mu'}{\Sigma'})
  {\transpose{(x - \mu')} \Sigma'^{-1} (x - \mu') \leq \chi_{n}^{2}(p)}
  = p,
 \end{split}
\end{equation*}
where the last equality is a restatement of the well-known
prediction interval for multivariate normal distributions.
\end{proof}

\begin{proof}[Proof of Theorem \ref{thm:Gaussian-fixed-noise}]
The proof is similar to that of Theorem~\ref{thm:Gaussian},
so we point out only the differences.
The power used by Algorithm~\ref{alg:Gaussian-all}
reduced every eigenvalue \(\lambda_{i}\)
to exactly
\(\delta \coloneqq \frac{\varepsilon^{2}}{\chi_{n}^{2}(p)}\),
provided \(\lambda_{i} > \delta\),
otherwise \(\lambda_{i}\) is left intact.
Hence summing up the powers for the eigenvalues,
the total is power is given by
\eqref{eq:Gaussian-number-fix-noise},
and the largest eigenvalue of the posterior covariance matrix
is at most \(\delta\).
The mean is an estimator of the signal with the required accuracy
for the same reasons as in the proof of Theorem~\ref{thm:Gaussian}.
\end{proof}

\begin{proof}[Proof of Theorem \ref{thm:Gaussian-non-iso}]
The proof is similar to Theorem~\ref{thm:Gaussian}.
The only difference is that instead of
the canonical scalar product the one with matrix \(\Sigma_{w}\)
is used.
To make this transparent,
we switch to an orthonormal basis of \(\Sigma_{w}\),
we write \(\Sigma_{w} = \transpose{F} F\),
and use \(F\) as a change of basis:
thus the signal in the new basis is \(F^{-1} x\),
the measurement vectors are \(\transpose{F} a_{i}\),
the covariance matrices are \(\Sigma_{w}^{-1} \Sigma\)
for the signal \(F^{-1} x\),
and the identity matrix for the noise.
In this basis, the algorithm is identical to that of for the white noise added prior to measurement case in
Algorithm~\ref{alg:Gaussian-all},
and hence reduces every eigenvalue of \(\Sigma_{w}^{-1} \Sigma\)
to be at most
\(\norm{\Sigma_{w}}^{-1} \frac{\varepsilon^{2}}{\chi_{n}^{2}(p)}\).
Note that the noise model is
\(y_{i} = \transpose{(\transpose{F} a_{i})} (F^{-1} x) +
\transpose{(\transpose{F} \widehat{a_{i}})} (F^{-1} w)\),
and therefore the power \(\beta_{i}\)
provided by the formula
\(\transpose{F} a_{i} =
\sqrt{\beta_{i}} \transpose{F} \widehat{a_{i}}\),
i.e.,
\(a_{i} = \sqrt{\beta_{i}} \widehat{a_{i}}\).
In other words,
the power \(\beta_{i}\) is still
the length of \(a_{i}\) in the original basis. 
Let \(\Sigma'\) denote the posterior covariance matrix of \(x\)
in the original basis.
Hence \(\norm{\Sigma_{w}^{-1}\Sigma'} \leq
\norm{\Sigma_{w}}^{-1} (\varepsilon^{2}/\chi_{n}^{2}(p))\),
and therefore \(\norm{\Sigma'}\)
is at most \((\varepsilon^{2}/\chi_{n}^{2}(p))\).
This ensures that
the posterior mean of \(x\) returned by the algorithm
to be of the required accuracy, as in Theorem~\ref{thm:Gaussian}.
\end{proof}

\begin{proof}[Sketch of proof for Theorem \ref{thm:GMM}]
We first sequentially run the corresponding algorithm for each
component \(c \in C\). This leads to a number of iteration (or power
consumption) of at most \(\sum_{c \in C} m_c\).  
We perform
measurements that maximize the mutual information between the
signal and the measurement outcome for the mixture.  
%We consider the updates of the distribution \(\pi\) over the components from that point on. 
With each measurement $a_i$ and the outcome $y_i$, we update the
posterior distribution of the Gaussian component, which we index by \(\pi^i_c\), \(c = 1, \ldots, C\), as follows
\begin{equation}
  \label{eq:4}
\pi_c^{i+1} \coloneqq \pi_c^i \cdot K_i \cdot e^{- \frac{1}{2} \frac{(y_i - 
\transpose{a}_i \mu_{c,i})^2}{\transpose{a}_i
  \Sigma_{c,i}a_i + \sigma^2}},
\end{equation}
where \(\mu_{c, i}\) is
the posterior mean of component \(c\) obtained after the measurement,  and \(K_i\) is a normalization ensuring that
\(\pi^{i+1}\) sum up to 1. The updates in \eqref{eq:4} scale down the
probabilities of those components \(c\) whose mean \(\mu_{c,i}\) leads
to reconstruction with a higher error, which is measured by \(\frac{(y_i - \transpose{a}_i \mu_{c,i})^2}{\transpose{a}_i
  \Sigma_{c,i}a_i + \sigma^2}\). 
Then we apply the hedge version of the multiplicative weight update formula (see e.g.,
\cite[Theorem 2.3]{arora2012multiplicative}) to our setup %with \(\eta = 1/2\) 
and
obtain that after \(m\) measurements we have
\begin{equation}
\begin{split}
& \frac{1}{m} \left| \sum_{i = 1}^m K_i \cdot \left[ \left(\sum_{\ell = 1}^C \frac{(y_i - \transpose{a}_i \mu_{\ell,i})^2}{\transpose{a}_i
  \Sigma_{\ell,i} a_i + \sigma^2} \cdot \pi_\ell^i \right
) - \frac{(y_i - \transpose{a}_i \mu_{c,i})^2}{\transpose{a}_i
  \Sigma_{c,i}a_i + \sigma^2}\right ] \right| \\
  &  \leq \tilde{\eta} +
\frac{2 \ln \card{C}}{m},
\end{split}
 \nonumber% \label{eq:2}
\end{equation}
for all \(c \in C\). Here $\tilde{\eta} > 0$ is a parameter for multiplicative update algorithm. In particular we can identify the correct component \(c^*\)
whenever \(m = \mathcal{O}(\frac{1}{\tilde{\eta}} \ln \card{C})\).
%for \(\eta \in \mathcal{O}(1 / \log \card{C})\) (e.g., for \(\eta\) constant). 
\end{proof}

\end{document}